\documentclass[tbtags,authorcolumns,numberwithinsect]{no-lipics}

\usepackage{amssymb}
\usepackage{mathtools}
\usepackage{bm}
\usepackage{stmaryrd}
\usepackage[draft]{fixme}
\usepackage{booktabs}
\usepackage{tikz-3dplot}
\usepackage{caption}
\usepackage{subcaption}
\usepackage{cases}
\usepackage{colortbl}

\usetikzlibrary{shapes, positioning}

\SetKwComment{Comment}{$\triangleright$\ \rm}{}

\usetikzlibrary{arrows,arrows.meta,shapes.misc, decorations.pathreplacing,decorations.pathmorphing,calligraphy, patterns.meta}
\usetikzlibrary{matrix, positioning, fit, shapes, calc, backgrounds, shapes.geometric, backgrounds}
\tikzset{dotmark/.style={circle,fill,inner sep=1.5pt}}
\tikzset{emptymark/.style={circle,draw,fill=white,inner sep=1.5pt}}
\tikzset{crossmark/.style={thick,inner sep=1.5pt}}

\newcommand{\eps}{\varepsilon}
\newcommand{\A}{\mathcal{A}}

\SetKwInput{KwInput}{Input}
\SetKwInput{KwOutput}{Output}

\def\ShowAuthNotes{0}
\ifnum\ShowAuthNotes=1
\newcommand{\authnote}[3]{\textcolor{#3}{[{\bf #1:} { {#2}}]}}
\else
\newcommand{\authnote}[3]{}
\fi

\newcommand{\Oh}{\mathcal{O}}
\newcommand{\Ohtilde}{\tilde{\Oh}}

\DeclareMathOperator*{\poly}{poly}

\DeclareMathOperator*{\argmin}{arg\,min}
\DeclareMathOperator*{\argmax}{arg\,max}

\def\fragmentco#1#2{\bm{[}\,#1\,\bm{.\,.}\,#2\,\bm{)}}
\def\fragmentoc#1#2{\bm{(}\,#1\,\bm{.\,.}\,#2\,\bm{]}}
\def\fragmentoo#1#2{\bm{(}\,#1\,\bm{.\,.}\,#2\,\bm{)}}
\def\fragment#1#2{\bm{[}\,#1\,\bm{.\,.}\,#2\,\bm{]}}

\newcommand{\ceil}[1]{\lceil #1 \rceil}

\newcommand{\AT}{\textsf{AT}\xspace}

\newcommand{\mC}{\mathcal{C}}
\newcommand{\mH}{\mathcal{H}}

\newcommand{\mP}{\mathcal{P}}

\newcommand{\bF}{\mathbf{F}}

\newcommand{\bG}{\mathbf{G}}
\newcommand{\bH}{\mathbf{H}}

\newcommand{\bT}{\mathbf{T}}

\renewcommand{\epsilon}{\varepsilon}

\usetikzlibrary{calc,positioning,backgrounds,fit}

\DeclarePairedDelimiter\abs{\lvert}{\rvert}
\DeclarePairedDelimiter\norm{\lVert}{\rVert}

\makeatletter
\let\oldabs\abs
\def\abs{\@ifstar{\oldabs}{\oldabs*}}
\let\oldnorm\norm
\def\norm{\@ifstar{\oldnorm}{\oldnorm*}}
\makeatother

\def\problembox#1{%
    \vspace{2mm}%
    \noindent\fbox{%
    \begin{minipage}{.985\linewidth}%
        #1
    \end{minipage}%
    }%
    \vspace{2mm}%
}

\newcommand{\defdynamicproblem}[4]{%
    \problembox{%
        \textbf{#1}\\
        {\bf{Input:}} #2 ~\\
        {\bf{Update:}} #3 ~\\
        {\bf{Query:}} #4
    }%
}

\makeatletter
\renewenvironment{cases}{%
  \matrix@check\cases\env@cases
}{%
  \endarray\right.%
}
\def\env@cases{%
  \let\@ifnextchar\new@ifnextchar
  \left\lbrace
  \def\arraystretch{1.1}%
  \array{@{\;}c@{\quad}l@{}}%
}
\makeatother

\def\mid{\ensuremath :}

\def\emptyset{\varnothing}

\newcommand\thefont{\expandafter\string\the\font}

\makeatletter
\AtBeginDocument{\let\core@label\label}
\newcommand{\symlabel}[2]{%
  #1% Print the symbol
  \ifmeasuring@\else
    \bgroup
    \def\@currentlabel{\ensuremath{#1}}% Set the label to your symbol
    \core@label{#2}% Apply the pure label, completely bypassing amsmath's hijack
    \egroup
  \fi
}
\makeatother

\title{Asteroidal Triples and Friends\\ in (Likely Necessary) Matrix Multiplication Time}

\author{Jakob Nogler}{Massachusetts Institute of Technology\\Cambridge, United States}{jnogler@mit.edu}{https://orcid.org/0009-0002-7028-2595}{Supported by the Akamai Presidential Fellowship and by a fellowship from the Siebel Scholars Foundation}
\author{Agn\`es Totschnig}{Massachusetts Institute of Technology\\Cambridge, United States}{agnest@mit.edu}{https://orcid.org/0009-0004-7716-8183}{Supported by a MathWorks Fellowship}
\author{Virginia Vassilevska Williams}{Massachusetts Institute of Technology\\Cambridge, United States}{virgi@mit.edu}{https://orcid.org/0000-0003-4844-2863}{Supported by NSF Grant CCF-2330048, BSF Grant 2024233 and a Simons Investigator Award}

\authorrunning{J. Nogler, A. Totschnig, and V. Vassilevska Williams}
\titlerunning{Asteroidal Triples and Friends in (Likely Necessary) Matrix Multiplication Time}

\nolinenumbers

\begin{document}
\pagenumbering{roman}
\maketitle
\begin{abstract}
In a seminal paper, Kratsch and Spinrad [SODA 2003 and SIAM Journal on Computing 2006] studied several important problems in structural graph theory that anticipated two trends in fine-grained complexity, both from the upper- and lower-bound side: \begin{itemize}
    \item From the upper-bound side, for each of these problems the authors give an \emph{intermediate} running time strictly \emph{between $n^{\omega}$ and $n^3$}, where $\omega<2.371177$ is the matrix multiplication exponent. When optimized, this running time can be made $\Ohtilde(n^{(3+\omega)/2})$. By now, many more problems can be solved in this running time:
    directed unweighted or node-weighted APSP, a variety of matrix products such as max-min and dominance, RNA Folding and many more. For some of these, the community has also developed tools suggesting that this intermediate running time is necessary.

    \item From the lower-bound side, the authors give lower bounds of $n^{\omega-o(1)}$, conditioned on three natural hardness hypotheses on the complexity of quantified variants of triangle detection that account for different types of hardness of such problems.
\end{itemize}

Despite this, it has remained open for over 20 years which of these two pieces of evidence reflects the truth for the problems that Kratsch and Spinrad study: whether the intermediate running time is indeed necessary and stronger lower bounds hold, or whether the hardness evidence of Kratsch and Spinrad is essentially tight. In this paper, we resolve this question and show that the latter is the case: all problems admit an $\Ohtilde(n^{\omega})$ time algorithm.

As an additional contribution, we make partial algorithmic progress on further problems in structural graph theory of intermediate complexity, improving over previous work [Chiu and Lu, STACS 2022 and Information and Computation 2024][Spinrad, Discrete Applied Math 2004].

\end{abstract}

% \clearpage
% \thispagestyle{plain}
% \tableofcontents
% \clearpage
\pagenumbering{arabic}

% {\bf TODO: intro, abstract}

\section{Introduction}
Many problems in graph and matrix algorithms have natural cubic running times in terms of $n$, the number of nodes in the graph or the number of rows/columns of the matrices. For some perhaps infamous examples, such as All-Pairs Shortest Paths (APSP) in weighted graphs, this cubic running time has persisted for many decades, except for $n^{o(1)}$ improvements, leading to conjectures that cubic time is the best possible (e.g.,\cite{vsurvey,focsy,focsyj}). For other problems, substantial polynomial improvements have been obtained. The prime example for this is matrix multiplication.
The exponent $\omega$ is defined as the smallest real value for which $n\times n$ matrices can be multiplied in $\Oh(n^{\omega+\eps})$ time for all $\eps>0$. It has been known
 since 1969 that $\omega<3$ \cite{strassen}, and while the popular belief among computer scientists is that $\omega=2$, the best known upper bound is $\omega<2.371177$ \cite{DupontEtAl26}.

Via reductions to matrix multiplication, a huge variety of problems have truly subcubic\footnote{$O(n^{3-\eps})$ time for constant $\eps>0$.} time algorithms: many linear algebraic operations (matrix inverse, linear system solving, LUP decompositions and more), a variety of ``funny'' matrix products over other algebraic structures such as the max-min product of matrices \cite{VassilevskaWY09}, many graph problems in unweighted graphs such as unweighted APSP \cite{Zwick02,Seidel95}, structure similarity problems such as RNA-Folding \cite{BGSV19}, and many more. 

There are simple reductions from
Boolean matrix multiplication to many of these problems, showing
that $n^{\omega-o(1)}$ time seems {\em necessary} to solve them. This is true for instance for unweighted APSP and most ``funny'' matrix products. For even more problems, $n^{\omega-o(1)}$ time lower bounds can be proven under stronger but still plausible assumptions
such as that detecting a triangle in an $n$-node graph requires $n^{\omega-o(1)}$ time. For problems with $n^{\omega-o(1)}$ time conditional lower bounds and subcubic time upper bounds using matrix multiplication, a natural question is:

\begin{center}{\em Can these problems be solved in $\Ohtilde(n^\omega)$ time or is there a higher conditional lower bound?}\end{center}

The term ``intermediate problem'' has emerged to capture problems that are in truly subcubic time but whose running times are {\em strictly larger} than $n^\omega$.
Several recent works study such problems within the context of fine-grained complexity \cite{BKPR19, LPVW20, VWX20b, CVW21, F26}, providing conditional lower bounds $n^{c-o(1)}$ for $c>\omega$ for some of these problems. 

Most studied intermediate problems have running times that are $\Ohtilde(n^{2.5})$ if $\omega=2$; a popular running time is $\Ohtilde(n^{(3+\omega)/2})$.
This is the best known running time for the Dominance Product \cite{M91}, max-min matrix product and All-Pairs Bottleneck Paths \cite{VWY07, DP09}, All-Pairs Non-Decreasing Paths \cite{VW10, DGZ18, CJW19}, Monotone Min-Plus Product \cite{BGSV19, VWX20, GPVWX21, CDXZ22}, Language Edit Distance (also known as Scored Parsing \cite{aho1972minimum, s17, BGSV19}), RNA Folding \cite{NJ80, akutsu1999RNA, zakov2011RNA, venkatachalam2014RNA, s17}, Dyck Edit Distance \cite{s14,BGSV19,das2021improved,fried2024improved}, Tree Edit Distance \cite{M21, D23, NPSVWXY25}, Node-weighted APSP \cite{C10, Y09, AFJVWX25}, and Min-Equality Product \cite{VWX20b}.

One of the very first papers to prove $n^{\omega-o(1)}$ lower bounds under plausible conjectures for a variety of graph problems, while also providing subcubic time algorithms using matrix multiplication, is a seminal SODA 2003 paper by Kratsch and Spinrad \cite{KS06}.
Almost all of the problems studied by \cite{KS06}  were solved in
$\Oh(n^{2.79})$ time. With a small change\footnote{The neighborhood connected components algorithm of \cite{KS06} uses an inefficient way to list witnesses of Boolean matrix multiplication. Just by
replacing Lemma 2.1 in \cite{KS06} with the approach of \cite{AlonGMN92,GasieniecKL09} to list up to $f$ witnesses per pair in $\Ohtilde(f\mathsf{MM}(n,n/f,n))$ time, one obtains a running time $\Ohtilde(f\mathsf{MM}(n,n/f,n)+\mathsf{MM}(n,n^2/f,n))$ for neighborhood-deleted components which gives $\Ohtilde(n^{(3+\omega)/2})$ via blocking or $\Oh(n^{2.616})$ with rectangular matrix multiplication. For the rest of the problems, in \cite{KS06} there is an additional $\Ohtilde(n^{(3+\omega)/2})$ cost after the neighborhood-deleted components are computed.} the algorithms can be made to run in
 $\Ohtilde(n^{(3+\omega)/2})$ time. This motivates the following questions.

\begin{center}
\emph{Are these problems intermediate, or can they be solved in $\Ohtilde(n^\omega)$ time? \\ How do these problems fit into the recently established fine-grained hierarchy?}
\end{center}

Even the name of Kratsch and Spinrad's paper ``Between $\Oh(mn)$ and $\Oh(n^\alpha)$'' (here $\alpha=\omega$) suggests that the problems in the paper are intermediate. Despite much recent progress on the fine-grained complexity of intermediate problems, the problems studied in Kratsch and Spinrad's paper have not seen improvements since the paper was written more than 20 years ago.

The main contribution of our work is to show that these problems are not intermediate. They, in fact, admit conditionally tight $\Ohtilde(n^\omega)$ time algorithms.

\subsection{The problems studied by Kratsch and Spinrad}
Kratsch and Spinrad \cite{KS06} investigate the following algorithmic problems in structural graph theory that have very strong motivation, e.g., in the study of perfect graphs, \AT-free graphs and triangulated graphs (see Appendix~\ref{app:motivation} for details). Given an (undirected) graph $\bG = (V,E)$, the problems are:

\begin{itemize}
    \item Computing the \emph{neighborhood-deleted components} of $\bG$: Specifically, for each $v \in V$, compute the connected components of $\bG[V \setminus N[v]]$ (the graph obtained by removing $v$ and its neighborhood).
    \item Listing or finding the \emph{extremities} of $\bG$: These are vertices $v \in V$ such that $\bG[V \setminus N[v]]$ is connected. This can be derived directly from the previous problem.
    \item Determining whether $\bG$ has a \emph{star-cutset} or a \emph{clique-cutset}: A \emph{cutset} is a set of nodes $S \subseteq V$ such that $\bG[V \setminus S]$ is disconnected; it is a \emph{star-cutset} if there is a central node in $S$ connected to all other nodes in $S$, and a \emph{clique-cutset} if $\bG[S]$ is a clique.
    \item Recognizing \emph{\AT-free graphs}: This involves verifying whether $\bG$ contains an \emph{asteroidal triple}---an independent set of three vertices $x, y, z \in V$ such that, between any two vertices in the triple, there exists a path avoiding the neighborhood of the third vertex.
    \item Listing or finding \emph{dominating pairs}: These are distinct nodes $u, v \in V$ such that for every $u$-$v$ path $\pi$ and every node $w \in V$, the path $\pi$ contains at least one neighbor of $w$ or contains $w$ itself.
    (In other words, the vertex set of any path between these two vertices is a dominating set of the graph.)
    \item Listing or finding \emph{two-pairs}: These are distinct nodes $u, v \in V$ such that all induced (or equivalently, chordless) paths between $u$ and $v$ have length two.
\end{itemize}

For all of these problems, the authors present algorithms with running times strictly \emph{between $\Oh(n^3)$ and $\Oh(n^{\omega})$}. Specifically, as mentioned earlier,
an $\Oh(n^{2.79})$-time algorithm is provided for all problems, except for recognizing \AT-free graphs, which is solved in $\Oh(n^{4\omega/(\omega+1)}) = \Oh(n^{2.81})$ time.
As mentioned earlier, the $\Oh(n^{2.79})$ time can be replaced by $\Oh(n^{2.69})$ by slightly modifying the algorithms. 
Prior to \cite{KS06}, only $\Oh(nm)$-time algorithms were known for some of these problems, which still take $\Omega(n^3)$ time on dense graphs.

\subsection{Why $n^\omega$ seems needed for all the problems}
Spinrad and Kratsch provide evidence that solving the problems listed above requires at least $n^{\omega-o(1)}$ time. To support this, they identify three problems for which no algorithms faster than $\Oh(n^{\omega})$ are currently known (see \cref{fig:hardnessks} for the hardness proven by \cite{KS06} from these problems). For each of the three problems one hypothesizes that in the word-RAM model with $O(\log n)$ bit words, the problem requires $n^{\omega-o(1)}$ time.

\begin{description}
    \item[Determining whether a graph contains a triangle:] The corresponding Triangle Detection Hypothesis\label{hypothesis:a} that the problem requires $n^{\omega-o(1)}$ time has been repeatedly used within fine-grained complexity (e.g., \cite{vsurvey,focsyj,AbboudW14}). It is also supported by the subcubic equivalence \cite{focsy,focsyj} between triangle detection and Boolean matrix multiplication.

    \item[Determining whether a graph contains a dominated vertex:] A \emph{dominated vertex} is a node $x$ such that $N(x) \subseteq N[y]$ for some $y\neq x$. This problem turns out to be exactly equivalent to the Orthogonal Vectors (OV) problem in $n$ dimensions (see Section \ref{sec:dom} in the appendix for a proof). Whether this high-dimensional OV problem requires fast matrix multiplication was first asked\footnote{They actually asked whether distinguishing between diameter 2 and 3 in an undirected graph in subcubic time without matrix multiplication, but this problem is easily seen to be equivalent to OV in $n$ dimensions, e.g., via a reduction as in \cite{vdiam,vsurvey}.} almost 3 decades ago by Aingworth, Chekuri, Indyk and Motwani \cite{aingw}. The hypothesis
    \label{hypothesis:b} that OV in $n$ dimensions requires $n^{\omega-o(1)}$ time has been used in fine-grained complexity in recent years (see e.g., \cite{DalirrooyfardK21}).

    \item[Determining whether a graph contains a simplicial vertex:] A node $v$ is \emph{simplicial }if $\bG[N(v)]$ is a clique.
    The hypothesis that Simplicial Vertex requires $n^{\omega-o(1)}$ time
    \label{hypothesis:c} has also been used in recent years for hardness results in fine-grained complexity (see e.g., \cite{DJVW22}).
\end{description}

OV and Triangle detection are central problems in fine-grained complexity. Simplicial Vertex has been used only a handful of times as a source of hardness. Nevertheless, it is a very natural problem. In fact, the three hard problems that Kratsch and Spinrad propose are just different quantifications of the following form: 
\begin{align}
    Q_1 x\in V, Q_2 y\in N(x), Q_3 z\in N(x): (y,z)\in E. \label{eq:quant}
\end{align}
Triangle detection has quantifiers $Q_1Q_2Q_3=\exists\exists\exists$ in \cref{eq:quant}, Simplicial Vertex has $Q_1Q_2Q_3=\exists\forall\forall$, and OV in $n$ dimensions (and dominated vertex) is equivalent to the case $Q_1Q_2Q_3=\exists\exists\forall$ (this requires a bit of a proof since we need $y\in N(x)$ -- see Section \ref{sec:dom} in the appendix). Kratsch and Spinrad \cite{KS06} provide evidence that tight reductions between these different quantifications might be difficult due to an inherent difference in the non-deterministic complexity of the problems. Nevertheless, we do know that {\em fine-grained subcubic} (but non-tight w.r.t. $n^\omega$) reductions exist from Simplicial Vertex and Dominated Vertex to Triangle Detection because both these problems can be solved using Boolean Matrix Multiplication (BMM) and BMM is fine-grained equivalent to Triangle Detection \cite{focsy,focsyj}.

\begin{figure}[t]
        \centering
        \begin{tikzpicture}[
    font=\small\rmfamily, % Sets font size to small and uses the Roman (serif) family
    box/.style={
        rectangle, 
        rounded corners=4pt, 
        draw=black!70, 
        thick, 
        text width=15cm, 
        align=left, 
        inner sep=4pt
    }
]

% 1. Triangle-Free Hardness
\node[box, fill=blue!10] (triangle) at (0,0) {
    \textbf{\textcolor{blue!80!black}{Hard as Triangle-Free Recognition:}} Recognizing \AT-free graphs, finding a dominating pair, verifying a specific vertex pair is a dominating pair, identifying all extremities.
};

% 2. Dominated Vertex Hardness
\node[box, fill=green!10, below=0.2cm of triangle] (dominated) {
    \textbf{\textcolor{green!60!black}{Hard as Testing for a Dominated Vertex/High Dimensional OV:}} Determining whether a graph has a two-pair, determining whether a graph has a star cutset.
};

% 3. Simplicial Vertex Hardness
\node[box, fill=orange!10, below=0.2cm of dominated] (simplicial) {
    \textbf{\textcolor{orange!80!black}{Hard as Testing for a Simplicial Vertex:}} Determining whether a graph has a clique cutset.
};

\end{tikzpicture}
        \caption{Hardness of the problems solved in \cite{KS06}.}
        \label{fig:hardnessks}
\end{figure}

\subsection*{Our Results}

We show that all the problems presented by Spinrad and Kratsch can be solved in $\Ohtilde(n^{\omega})$ time. \footnote{To be precise, we show this for all problems in the journal version of \cite{KS06}. The conference version contains one additional problem (finding clique cutsets) for which an $\Oh(n^{\omega})$-time algorithm was found concurrently \cite{HTV05}. Later, this result was extended to cliqueset decompositions in matrix multiplication time \cite{CD18}.} Therefore, none of the presented problems fit into the aforementioned category of intermediate problems. If one is willing to believe that $n$-dimensional OV, Simplicial Vertex, and Triangle Detection all require $n^{\omega-o(1)}$ time, then our algorithms are also optimal.

We start with neighborhood-deleted components, noting that this is equivalent to a more general problem that we call the \emph{induced subgraph connected component} problem: Given an $n$-vertex graph $\bG = (V, E)$ and a family of subsets $\mathcal{S}$ of $V$, compute the connected components of $\bG[S]$ for each $S \in \mathcal{S}$. This equivalence can easily be seen to hold for $|\mathcal{S}| = \Theta(n)$: in one direction, we choose the sets to be the graphs obtained by deleting the (closed) neighborhoods; in the other direction, we introduce a clique containing one vertex for each $S \in \mathcal{S}$, where each such vertex's neighbors are exactly the vertices not in its corresponding set $S$.

We show that the induced subgraph connected component problem is \emph{equivalent} to multiplying an $|\mathcal{S}| \times n$ Boolean matrix by an $n \times n$ Boolean matrix. \footnote{We note that one (perhaps minor) reason we present this result for the induced subgraph connected component problem, as opposed to neighborhood deleted components, is that this equivalence with rectangular Boolean matrix product holds even when $|\mathcal{S}| = \omega(n)$.}
This equivalence demonstrates that we can compute the connected components of multiple graphs more efficiently, provided they are all induced subgraphs of a single underlying graph.

\begin{restatable*}{mtheorem}{ndcctobmm} \label{lem:ndcc_to_bmm}
    There is a deterministic algorithm that solves the induced subgraph connected component problem in time $\Ohtilde(\mathsf{MM}(|\mathcal{S}|, n, n))$ for an $n$-node input graph and a non-empty input set $\mathcal{S}$.
\end{restatable*}

Here $\mathsf{MM}(k, m, n)$ denotes the running time of multiplying a $k\times m$ Boolean matrix by an $m\times n$ Boolean matrix.

\begin{restatable*}{lemma}{ndcclb} \label{lem:ndcc_lb}
    Given two Boolean matrices $A$ and $B$ of size $s \times n$ and $n \times n$, one can construct in time $\Oh(n^2 + ns)$ an instance of the induced subgraph connected component problem on a $2n$-node graph with a set of size $s = |\mathcal{S}|$, such that solving this instance yields the product $A B$.
\end{restatable*}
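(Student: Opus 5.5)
Let $A$ have rows indexed by $i\in[s]$ and $B$ be indexed by $[n]\times[n]$; the aim is to read off $(AB)[i][j]=\bigvee_k A[i][k]\wedge B[k][j]$. I will build a bipartite graph on $2n$ vertices: a left copy $L=\{\ell_1,\dots,\ell_n\}$ (the ``middle'' index $k$) and a right copy $R=\{r_1,\dots,r_n\}$ (the ``output'' index $j$). Each $\ell_k$ is joined to $r_j$ whenever $B[k][j]=1$. For every row $i$ of $A$ I put into $\mathcal{S}$ the set $S_i=\{\ell_k : A[i][k]=1\}\cup R$. Writing down the graph costs $\Oh(n^2)$ and the $s$ sets cost $\Oh(ns)$, which gives the claimed construction time.

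In $\bG[S_i]$, a vertex $r_j$ is non-isolated exactly when some $\ell_k\in S_i$ is adjacent to it. That happens iff there is a $k$ with $A[i][k]=1$ and $B[k][j]=1$, that is, iff $(AB)[i][j]=1$. So the reduction outputs $(AB)[i][j]=1$ iff the component of $r_j$ in $\bG[S_i]$ contains a vertex other than $r_j$. Since $R$ is independent, any such vertex forces $r_j$ to have a neighbor in $S_i$, so the equivalence holds in both directions. From the component labels one reads off each bit in $\Oh(1)$ amortized time, for example by counting component sizes per set in $\Oh(n)$ time per set. The total post-processing is therefore $\Oh(ns)$.

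The main point to get right is the exact output format the problem provides. A natural interpretation is a labeling of the vertices of $S$ by component identifier. Under that format the isolation test above is immediate, and no adjacency queries are needed after solving the instance. I don't expect any genuine obstacle beyond fixing this convention.
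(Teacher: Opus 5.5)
Your proposal is correct and uses essentially the paper's construction: the same $2n$-vertex graph with an edge between the $k$-th left vertex and the $j$-th right vertex whenever $B[k,j]=1$, and the same sets $S_i=\{x_k : A[i,k]=1\}\cup\{y_1,\dots,y_n\}$. The only difference is that the paper also makes $\{x_1,\dots,x_n\}$ a clique and reads off $AB[i,j]=1$ as membership of $y_j$ in the unique ``giant'' component of $\bG[S_i]$, whereas you drop the clique and test whether $r_j$ is non-isolated; the two tests are equivalent because the right side is an independent set, so your variant is a harmless simplification (and it does not need the paper's assumption that every row of $A$ is nonzero).
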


As a corollary to \cref{lem:ndcc_to_bmm}, we obtain algorithms for the following problems (for details on how the preceding algorithm implies these results, we refer the reader to \cite[Corollaries 3.2, 3.3, and 3.4]{KS06}).

\begin{corollary}
    The following problems have a deterministic $\Ohtilde(n^{\omega})$-time algorithm on an $n$-node input graph: computing neighborhood-deleted components in a graph, listing all extremities in a graph and determining whether the graph has a star cut-set, determining whether two fixed vertices $x,y$ are a two-pair. \lipicsEnd
\end{corollary}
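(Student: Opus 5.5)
The plan is to read each problem off the output of \cref{lem:ndcc_to_bmm}, using the characterizations of \cite{KS06} (Corollaries 3.2--3.4), and to check that any post-processing also fits in $\Ohtilde(n^\omega)$. For \emph{neighborhood-deleted components}, I apply \cref{lem:ndcc_to_bmm} to $\mathcal{S}=\{V\setminus N[v] : v\in V\}$. Writing down this family as an $n\times n$ Boolean matrix takes $\Oh(n^2)$ time, and since $|\mathcal{S}|=n$ the theorem runs in $\Ohtilde(\mathsf{MM}(n,n,n))=\Ohtilde(n^\omega)$. The output is, for every $v$, a labelling $L_v\colon V\setminus N[v]\to\mathbb{N}$ of the components of $\bG[V\setminus N[v]]$; its total size is $\Oh(n^2)$. \emph{Extremities} then take $\Oh(n^2)$ more time: $v$ is an extremity iff $L_v$ uses at most one label.

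For the \emph{star cutset}, I use a case analysis per center $v$. A set $S$ with $v\in S\subseteq N[v]$ is a cutset iff $\bG-S$ is disconnected. Let $C_1,\dots,C_k$ be the components of $\bG-N[v]$.
\begin{itemize}
\item If $k\ge 2$, then $S=N[v]$ already works.
\item If $k=0$, no star cutset is centered at $v$, since $\bG-S$ then lies inside $N(v)$. I expect that a short argument can avoid this case, for instance by noting that a universal vertex's star never separates; I will verify this detail against \cite{KS06}.
\item If $k=1$, set $C=C_1=V\setminus N[v]$ and take $S=\{v\}\cup N(C)$, the smallest candidate. This $S$ separates iff some $u\in N(v)$ has no neighbor in $C$, or the vertices of $N(v)\setminus N(C)$ are not connected to $C$. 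The latter is automatic, since those vertices have no neighbor in $C$.
\end{itemize}
So a star cutset centered at $v$ exists iff $k\ge2$, or $k=1$ and some $u\in N(v)$ satisfies $N(u)\cap(V\setminus N[v])=\emptyset$. I may also need to handle removing only part of $N(v)$, but any star cutset centered at $v$ can be enlarged to $\{v\}\cup N(C)$-type sets, so checking the minimal candidate should suffice. For all pairs $(v,u)$ at once, the count $|N(u)\setminus N[v]|$ is the $(u,v)$ entry of $A\cdot\overline{A'}$, where $A$ is the adjacency matrix and $\overline{A'}$ is the complement of $A+I$. This is one $n\times n$ product, costing $\Oh(n^\omega)$.

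For the \emph{two-pair test} on fixed $x,y$, I use the characterization from \cite{KS06}: $x,y$ form a two-pair iff they are nonadjacent, have a common neighbor, and lie in different components of $\bG-(N(x)\cap N(y))$. This needs only a single connectivity computation, which takes $\Oh(n^2)$ time.

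The step I expect to be the main obstacle is the star-cutset post-processing. The original \cite{KS06} argument involves interactions between every vertex $u$ and every component of every $\bG-N[v]$. That amounts to $\Theta(n^2)$ (vertex, component) pairs, which would push the cost back up to $n^{(3+\omega)/2}$. My plan is to sidestep this by proving the reduction above: only the connected case $k=1$ needs further inspection, and there a single Boolean product decides it.
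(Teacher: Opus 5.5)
Your treatment of neighborhood-deleted components, extremities, and the fixed two-pair test is sound. Reconstructing the post-processing yourself is a reasonable alternative to the paper, which simply applies \cref{lem:ndcc_to_bmm} to $\mathcal{S}=\{V\setminus N[v]\}_{v\in V}$ and defers the rest to \cite[Corollaries 3.2--3.4]{KS06}. For the two-pair test, the criterion ``$x,y$ nonadjacent and separated by $N(x)\cap N(y)$'' is correct and needs only one BFS. Your extra clause ``have a common neighbor'' is a convention mismatch: under the paper's definition, a nonadjacent pair lying in different components of $\bG$ is vacuously a two-pair. Your star-cutset criteria for $k\ge 2$ and $k=1$ are correct, and the product $A\cdot\overline{A'}$ checks the $k=1$ condition for all centers at once.

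The genuine gap is the case $k=0$, i.e., $v$ universal. Your claim that no star cutset is centered at such a $v$ is false. Knowing $\bG-S\subseteq N(v)$ says nothing about whether $\bG-S$ is connected, and a star cutset centered at $v$ need not be all of $N[v]$.

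Take the path on $a,v,b$ with edges $\{a,v\},\{v,b\}$. Here $S=\{v\}$ is a star cutset (every cut vertex is one), yet your test rejects every center. The vertex $v$ has $k=0$. For $a$ we get $k=1$ and $C=\{b\}$, and the only neighbor $v$ of $a$ is adjacent to $b$; the same holds for $b$.

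This is not just a degenerate small case. Join a hub $h$ to a $5$-cycle. For a nonadjacent rim pair $a,b$, the set $S=N[h]\setminus\{a,b\}$ is a star cutset centered at $h$. But for every rim vertex $r$, $\bG-N[r]$ is a single edge, and each of the three vertices of $N(r)$ has a neighbor on it, so your algorithm again answers ``no''.

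So this case cannot be ``avoided''; it has to be handled. The fix is easy: a universal $v$ centers a star cutset iff $\bG$ is not complete. For nonadjacent $a,b$, take $S=V\setminus\{a,b\}$, which contains $v$ and lies in $N[v]$. This is an $\Oh(n^2)$ check. With this third case added, your per-center criterion is correct and the total time stays $\Ohtilde(n^\omega)$. You also do not need the $\Theta(n^2)$ vertex--component analysis you were worried about.
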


We also obtain an $\Ohtilde(n^{\omega})$-time algorithm for recognizing \AT-free graphs (a related result was previously known only for claw-free \AT-free graphs in matrix multiplication time \cite{HK02}), as well as for listing all dominating pairs and two-pairs.

\begin{restatable*}{mtheorem}{algatfree} \label{them:alg_atfree}
    There is a deterministic $\Ohtilde(n^{\omega})$-time algorithm that determines if a graph is \AT-free.
\end{restatable*}

Previously, the fastest time algorithm for recognizing \AT-free graphs ran in $\Oh(n^{2.82})$ time \cite{KS06}.

\begin{restatable*}{mtheorem}{dompairs} \label{thm:dompairs}
    There is a deterministic $\Ohtilde(n^{\omega})$-time algorithm for listing all dominating pairs.
\end{restatable*}

This improves upon Spinrad and Kratsch's $\Oh(n^{2.79})$ running time.

\begin{restatable*}{mtheorem}{twopairs} \label{thm:twopairs}
    There is a deterministic $\Ohtilde(n^{\omega})$-time algorithm for listing all two-pairs.
\end{restatable*}
The previous best running time was Spinrad and Kratsch's $\Oh(n^{2.79})$.

We remark that all our algorithms are deterministic, and the underlying algorithms are remarkably simple. Although the algorithms themselves are straightforward, the proofs establishing their correctness are occasionally more complex (e.g., recognition of \AT-free graphs); in such cases, the more involved proofs reveal new structural insights into the problems.

Finally, we study two related problems not addressed in the work of \cite{KS06}. While we only achieve partial progress on these, we suspect their time complexity may strictly exceed $\Oh(n^{\omega})$. Their fine-grained complexity offers an interesting direction for future study, so we leave both problems open to the community.

\subparagraph*{Induced Non-shortest Paths.}
The first problem we propose is a generalization of verifying whether two vertices form a \emph{two-pair}. More generally, given two vertices $u, v$ at distance $d(u,v) = k$, one can ask whether there exists an induced (equivalently, chordless) path between them that is not a shortest path (note that every shortest path is necessarily induced). Following previous work, we call such a path a \emph{$uv$-trail}. In this terminology, the problem of listing two-pairs is the all-pairs version of this problem restricted to vertex pairs with $k = 2$.

This problem is not new: it was recently introduced by Berger, Seymour, and Spirkl \cite{BSP21}, who gave an $\Oh(n^{18})$-time algorithm for a single fixed pair. This was very recently improved by Chiu and Lu \cite{CL24} to $\Oh(n^{2\omega})$, i.e., the time required to multiply two $n^2 \times n^2$ matrices. No lower bound was provided, nor any evidence that matrix multiplication is necessary. We show that matrix multiplication seems necessary, even for a single pair with $k = 3$, by giving a reduction from triangle detection.

\begin{restatable*}{lemma}{lbuvtrail} \label{lem:lb_uvtrail}
    For any $k \geq 3$, suppose there exists an algorithm that, in time $\mathsf{T} = \Omega(n^2)$, determines whether an $n$-node graph $\bG$ contains a $uv$-trail for two vertices $u, v$ with $d(u, v) = k$. Then there is an algorithm solving triangle detection in time $\Oh(\mathsf{T})$.
\end{restatable*}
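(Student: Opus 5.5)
We reduce from triangle detection in a tripartite graph with parts $A,B,C$, each of size $n$; general triangle detection reduces to this form by taking three copies of the vertex set. The idea is to build a graph in which the shortest $u$--$v$ paths have length exactly $k$, and any longer induced $u$--$v$ path is forced to go through one vertex of each part and to certify a triangle. The key trick is to complement the $A$--$C$ edges. An induced path $a,b,c$ needs $a$ and $c$ to be non-adjacent in the constructed graph, which then encodes that $a\sim c$ in the original graph.

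\textbf{Construction for $k=3$.} The new graph $\bG'$ has vertex set $\{u,v,a_0\}\cup A\cup B\cup C$ and the following edges.
\begin{itemize}
\item $A$ is made a clique, and $C$ is made a clique.
\item $B$ stays an independent set.
\item $A$--$B$ and $B$--$C$ edges are copied from the original graph.
\item Between $A$ and $C$ we put exactly the \emph{non-edges} of the original graph.
\item $u$ is adjacent to all of $A\cup\{a_0\}$, and $v$ is adjacent to all of $C$.
\item $a_0$ is a dummy vertex adjacent to $u$ and to all of $C$, and to nothing else.
\end{itemize}

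\textbf{Distance is exactly 3.} There is no edge $uv$. The neighborhoods $N(u)=A\cup\{a_0\}$ and $N(v)=C$ are disjoint, so $d(u,v)\ge 3$. The path $u,a_0,c,v$ shows $d(u,v)=3$.

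\textbf{A triangle gives a trail.} Let $a,b,c$ be a triangle in the original graph. Then $u,a,b,c,v$ is an induced path of length $4$. Indeed $a$ and $c$ are non-adjacent in $\bG'$ because $a\sim c$ originally. Also $u,v$ are non-adjacent to $b$, $u$ is non-adjacent to $c$, and $v$ is non-adjacent to $a$. So this path is a $uv$-trail.

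\textbf{A trail gives a triangle (the main obstacle).} Let $P$ be an induced $u$--$v$ path of length greater than $3$.
\begin{itemize}
\item The second vertex of $P$ lies in $A\cup\{a_0\}$, and the second-to-last lies in $C$.
\item Since $u$ is adjacent to all of $A\cup\{a_0\}$, inducedness forbids any other vertex of $A\cup\{a_0\}$ on $P$. Likewise $v$ forbids any second vertex of $C$.
\item So $P=u,x,Q,c,v$ where $x\in A\cup\{a_0\}$, $c\in C$, and $Q$ lies entirely in $B$.
\item If $x=a_0$, then $x\sim c$ is a chord unless $Q$ is empty, in which case $P$ has length $3$. So $x=a\in A$.
\item The same argument shows $Q$ is nonempty, hence $a\not\sim c$ in $\bG'$, so $a\sim c$ in the original graph.
\item Since $B$ is independent, $Q$ is a single vertex $b$ with $a\sim b\sim c$.
\end{itemize}
Thus $a,b,c$ is a triangle. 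This case analysis, checking that no chord is overlooked, is the only delicate part of the proof.

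\textbf{General $k\ge 3$.} Replace $u$ by a new endpoint $u'$ joined to $u$ by an induced path $u'=p_0,\dots,p_{k-3}=u$ of fresh vertices. The inner vertices of this path have no other neighbors. Hence every $u'$--$v$ path is this fixed prefix followed by a $u$--$v$ path, and inducedness is preserved in both directions. So $d(u',v)=k$, and a $u'v$-trail exists if and only if a $uv$-trail exists in the $k=3$ gadget.

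The graph has $\Oh(n+k)$ vertices, and we may assume $k\le n$ since otherwise $d(u,v)=k$ is impossible in an $n$-node graph. It is built in $\Oh(n^2)$ time. Since $\mathsf T=\Omega(n^2)$, the total running time is $\Oh(\mathsf T)$.
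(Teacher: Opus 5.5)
Your proof is correct and follows essentially the same route as the paper: you reduce from tripartite triangle detection, complement the edges between the two parts attached to $u$ and $v$ so that the only possible trail is $u,a,b,c,v$, and handle $k>3$ by prepending a path of length $k-3$ to $u$. The differences are cosmetic: you complement $A$--$C$ instead of $A$--$B$, and you fix $d(u,v)=3$ with a single dummy vertex $a_0$ adjacent to $u$ and all of $C$ instead of the paper's separate path $u,p,p',v$; your cliques on $A$ and $C$ are harmless but unnecessary, since adjacency to $u$ and $v$ already rules out a second vertex from those parts.
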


We find this problem interesting from a fine-grained complexity viewpoint because it appears closely related to another area whose fine-grained landscape still has many open problems: the detection of induced patterns (see \cite{FKLL12, VWWWWY15, BKS18, DVVW21, DVW22}), where also runtimes of $\Oh(n^{2\omega})$ are common, e.g., for detecting a six-clique or other patterns.

Finally, we mention that algorithms for detecting $uv$-trails have been used to improve the runtime for recognizing graphs in which all holes have the same length. This was first achieved by Cook et al. \cite{CHPRSSTV24} in time $\Oh(n^{21})$ and subsequently improved by Chiu and Lu \cite{CL24} to $\Oh(n^{2\omega + 3})$.

\subparagraph*{Recognition of Quasi-Triangulated Graphs and Decremental Simplicial Vertex.}
The second problem is the recognition of another generalization of triangulated graphs, called \emph{quasi-triangulated} graphs. A graph $\bG$ is \emph{quasi-triangulated} if and only if every induced subgraph of $\bG$ contains a simplicial vertex or a \emph{cosimplicial vertex} (i.e., a simplicial vertex in the complement). Just as triangulated graphs can be recognized by iteratively removing a simplicial vertex from the resulting graph, quasi-triangulated graphs can be recognized similarly by iteratively removing a simplicial or cosimplicial vertex. Thus both classes can be recognized using the following subroutine.

\defdynamicproblem{Decremental Simplicial Vertex}{An unweighted graph $\bG = (V, E)$.}{Delete a given vertex $v$ together with all its incident edges.}{Does the current graph $\bG$ contain a simplicial vertex? If so, return one.}

Hence, to recognize whether a graph $\bG$ is quasi-triangulated, it suffices to maintain a decremental simplicial vertex subroutine for both $\bG$ and its complement $\bar{\bG}$, proceeding iteratively as follows: 
In each iteration, check whether $\bG$ or $\bar{\bG}$ contains a simplicial vertex. If such a vertex $v$ exists in either graph, remove $v$ from both $\bG$ and $\bar{\bG}$. If this elimination process continues until the graph is empty, $\bG$ is quasi-triangulated. Otherwise, if the process halts because neither $\bG$ nor $\bar{\bG}$ contains a simplicial vertex, the algorithm terminates and rejects $\bG$.

In reality, triangulated graphs can be recognized faster, in linear time \cite{RTS76}, using another simple algorithm. However, the fastest algorithm for quasi-triangulated graphs runs in time $\Oh(\min(n^{2.77}, nm))$ \cite{S04, H03}, where the recognition algorithm relies on the decremental simplicial vertex subroutine. We find this last problem especially interesting from a fine-grained point of view because edge-decremental triangle detection has recently been shown to admit a $\Ohtilde(n^{\omega})$-time algorithm \cite{BBFKS26}. In this light, it would be interesting to see whether Triangle Detection, Simplicial Vertex, or High-dimensional OV all retain a $\Ohtilde(n^{\omega})$-time algorithm, or whether some interesting difference emerges.

Our partial progress here is an improvement of the algorithm for Decremental Simplicial Vertex to $\Oh(n^{2.5275})$, which on sparse graphs can be balanced to be something slightly better.

\begin{restatable*}{theorem}{decrementalsimplicial} \label{thm:decrementalsimplicial}
    There is an algorithm for Decremental Simplicial Vertex that on an $n$-node graph runs in time
    \[
        \Oh\left( \ \textstyle\min_{\Delta \in \fragment{1}{n}} \left\{ n/\Delta \cdot \mathsf{MM}(n, \Delta, n) + m \cdot \Delta \right\} \ \right).
    \]
    In particular, with the current bounds on rectangular matrix multiplication\footnote{We used the bounds from \cite{ADVWXXZ25} plugged into van den Brand's optimization tool \cite{balancer}.} this gives a running time of $\Oh(n^{2.5275})$ on dense graphs. For sparse graphs, bounding the rectangular product naively by dividing it into $\Delta \times \Delta$ blocks yields a running time of
    $\Oh\left( m^{(3-\omega) / (4-\omega)} \, n^{3 / (4-\omega)} \right)$.
\end{restatable*}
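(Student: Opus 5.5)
The plan is to reduce simplicial-ness to a counting condition that can be kept up to date under vertex deletions, and to batch the deletions into phases. A vertex $v$ of current degree $d(v)$ is simplicial if and only if $t(v)=\binom{d(v)}{2}$, where $t(v)$ is the number of edges inside $N(v)$, i.e.\ the number of triangles through $v$. We maintain $d(v)$ and $t(v)$ for every alive vertex, together with the set $\mathcal{Q}$ of alive vertices satisfying $t(v)=\binom{d(v)}{2}$. A query returns any element of $\mathcal{Q}$. A simplicial vertex stays simplicial under deletions, so a vertex, once it enters $\mathcal{Q}$, leaves only when it is itself deleted. The values $t(v)$ are initialized with one product $A\cdot A$ (diagonal of $A^3$, or equivalently $t(v)=\tfrac12\sum_{y\in N(v)}(A^2)_{vy}$), costing $\Oh(n^\omega)\le \Oh(n/\Delta\cdot \mathsf{MM}(n,\Delta,n))$.

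When a vertex $w$ is deleted, we set $d(v)\gets d(v)-1$ for each alive $v\in N(w)$. Every triangle through $v$ and $w$ disappears, so $t(v)\gets t(v)-c(v,w)$, where $c(v,w)=|N(v)\cap N(w)|$ is taken in the current graph. Afterwards we re-check membership of those $v$ in $\mathcal{Q}$. The whole difficulty is therefore to know $c(v,w)$ for all current edges $(v,w)$ at the moment $w$ is deleted. Write $c(v,w)=C(v,w)-|D\cap N_0(v)\cap N_0(w)|$, where $C=A_0^2$ is computed in the original graph and $D$ is the set of vertices deleted so far.

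To make this cheap, we split the at most $n$ deletions into phases of $\Delta$ consecutive deletions, and keep the invariant that at the start of a phase a matrix $C$ with $C(v,w)=|N(v)\cap N(w)\setminus D_{\text{old}}|$ is available, where $D_{\text{old}}$ is the set deleted in earlier phases. At the end of a phase with deleted set $D_p$, $|D_p|\le\Delta$, we update
\[
C \gets C - A_0[\cdot,D_p]\,A_0[D_p,\cdot].
\]
This is one $n\times\Delta$ by $\Delta\times n$ product, costing $\mathsf{MM}(n,\Delta,n)$. Over the $n/\Delta$ phases this gives the first term of the bound.

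Inside a phase, suppose $w$ is deleted and $D_{\text{cur}}$ ($|D_{\text{cur}}|<\Delta$) is the set already deleted in this phase. For each alive $v\in N(w)$ we compute $c(v,w)=C(v,w)-|\{u\in D_{\text{cur}}: u\in N_0(v)\cap N_0(w)\}|$ by scanning $D_{\text{cur}}$ with $\Oh(1)$ adjacency-matrix lookups. This costs $\Oh(\Delta)$ per neighbor, hence $\Oh(\deg(w)\,\Delta)$ per deletion and $\Oh(m\Delta)$ in total, since each vertex is deleted at most once. Updating $d$, $t$ and $\mathcal{Q}$ is $\Oh(\deg(w))$ per deletion.

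Taking the minimum over $\Delta$ yields the claimed bound. For the sparse-graph corollary, we bound $\mathsf{MM}(n,\Delta,n)\le (n/\Delta)^2\Delta^\omega$ by splitting into $\Delta\times\Delta$ blocks. The total becomes $n^3\Delta^{\omega-3}+m\Delta$, and balancing at $\Delta=(n^3/m)^{1/(4-\omega)}$ gives $m^{(3-\omega)/(4-\omega)}n^{3/(4-\omega)}$. The dense $n^{2.5275}$ figure comes from plugging the rectangular exponents into the same balance.

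The main point to get right is the correctness of the incremental triangle bookkeeping. We must check that $t(v)$ decreases by exactly the number of alive common neighbors when $w$ is deleted, so that no triangle is double counted. Deleting $w$ removes the triangles $\{v,w,u\}$ with $u$ alive, and $c(v,w)$ counts exactly those $u$ because $D_{\text{old}}\cup D_{\text{cur}}$ covers all earlier deletions. The remaining steps are routine bookkeeping.
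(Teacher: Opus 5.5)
Your proposal is correct and follows the paper's proof step for step. Both use phases of $\Delta$ deletions with one $n\times\Delta$ by $\Delta\times n$ product per phase to refresh pairwise counts, plus an $\Oh(\Delta)$ scan of the current phase's deletions for each neighbor of a deleted vertex, giving $\Oh(n/\Delta\cdot\mathsf{MM}(n,\Delta,n)+m\Delta)$. The only difference is cosmetic: you track the number $t(v)$ of edges inside $N(v)$ via common-neighbor products $A_0[\cdot,D_p]A_0[D_p,\cdot]$, whereas the paper tracks the complementary count $\delta(v)=\binom{d(v)}{2}-t(v)$ of missing edges via adjacency-times-non-adjacency products.
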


Notably, the running time we obtain is the same as the best known running times for both directed unweighted APSP \cite{Zwick02} and Min Witness Product \cite{CKL07}.

\section{Overview}
\label{sec:to}

In the following section, we give an overview of the proofs of \cref{lem:ndcc_to_bmm}, \cref{them:alg_atfree}, and \cref{thm:dompairs}: the first because its proof is simple and can be explained briefly, and the second and third because they let us highlight the additional structural properties we establish for these problems.

\subsection{Induced Subgraph Connected Components in Matrix Multiplication Time}

Given a graph $\bG = (V,E)$ and a family of subsets $\mathcal{S} \subseteq 2^{V}$, our goal is to compute, for each $S \in \mathcal{S}$, the connected components of $\bG[S]$. The core idea of the algorithm is to gradually grow the connected components of $\bG[S]$ for every $S \in \mathcal{S}$. We proceed in rounds, maintaining a partition of each $\bG[S]$ that refines its connected components. Let $C_S(u)$ denote a unique identifier associated with the part containing node $u$ in $\bG[S]$.

We initialize the process with singleton parts for the nodes in every $\bG[S]$. In each round, for every $S \in \mathcal{S}$ and node $u \in \bG[S]$, we check whether there exists an edge $\{u,w\} \in \bG[S]$ such that $C_S(u) \neq C_S(w)$. Discovering such an edge allows us to merge the respective parts and update $C_S(\cdot)$ accordingly for each $S$. We claim that $\Oh(\log n)$ rounds suffice for these partitions to grow into the connected components. Indeed, any part that does not yet form a complete connected component will, by induction, contain at least $2^i$ nodes after the $i$-th round.

It remains to briefly outline how to find the edges across which we merge parts. We accomplish this in parallel for each $\bG[S]$ via matrix multiplication. To this end, in each round we assign every part in each $\bG[S]$ a value of zero or one, chosen uniformly and independently at random. Let $C'_S(u)$ denote the bit assigned to the part $C_S(u)$.

Note that if, for some $S \in \mathcal{S}$ and $u \in \bG[S]$, there exists an edge $\{u,w\} \in \bG[S]$ with $C_S(u) \neq C_S(w)$, then with probability $1/4$ we have $C'_S(u) = 0$ and $C'_S(w) = 1$. Furthermore, edges contained entirely within a single part of $\bG[S]$ always have their endpoints assigned the same bit. Thus, it suffices to search for an edge spanning parts with differing assigned bits, repeating the procedure a logarithmic number of times to guarantee success with high probability.

The product we perform is between an $|\mathcal{S}| \times n$ Boolean matrix $M$ (indexed by the sets in $\mathcal{S}$ and the vertices in $V$) and the adjacency matrix $A$ of $\bG$ (indexed by the vertices in $V$). For the matrix $M$, we set $M_{S,w} = 1$ if and only if $w \in \bG[S]$ and $C'_S(w) = 1$. It is easy to see that for every $S \in \mathcal{S}$ and $u \in \bG[S]$ with $C'_S(u) = 0$, there exists an inter-part edge $\{u,w\}$ if and only if $(MA)_{S,u} = 1$. The Boolean matrix multiplication witness for this product directly provides the desired edge.

It is also not difficult to derandomize the above procedure. In the full proof, we provide such a derandomized version with a slightly more formal presentation.

\subsection{Recognizing \AT-free Graphs in Matrix Multiplication Time}

We proceed to outline the main idea behind our $\Ohtilde(n^{\omega})$-time algorithm to verify whether a graph $\bG = (V,E)$ contains an \AT. We recall that such a triple consists of an independent set $\{x,y,z\} \subseteq V$ such that there exists a path between every pair of nodes in $\{x,y,z\}$ that does not pass through the neighborhood of the third node.

\begin{figure*}[!t]
         \centering
         \begin{subfigure}[b]{0.53\textwidth}
             \centering
             \scalebox{0.7}{\begin{tikzpicture}[scale=0.6, every node/.style={draw,shape=circle, fill, inner sep=2.5pt}]
    % --- FIRST GRAPH (G) ---
    \begin{scope}[scale=0.9]
        % Labels for the tripartitions X, Y, Z
        \node[draw=none, fill=none] at (-3.0, 5.5) {\Large $X$};
        \node[draw=none, fill=none] at (3.0, 5.5) {\Large $Y$};
        \node[draw=none, fill=none] at (0, -1.5) {\Large $Z$};
        
        % Label for the first graph
        \node[draw=none, fill=none] at (0, -2.5) {\Large $\mathbf{G}$};

        \begin{scope}[shift={(-2.5,4)},rotate=45]
            \foreach \i in {1,...,3} {
              \node (L\i) at (1*\i-2.0,0) {};
            }
        \end{scope}
        
        \begin{scope}[shift={(2.5,4)},rotate=-45]
            \foreach \i in {1,...,3} {
              \node (R\i) at (1*\i-2.0,0) {};
            }
        \end{scope}
        
        \foreach \i in {1,...,3} {
          \node (B\i) at (1*\i-2.0,0) {};
        }

        % Original edges between partitions
        \draw (R2) -- (L2);
        \draw[opacity=1] (L3) -- (R1);
        \draw[opacity=1] (R3) -- (L3);

        \draw[opacity=1] (L1) -- (B3);
        \draw[opacity=1] (L2) -- (B2);
        \draw (L2) -- (B3);
        
        \draw[opacity=1] (R1) -- (B1);
        \draw[opacity=1] (R2) -- (B1);
        \draw (R2) -- (B3);
    \end{scope}
    
    % --- SECOND GRAPH (G') ---
    \begin{scope}[shift={(10,0)},scale=0.9]
        % Labels for the tripartitions X, Y, Z
        \node[draw=none, fill=none] at (-3.0, 5.5) {\Large $X$};
        \node[draw=none, fill=none] at (3.0, 5.5) {\Large $Y$};
        \node[draw=none, fill=none] at (0, -1.5) {\Large $Z$};
        
        % Label for the second graph
        \node[draw=none, fill=none] at (0, -2.5) {\Large $\mathbf{G}'$};

        % Spaced out labels a, b, c
        \node[label={[draw=none, fill=none, label distance=2mm]left:$a$}] (a) at (-4.0,1.1) {};
        \node[label={[draw=none, fill=none, label distance=2mm]right:$b$}] (b) at (4.0,1.1) {};
        \node[label={[draw=none, fill=none, label distance=2mm]above:$c$}] (c) at (0,6.0) {};
       
        \begin{scope}[shift={(-2.5,4)},rotate=45]
            \foreach \i in {1,...,3} {
              \node (L\i) at (1*\i-2.0,0) {};
              \draw[opacity=1] (a) -- (L\i);
              \draw[opacity=1] (c) -- (L\i);
            }
            % Curved edges ONLY in the second graph (within X)
            \draw[bend left=20] (L1) to (L2);
            \draw[bend left=20] (L2) to (L3);
            \draw[bend left=40] (L1) to (L3);
        \end{scope}
        
        \begin{scope}[shift={(2.5,4)},rotate=-45]
            \foreach \i in {1,...,3} {
              \node (R\i) at (1*\i-2.0,0) {};
              \draw[opacity=1] (b) -- (R\i);
              \draw[opacity=1] (c) -- (R\i);
            }
            % Curved edges ONLY in the second graph (within Y)
            \draw[bend left=20] (R1) to (R2);
            \draw[bend left=20] (R2) to (R3);
            \draw[bend left=40] (R1) to (R3);
        \end{scope}
        
        \foreach \i in {1,...,3} {
          \node (B\i) at (1*\i-2.0,0) {};
          \draw[opacity=1] (a) -- (B\i);
          \draw[opacity=1] (b) -- (B\i);
        }
        % Curved edges ONLY in the second graph (within Z)
        \draw[bend right=20] (B1) to (B2);
        \draw[bend right=20] (B2) to (B3);
        \draw[bend right=40] (B1) to (B3);

        % Edges between partitions
        \draw[opacity=1] (R1) -- (L1);
        \draw[opacity=1] (R1) -- (L2);
        \draw[opacity=1] (R2) -- (L1);
        \draw[opacity=1] (R2) -- (L3);
        \draw[opacity=1] (R3) -- (L1);
        \draw[opacity=1] (R3) -- (L2);

        \draw[opacity=1] (L1) -- (B1);
        \draw[opacity=1] (L1) -- (B2);
        
        \draw (L2) -- (B1);

        \draw (L3) -- (B1);
        \draw (L3) -- (B2);
        \draw (L3) -- (B3);
        
        \draw[opacity=1] (R1) -- (B2);
        \draw[opacity=1] (R1) -- (B3);
          
        \draw[opacity=1] (R2) -- (B2);

        \draw[opacity=1] (R3) -- (B1);
        \draw[opacity=1] (R3) -- (B2);
        \draw[opacity=1] (R3) -- (B3);

        % Draw the auxiliary triangle last so its edges remain visible.
        \begin{scope}[line cap=round]
            \draw[white, line width=3pt] (a) .. controls (-3.5,-1.4) and (3.5,-1.4) .. (b);
            \draw[white, line width=3pt] (a) to[out=145,in=155,looseness=1.35] (c);
            \draw[white, line width=3pt] (b) to[out=35,in=25,looseness=1.35] (c);
            \draw[gray!75!black, line width=1.2pt, dash pattern=on 3pt off 2pt] (a) .. controls (-3.5,-1.4) and (3.5,-1.4) .. (b);
            \draw[gray!75!black, line width=1.2pt, dash pattern=on 3pt off 2pt] (a) to[out=145,in=155,looseness=1.35] (c);
            \draw[gray!75!black, line width=1.2pt, dash pattern=on 3pt off 2pt] (b) to[out=35,in=25,looseness=1.35] (c);
        \end{scope}
    \end{scope}

    % Draw the transformation arrow in the foreground, with a white halo
    % so graph edges remain visually separate from it.
    \draw[white, line width=4pt] (4.0, 2.5) -- (5.5, 2.5);
    \draw[->, ultra thick] (4.0, 2.5) -- (5.5, 2.5);
\end{tikzpicture}}
             \caption{}
             \label{fig:attriple:a}
         \end{subfigure}
         \hfill
        \begin{subfigure}[b]{0.41\textwidth}
             \centering
             \scalebox{0.7}{\begin{tikzpicture}[scale=1.2]

% ---- Central circle N(v) ----
\draw[thick] (0,0) circle (1.5cm);
\node (v) at (0,0) [circle,fill,inner sep=1.2pt] {};
\node[above=2pt of v] {$v$};
\node[above=60pt of v] {\large $N(v)$};

% ---- BLUE ----

\begin{scope}[rotate around={-120:(0.0,0.0)}]
\draw[thick, blue!70!black] (0,-0.9) ellipse (0.8cm and 0.5cm);
\foreach \p [count=\i] in {(-0.5,-0.9), (0,-0.9), (0.5,-0.9)}{
   \node (b\i) at \p [circle, fill, inner sep=1.2pt, blue!70!black] {};
    \draw[opacity=0.25] (v) -- (b\i);
}

\draw[blue!70!black] (-0.8,-2.4) -- (-0.8,-0.9);
\draw[blue!70!black] (0.8,-2.4) -- (0.8,-0.9);
\draw[thick] (0,-2.4) circle (0.8cm);
\node (z) at (0,-2.4) [circle,fill,inner sep=1.2pt] {};
\end{scope}

% ---- GREEN ----

\begin{scope}[rotate around={120:(0.0,0.0)}]
\draw[thick, green!70!black] (0,-0.9) ellipse (0.8cm and 0.5cm);
\foreach \p [count=\i] in {(-0.5,-0.9), (0,-0.9), (0.5,-0.9)}{
   \node (g\i) at \p [circle, fill, inner sep=1.2pt, green!70!black] {};
   \draw[opacity=0.25] (v) -- (g\i);
}

\draw[green!70!black] (-0.8,-2.4) -- (-0.8,-0.9);
\draw[green!70!black] (0.8,-2.4) -- (0.8,-0.9);

\draw[thick] (0,-2.4) circle (0.8cm);
\node (y) at (0,-2.4) [circle,fill,inner sep=1.2pt] {};
\end{scope}

% ---- RED ----
\begin{scope}
\draw[thick, red!70!black] (0,-0.9) ellipse (0.8cm and 0.5cm);
\foreach \p [count=\i] in {(-0.5,-0.9), (0,-0.9), (0.5,-0.9)}{
   \node (r\i) at \p [circle, fill, inner sep=1.2pt, red!70!black] {};
   \draw[opacity=0.25] (v) -- (r\i);
}

\draw[red!70!black] (-0.8,-2.4) -- (-0.8,-0.9);
\draw[red!70!black] (0.8,-2.4) -- (0.8,-0.9);

\draw[thick] (0,-2.4) circle (0.8cm);
\node (x) at (0,-2.4) [circle,fill,inner sep=1.2pt] {};
\end{scope}

\node[above=2pt of z] {$z$};
\node[left=of z] {$Z$};
\node[blue!70!black, below=35ptof z] {$N(Z)$};

\node[above=2pt of y] {$y$};
\node[right=of y] {$Y$};
\node[green!70!black, below=35pt of y] {$N(Y)$};

\node[above=2pt of x] {$x$};
\node[right=of x] {$X$};
\node[red!70!black, above right=of x] {$N(X)$};

\end{tikzpicture}}
             \caption{}
             \label{fig:attriple:b}
        \end{subfigure}

        \caption{
            Visualization of the lower bound construction and a fact that we use in the $\Ohtilde(n^{\omega})$-time algorithm for asteroidal triple detection. In (a), dashed gray edges join the auxiliary vertices $a,b,c$.
         }
         \label{fig:attriple}
\end{figure*}

Before presenting the main idea of the algorithm, we note that the triangle-detection hardness of finding such a triple can be shown to rely solely on the condition that $\{x,y,z\} \subseteq V$ forms an independent set. To clarify our point, we consider the following elementary reduction demonstrating how, given a tripartite graph $\bG = (X \cup Y \cup Z, E)$, we can build a graph $\bG' = (V', E')$ such that $\bG$ contains a triangle $\{x,y,z\}$ if and only if $\bG'$ contains an \AT $\{x',y',z'\}$\footnote{It is well known that triangle detection on an $n$-node graph can be reduced to triangle detection on a tripartite $3n$-node graph: simply create three copies of the vertex set and add edges only between copies of adjacent nodes that reside in different vertex partitions.} (see \cref{fig:attriple:a}). To achieve this, it suffices to invert all edges in $\bG$ (including those within $X$, $Y$, and $Z$, which consequently become cliques) and add three new vertices $a,b,c$ that are fully connected to $X \cup Z$, $Y \cup Z$, and $X \cup Y$, respectively. We also add the edges $\{a,b\}$, $\{b,c\}$, and $\{c,a\}$. It is easy to see that $\bG'$ satisfies the claimed property. However, notice that for every triple with $x \in X$, $y \in Y$, and $z \in Z$, there is always an $x$-$y$ path avoiding $N[z]$ (and similarly for the other pairs). This means that the reduction relies solely on the fact that an \AT $\{x,y,z\}$ is an independent set, as the path condition holds vacuously. The open question was thus whether this path condition which is not exploited in the lower bound could introduce additional computational hardness.

Moving on to the algorithm, we first illustrate Kratsch and Spinrad's approach. Given a graph $\bG = (V,E)$, define for each $v \in V$ the graph $\bG_v \coloneqq \bG[V \setminus N[v]]$, obtained by removing the closed neighborhood of $v$. As a first step, Kratsch and Spinrad compute the neighborhood-deleted components, i.e., the connected components of each $\bG_v$, and then construct a graph $\tilde{\bG}$ as follows:\footnote{Technically, the graph $\tilde{\bG}$ we construct is the knotting graph $K(\bG)$ of $\bG$. Knotting graphs were introduced by Gallai \cite{G67}, who showed that $\bG$ is a comparability graph if and only if $K(\bG)$ is bipartite.}
\begin{enumerate}
    \item for each $v \in V$ and each connected component $C$ of $\bG_v$, they introduce a node in $\tilde{\bG}$; and
    \item for each non-edge $\{u,v\}$ of $\bG$, they add an edge $\{C_u(v), C_v(u)\}$ in $\tilde{\bG}$, where $C_x(y)$ now denotes the connected component containing $y$ in $\bG_x$.
\end{enumerate}
It is not difficult to see that $\bG$ contains an \AT if and only if $\tilde{\bG}$ contains a triangle. This allows us to solve the problem by running a triangle detection algorithm on $\tilde{\bG}$. However, due to the number of nodes in $\tilde{\bG}$, we cannot use standard triangle detection; instead, we must resort to sparse triangle detection, which runs in time $\Oh(m^{2\omega/(\omega+1)})$, where $m$ is the number of edges \cite{AYZ97}. Since $\tilde{\bG}$ has $\Oh(n^2)$ edges, this running time becomes $\Oh(n^{4\omega/(\omega+1)}) = \Oh(n^{2.81})$.\footnote{To be precise, the observation regarding the construction of this graph is due to Lekkerkerker and Boland \cite{LB62}, while the use of sparse matrix multiplication was introduced by K\"ohler \cite{K04}. However, Kratsch and Spinrad were the first to achieve an overall subcubic running time, by computing the neighborhood-deleted components subcubically.}

The first key fact we prove is that if $\tilde{\bG}$ contains a triangle $\{C_1, C_2, C_3\}$ and $C_1'$ is a connected component containing exactly the same nodes as $C_1$, then $\{C_1', C_2, C_3\}$ is also a triangle (even if $C_1$ and $C_1'$ come from different graphs $\bG_v$). Thus, for each distinct connected component across all graphs $\bG_x$, it suffices to retain only one representative in $\tilde{\bG}$.

In a general graph, the number of distinct components in $\tilde{\bG}$ can be as large as $\Omega(n^2)$; however, we only need to consider \AT-free graphs. That is, if we can prove an upper bound of $\alpha(n)$ on the number of distinct components for \AT-free graphs, we can first check whether the actual number of distinct components exceeds this bound. If it does, we immediately report that an \AT triple exists; otherwise, we run a triangle detection instance on $\Oh(\alpha(n))$ nodes. It turns out---and we believe this is a highly interesting structural result of independent interest---that such a bound is indeed possible for \AT-free graphs.

\begin{restatable*}{theorem}{atfreesmall} \label{thm:atfree_small}
    Let $\bG = (V, E)$ be an \AT-free graph.
    Moreover, let $\bG_x \coloneqq \bG[V \setminus N[x]]$,
    and let $\mC_x = \{C_{x}(v)\}_{v \in \bG_x}$ be the set of all connected components of $\bG_x$. Then there are at most $\Oh(n \log n)$ distinct components across all $\bG_x$; that is, $|\mC| \leq 400\, n \log n$ for $\mC \coloneqq \bigcup_{x \in V} \mC_{x}$.
\end{restatable*}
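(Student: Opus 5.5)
The plan is a double-counting argument over size scales. For $i = 0,1,\dots,\lceil \log n\rceil$, let $\mC^{(i)} \subseteq \mC$ be the distinct components $C$ with $2^i \le |C| < 2^{i+1}$. Summing $|C|$ over $\mC^{(i)}$ and exchanging the order of summation gives
\[
2^i\,|\mC^{(i)}| \;\le\; \sum_{C\in\mC^{(i)}} |C| \;=\; \sum_{v\in V} \bigl|\{C\in \mC^{(i)} : v\in C\}\bigr|.
\]
So it suffices to prove the following \emph{key lemma}: there is an absolute constant $c$ such that for every vertex $v$ and every scale $i$, at most $c\cdot 2^i$ distinct components of $\mC^{(i)}$ contain $v$. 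This gives $|\mC^{(i)}| \le c\,n$ per scale. Summing over the $O(\log n)$ scales yields $|\mC| = O(n\log n)$, and tracking constants should give the stated $400\,n\log n$.

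To prove the key lemma I would first collect structural facts about components of $\bG_x$ in an \AT-free graph, in the spirit of Corneil--Olariu--Stewart. Let $C\in\mC_x$ and $D\in\mC_y$ both contain $v$. Since $C$ is a full component of $\bG[V\setminus N[x]]$, every vertex of $N(C)$ lies in $N(x)$, so $x$ has a neighbour next to every boundary vertex of $C$. I want to show that if $C$ and $D$ cross (neither contains the other), then their boundaries must interact very restrictively. Take $c\in C\setminus D$ and $d\in D\setminus C$. A natural candidate triple is $\{x,y,w\}$ with $w$ a suitable vertex of $C\cap D$, or $\{x, d, c'\}$ with $c'$ chosen in $C$. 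The connecting paths are then built inside $C$, inside $D$, and through $N(x)$ or $N(y)$. The goal is the dichotomy that crossing components containing $v$ come from ``comparable'' centres $x,y$, e.g.\ $x\in N(D)\cup D$ or $y\in N(C)\cup C$. Together with the observation that each $C$ is determined by any one of its vertices and its centre $x$, this should show that the family of distinct components through $v$ is close to a chain: for a fixed vertex $v$ it should decompose into $O(1)$ chains under inclusion.

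Given near-chain structure, the within-scale count follows by a second argument. Strictly nested distinct components $C_1\subsetneq C_2\subsetneq\cdots$ of sizes in $[2^i,2^{i+1})$ number at most $2^i$ simply because their sizes strictly increase. This is exactly the $c\cdot 2^i$ bound needed, so the constant $c$ is the number of chains.

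The main obstacle is the crossing lemma, i.e.\ showing that only $O(1)$ pairwise-incomparable components can share a vertex. I expect this to require a careful case analysis on where the centres $x,y,z$ of three pairwise crossing components lie relative to the other components. In each case one must exhibit an asteroidal triple, with avoidance paths routed inside the symmetric differences of the components. If a constant antichain bound fails, I would fall back on the weaker goal that antichains through $v$ within one size scale have size $O(1)$, since that still suffices for the counting above.
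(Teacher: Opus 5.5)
\textbf{There is a genuine gap.} Your key lemma is false, and so is the near-chain structure you plan to derive it from. Your single-scale fallback fails too.

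Take vertices $v,a_1,\dots,a_k,x_1,\dots,x_k$ with $n=2k+1$. Make $\{a_1,\dots,a_k\}$ and $\{x_1,\dots,x_k\}$ cliques, join $v$ to every $a_i$, and join $x_j$ to $a_i$ exactly when $i\neq j$. An independent set contains at most one $a_i$, at most one $x_j$, and never $v$ together with an $a_i$. So the independence number is $2$ and the graph is trivially \AT-free.

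But $V\setminus N[x_j]=\{v,a_j\}$, so $C_{x_j}(v)=\{v,a_j\}$ for every $j$. Hence $v$ lies in $k=(n-1)/2$ distinct, pairwise incomparable components, all of size $2$. This contradicts your bound $c\cdot 2^i$ at $i=1$, the $\Oh(1)$ antichain bound, and the single-scale fallback.

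The failure is not confined to small sizes. If the independence number is at most $2$, every $\bG_x$ is a clique. Taking the complement of a suitable bipartite graph, the sets $V\setminus N[x]$ with $x\notin N[v]$ can be $\Theta(n)$ distinct sets of any prescribed size $s$ with $2\le s\le n/2$. So no per-vertex bound holds at any scale.

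Note also that the dichotomy you aim for in the crossing lemma is satisfied here: $x_j\in N(\{v,a_l\})$ for $j\neq l$. So even if you prove it, it cannot force chain structure.

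\textbf{What is actually needed.} The count holds only on average: in these examples there are at most $n$ distinct components overall, one per centre. Bounding the average instead, $\sum_v|\{C\in\mC^{(i)}: v\in C\}|=\Oh(2^i n)$, is equivalent to $|\mC^{(i)}|=\Oh(n)$, which is your per-scale target itself. So exchanging the order of summation reduces nothing, and you need a global argument.

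The paper's argument works as follows.
\begin{enumerate}
\item For a fixed centre $x$, join $a,b\in V_x$ in an auxiliary graph $\bH_x$ when they lie in different components of $\bG_x$ and neither $C_x(a)=C_b(a)$ nor $C_x(b)=C_a(b)$ holds.
\item A triangle in $\bH_x$ is an \AT, because Proposition~\ref{prp:block} routes the required paths through $N(x)$.
\item By Mantel's theorem, at least $|V_x|^2/4-s_x|V_x|$ ordered pairs satisfy $C_x(a)=C_b(a)$. That is, components of $\bG_x$ reappear as components of many other $\bG_b$.
\item This is combined with a second dyadic parameter, the number $|\{y: C\in\mC_y\}|$ of centres sharing $C$, and a Cauchy--Schwarz count of length-two walks in the centre--component incidence graph. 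The result is $|\mC^{i,j}|\le(2^{-i+6}+2^{-j+5})n$.
\end{enumerate}
What \AT-freeness controls is how many centres share a component, not how many components pass through a vertex.
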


Although the full proof is one of the more involved parts of our paper, it rests on a simple observation. The following basic fact serves as the first building block of the proof and provides intuition for why the absence of an \AT yields the desired bound.

Consider a node $v \in V$ and three distinct connected components $X$, $Y$, and $Z$ of $\bG_v$. Furthermore, consider three nodes $x \in X$, $y \in Y$, and $z \in Z$, together with the out-neighborhoods $N(X)$, $N(Y)$, and $N(Z)$, which consist entirely of nodes in $N(v)$ (see \cref{fig:attriple:b}). Observe that if $\{x, y, z\}$ forms an \AT, then all of the following conditions must hold: $N(X) \setminus N(z) \neq \emptyset$, $N(X) \setminus N(y) \neq \emptyset$, $N(Z) \setminus N(x) \neq \emptyset$, $N(Z) \setminus N(y) \neq \emptyset$, $N(Y) \setminus N(x) \neq \emptyset$, and $N(Y) \setminus N(z) \neq \emptyset$; otherwise, the neighborhood-disjoint paths required for the \AT could not exist. Consequently, if $\bG$ is \AT-free, at least one of these conditions must fail. Without loss of generality, assume that $N(X) \setminus N(z) = \emptyset$, which is equivalent to $N(X) \subseteq N(z)$. It is not difficult to see that this implies $X$ is also a connected component of $\bG_z$. In this sense, being \AT-free restricts the number of distinct components, though establishing the full bound requires a more involved argument.

\subsection{Listing Dominating Pairs in Matrix Multiplication Time}

Next, we turn to listing all dominating pairs $u,v$ of an input graph $\bG$. Recall that $u,v$ form a dominating pair if every $u$-$v$ path is a dominating set, and recall that $C_x(y)$ denotes the connected component containing $y$ in $\bG_x$. The connection between this problem and neighborhood-deleted components is that $u,v$ is \emph{not} a dominating pair if and only if there is some node $w \notin N[u]\cup N[v]$ such that $C_w(u) = C_w(v)$; that is, there exists a $u$-$v$ path avoiding the neighborhood of $w$. In such a case, we say that the component $C_w(u) = C_w(v)$ \emph{witnesses} the non-dominating pair $u,v$.

Following this logic, a naive algorithm would simply iterate over each $v \in V$, examine each connected component $C$ of $\bG_v$, and mark all pairs of distinct nodes $x,y \in C$ as non-dominating. We note that for a subset $\mathcal{C}$ of connected components (potentially spanning different $\bG_v$), we can implement this in time $\Oh(\mathsf{MM}(n, |\mathcal{C}|, n))$. To achieve this, it suffices to construct an $n \times |\mathcal{C}|$ matrix $M$, indexed by vertices in $V$ and components in $\mathcal{C}$ respectively, with $M_{v,C} = 1$ if $v \in C$. The Boolean product $M M^{\top}$ indicates exactly when a pair $u,v$ is witnessed as non-dominating by some component in $\mathcal{C}$, which occurs precisely when $(M M^{\top})_{u,v} = 1$. Kratsch and Spinrad's approach distinguishes between large components, which contain at least $\Delta$ nodes, and small ones. Since there can be at most $n^2 / \Delta$ of the former across all $\bG_v$, marking all their pairs as non-dominating via the matrix method takes $\mathsf{MM}(n, n^2 / \Delta, n) = \Oh(n^{\omega+1} / \Delta)$ time. They also show how to process small components in $\Oh(n^2 \cdot \Delta)$ time; balancing these two bounds yields an overall running time of $\Oh(n^{(\omega+3)/2})$.

Our algorithm differs in two ways:
\begin{enumerate}
    \item Instead of collecting into the set $\mathcal{C}$ all connected components above a threshold size $\Delta$, we include only the largest connected component from each $\bG_v$. By doing so, it takes only $\Oh(n^{\omega})$ time to find all non-dominating pairs witnessed by these components.
    \item On the other hand, for non-dominating pairs that have no witness in $\mathcal{C}$, we prove an interesting structural result: if $u,v$ has no ``large'' witness, then it must have many witnesses. More specifically, we prove that if $u,v$ is a non-dominating pair witnessed by some $C \notin \mathcal{C}$ in $\bG_x$, then there are at least $\Omega(|V \setminus N[x]|)$ nodes $y \in V$ such that $\bG_y$ contains a component witnessing $u,v$. Assuming for simplicity that $|N[x]| = o(n)$ for all $x \in V$, we can exploit this property as follows: we sample $\Theta(\log n)$ nodes and, for each sampled node $y$, find all non-dominating pairs witnessed by any component of $\bG_y$. This takes $\Oh(n^2)$ time per sampled node $y$, since the sizes of the components in each $\bG_y$ sum to $\Oh(n)$.
\end{enumerate}
In the full proof, we again provide a derandomized version of the algorithm.

\subsection*{Open Problems}

To conclude this section, we list some other open problems:

\begin{itemize}
    \item First, we report an open problem posed in \cite{KS06}. It is well established that every \AT-free graph contains a dominating pair \cite{COS97}, which can, in fact, be found in linear time \cite{COS99}. However, neither \AT-free graph recognition nor finding a dominating pair in a general graph can be solved in linear time unless $\omega = 2$ or some fine-grained complexity hypothesis fails. This leads to an interesting open question: is there a robust, linear-time algorithm for finding a dominating pair in an \AT-free graph?  Here, ``robust'' is used in the sense of \cite{RS03}; a robust linear-time algorithm must take an arbitrary (i.e., not necessarily \AT-free) graph as input and, in linear time, either output a dominating pair or correctly report that the graph is not \AT-free.

    \item We showed that detecting an \AT-triple has the same time complexity as detecting a triangle in a graph. In time $\Oh(n^{\omega})$, however, we can solve problems stronger than triangle detection, e.g., the all-edge triangle problem, where for each edge we want to know whether it is contained in a triangle. Our approach for detecting an \AT-triple does not seem to yield such an all-edge version in $\Oh(n^{\omega})$ time, and the latter seems considerably more difficult: namely, for each pair $x,y$ we want to know whether there is a $z$ such that $\{x,y,z\}$ is an \AT. It would be interesting to determine whether a running time of $\Oh(n^{\omega})$ is achievable, or whether some hardness result rules this out.

    \item Generally, there is a plethora of other important structures in structural graph theory that admit polynomial-time algorithms which are nonetheless clearly far from optimal, where a fine-grained complexity lens could help uncover new 
    structural properties; see, e.g., perfect graphs~\cite{CLV03, CKS05} or graphs 
    with even or odd holes~\cite{CH15, CSSP20, CLL22}.
\end{itemize}

\section{Preliminaries}
\label{sec:prelims}

\subparagraph*{Set notation.}
For integers \(i, j \in \mathbb{Z}\), we write \(\fragment{i}{j}\) to represent the set \(\{i, \dots, j\}\), and \(\fragmentco{i}{j}\) to denote the set \(\{i, \dots, j - 1\}\).
We define \(\fragmentoc{i}{j}\) and \(\fragmentoo{i}{j}\) similarly.

\subparagraph*{Graph notation.} All graphs considered in this paper are undirected. For a graph $\bG = (V, E)$, we denote by $n := |V|$ the number of nodes and by $m := |E|$ the number of edges. For a node $v \in V$ we indicate with $N(v)$ its neighborhood, and with $N[v] \coloneqq N(v) \cup \{v\}$ its \emph{closed neighborhood}. Similarly, we define the neighborhood of a set $S \subseteq V$ as $N(S) := \cup_{v \in S} N(v) \setminus S$, and its closed neighborhood as $N[S] := N(S) \cup S$. In case of ambiguity due to the existence of multiple graphs, we write as subscript the graph which the neighborhood refers to, e.g., $N_{\bG}(v)$.

We indicate with $\bG_v$ the graph obtained by removing from $\bG$ the closed neighborhood, i.e., $\bG_v \coloneqq \bG[V \setminus N[v]]$. Lastly, for a node $u \in V \setminus N[v]$, we let $C_v(u)$ be the connected component of $\bG_v$ that contains $u$. More specifically, $C_v(u)$ is the set of nodes in $\bG_v$ that are in the same connected component as $u$.

Finally, we say that a (simple) $x$-$y$ path in $\bG = (V,E)$ avoids a set $S \subseteq V$, if it contains no node in $S$.

\subparagraph*{Matrix Multiplication and Witnesses.} For our algorithms we need to be able to list \emph{witnesses} for a Boolean matrix product. For that purpose, we use the following result.

\begin{theorem}[Deterministic Witness Reporting \cite{AN96}]
    \label{thm:det_witnesses}
    There is a deterministic $\Ohtilde(\mathsf{MM}(p, q, r))$-time algorithm that, given a Boolean $p \times q$ matrix $A$ and a $q \times r$ matrix $B$, reports one witness for each nonzero entry of the Boolean matrix product $AB$. That is, for every pair $(i, j)$ such that $AB[i,j] = 1$, the algorithm reports an index $k \in \fragment{1}{q}$ satisfying $A[i,k] = 1$ and $B[k,j] = 1$. \lipicsEnd
\end{theorem}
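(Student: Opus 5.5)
The plan is to follow Seidel's witness-finding scheme and then derandomize it in the style of Alon and Naor, keeping every product rectangular of shape $p \times q$ by $q \times r$. The total cost should then be a polylogarithmic number of such products, i.e.\ $\Ohtilde(\mathsf{MM}(p,q,r))$.

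\emph{Step 1: unique witnesses.} Fix a subset $R \subseteq \fragment{1}{q}$ of the middle indices. Define the integer matrix $A_R[i,k] \coloneqq k \cdot A[i,k]\cdot[k\in R]$ and the integer product $A_R B$. In parallel, compute the Boolean count product $A\,\mathrm{diag}(R)\,B$ over the integers. If an entry $(i,j)$ has exactly one witness inside $R$, then the count entry equals $1$ and $(A_R B)[i,j]$ is exactly that witness $k$. Entries have $\Oh(\log q)$ bits, so each product costs $\Ohtilde(\mathsf{MM}(p,q,r))$. Any candidate $k$ can be checked in $\Oh(1)$ time by testing $A[i,k]=B[k,j]=1$, so wrong candidates are harmless. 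We first compute $AB$ itself to know which entries need a witness.

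\emph{Step 2: hitting all witness counts.} Let $W_{ij}$ be the witness set of $(i,j)$, and suppose $2^{t-1} \le |W_{ij}| < 2^t$. A uniformly random $R$ that contains each index with probability $2^{-t}$ meets $W_{ij}$ in exactly one element with constant probability. So $\Oh(\log n)$ independent samples for each of the $\Oh(\log q)$ scales $t$ give a witness for every entry with high probability. This already yields a randomized $\Ohtilde(\mathsf{MM}(p,q,r))$ algorithm.

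\emph{Step 3: derandomization (the main obstacle).} We need, for each scale $t$, an explicit family $\mathcal{R}_t$ of polylogarithmically many subsets of $\fragment{1}{q}$ with the following property: every set $W$ with $|W|\in\fragmentco{2^{t-1}}{2^t}$ has some $R\in\mathcal{R}_t$ with $|R\cap W|=1$. The family must work simultaneously for all $pr$ witness sets, and we cannot afford to search over them. The approach I would try first, following Alon--Naor, has two stages. First, hash $\fragment{1}{q}$ into about $2^t$ buckets using a small explicit almost-pairwise-independent (or $\epsilon$-biased) family. Second, observe that among polylogarithmically many hash functions, some function has a bucket containing exactly one element of $W$, for every $W$. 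This holds by a counting argument over the sample space, since a pairwise-independence second-moment bound shows a constant fraction of the functions isolate an element. The sets $R$ are then unions of buckets selected by an explicit family of small size. The delicate part is to keep the family size polylogarithmic while guaranteeing isolation for every $W$, rather than only in expectation. An alternative is the method of conditional expectations: greedily choose subsets that serve a constant fraction of the still-unresolved entries. Evaluating that fraction costs one further matrix product per choice, and the greedy process needs only $\Oh(\log(pr))$ rounds per scale. Either route gives $\Oh(\operatorname{polylog} n)$ products in total, so the whole procedure runs in $\Ohtilde(\mathsf{MM}(p,q,r))$ time.
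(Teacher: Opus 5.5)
The paper gives no proof of this statement. It imports the result from Alon and Naor as a black box, so your proposal has to stand on its own. Steps~1 and~2, Seidel's unique-witness trick plus sampling at $\Oh(\log q)$ scales, are correct. The gap is Step~3, which carries the whole content of the cited theorem.

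\textbf{First route.} You ask for a polylogarithmic-size family $\mathcal{R}_t$ that isolates an element of \emph{every} $W$ with $|W|\in\fragmentco{2^{t-1}}{2^t}$. No such family exists.
\begin{itemize}
\item Take any $m<s\coloneqq 2^{t-1}$ sets and any $U\subseteq\fragment{1}{q}$ with $|U|=s+m$. Start from $W=U$.
\item While some member $R$ has $|R\cap W|=1$, delete that element from $W$.
\item Each member triggers at most one deletion, because afterwards $R\cap W=\emptyset$ permanently.
\item So the process stops with $s\le|W|<2s$, and no member isolates anything in $W$.
\end{itemize}
Hence scale $t$ needs at least $2^{t-1}$ sets whenever $2^t\le q$, which is $\Omega(q)$ sets at the top scales. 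This holds whatever hash family or bucket-union scheme you use. The ``delicate part'' you flag is therefore not just delicate but impossible.

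\textbf{Second route.} Evaluating a \emph{fully specified} candidate $R$ does cost one product, but you never say where the polylogarithmically many candidates come from. If they come from a fixed, instance-independent pool, the argument above produces a $W$ that no candidate isolates. Setting every row of $A$ and every column of $B$ to the indicator of $W$ then defeats the greedy procedure entirely. The candidates must therefore be generated adaptively, by conditional expectations over \emph{partial} choices. What is missing is a random process together with a pessimistic estimator that, summed over the unresolved entries, can be evaluated with one matrix product per candidate.

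\textbf{One way to close the gap.}
\begin{itemize}
\item Build $R=H_1\cap\dots\cap H_{t+1}$ one half at a time. Each half is $H_\ell=\{k:\langle a,\mathrm{bin}(k)\rangle=c\}$, with $(a,c)$ drawn from an explicit $\eps$-biased space of size $\poly(\log q/\eps)$. Membership bits are then exactly unbiased and $\eps$-almost pairwise independent.
\item Let $x_{ij}=|H_1\cap\dots\cap H_\ell\cap W_{ij}|$ and $\rho=2^{-(t+1-\ell)}$. Use the estimator $\sum_{(i,j)}\max\{0,\,x_{ij}\rho-x_{ij}(x_{ij}-1)\rho^2\}$ over the unresolved entries of scale $t$.
\item This estimator starts at least $\frac{3}{16}$ per entry and ends at exactly $\sum_{(i,j)}[x_{ij}=1]$.
\item Over a random member of the small-bias family it loses only $\Oh(\eps)$ per entry per step. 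The clipping at $0$ is what keeps this error bounded when $x_{ij}\rho$ is large.
\item For each candidate half it is computed with a single integer product.
\end{itemize}
With $\eps=\Theta(1/\log q)$, this yields a set isolating a constant fraction of the unresolved entries using polylogarithmically many products. Small almost-independent sample spaces play exactly this role in Alon and Naor's derandomization.
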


\subparagraph*{Deterministic Fingerprints for Partitions.} 

Suppose we are given a graph $\bG = (V, E)$ and a family of vertex sets $\mathcal{S} \subseteq 2^V$. Further, let $C_S(v)$ be the connected component of $v \in V$ in $\bG[S]$. To quickly check whether $C_{S}(v) = C_{S'}(v')$ for distinct $v,v' \in V$ and $S, S' \in \mathcal{S}$, we introduce the following fingerprints for the connected components.

\begin{lemma}\label{lem:fingerprints}
    Consider a set $V = \fragment{1}{n}$ together with $k$ partitions
    $\mathcal{P}_1, \ldots, \mathcal{P}_k$ of $V$, and for $v \in V$ let $P_i(v)$ denote
    the set of the $i$-th partition containing $v$. Then, in deterministic time
    $\Oh(nk \log(nk))$, we can construct functions
    $h_1, \ldots, h_k : V \rightarrow \fragment{1}{M}$ with $M = \Oh(nk)$ that
    assigns a \emph{fingerprint} to each set of each partition, such that
    $h_i(v) = h_j(u)$ if and only if $P_i(v) = P_j(u)$.
\end{lemma}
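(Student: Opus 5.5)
We represent each set of each partition canonically by its sorted list of elements and assign identical identifiers to identical sets by sorting. Concretely, for each $i \in \fragment{1}{k}$ we first compute, for every $v \in V$, a representative of $P_i(v)$: the minimum element $\min P_i(v)$. Assuming the partitions are given as lists of sets, this takes $\Oh(n)$ time per partition. Two sets $P_i(v)$ and $P_j(u)$ with different minima are certainly different. Sets with equal minima may still differ, though, so the minimum alone is not a valid fingerprint.

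To obtain a full canonical identifier, we collect all sets across all $k$ partitions; there are at most $nk$ of them, with total size exactly $nk$. We write each set as its sorted sequence of elements. Sorting all elements of all sets by the pair (set index, element) via radix sort over keys bounded by $\Oh(nk)$ costs $\Oh(nk)$, or $\Oh(nk\log(nk))$ with comparison sorting, which the budget allows. We then sort the resulting sequences lexicographically. Since the total length is $nk$ and the alphabet is $\fragment{1}{n}$, standard lexicographic sorting of variable-length strings (radix sort by position, or comparison sort with $\Oh(\log(nk))$ comparisons charged against string lengths) runs in $\Oh(nk\log(nk))$ total time.

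After sorting, we scan the list and give equal sequences the same rank and distinct sequences distinct ranks. Comparing adjacent sequences costs time proportional to their lengths, so the scan is $\Oh(nk)$ overall. The ranks lie in $\fragment{1}{M}$ with $M \le nk$. Finally, we set $h_i(v)$ to the rank of the sequence of $P_i(v)$, filling in the table $h_i$ by iterating over the elements of each set. Correctness holds because ranks coincide exactly when the sorted sequences coincide, that is, exactly when the sets are equal.

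The only delicate step is sorting variable-length strings within $\Oh(nk\log(nk))$ rather than $\Oh(nk \cdot \max|P|\cdot \log)$. To handle this, we use the classical technique: group the strings by length, then radix-sort position by position from the longest position down, inserting shorter strings only once their length is reached. Alternatively, and more simply, we can use merge sort where each comparison of two strings costs at most the length of the shorter one. We would first try the simpler option of hashing via sorting of (length, sequence) pairs with the per-position bucket approach, which gives $\Oh(nk + n)$ time, well within the bound.
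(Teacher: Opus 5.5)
Your proposal is correct and follows the paper's proof: both write each set as its sorted list, sort all $\le nk$ lists (total length $nk$) lexicographically, and assign fingerprints in one scan that increments a counter whenever a list differs from its predecessor. The only difference is how the comparison cost is bounded: the paper first buckets the lists by length so that every comparison inside bucket $\mathcal{P}^\ell$ costs $\Oh(\ell)$, whereas your merge-sort variant needs the standard but unstated charging argument (charge each comparison to the string it outputs at that merge level, which costs at most that string's length plus one) to reach $\Oh(nk\log(nk))$; your radix-sort variant attains $\Oh(nk+n)$ only if each pass visits just the nonempty buckets for that position (otherwise it can cost $\Theta(n^2)$, which exceeds the budget for small $k$), so the merge-sort route is the one to rely on.
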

\begin{proof}
    Let $\mathcal{P} = \biguplus_{i=1}^{k} \mathcal{P}_i$ be the multiset containing
    all sets. We proceed as follows.
    
    \subparagraph*{Our algorithm.}
    First, we sort the elements of each set
    $P \in \mathcal{P}$, looking at them as lists.
    Equal lists have equal size, so we partition $\mathcal{P}$ into buckets by length; lists in different buckets get different fingerprints. Within a bucket $\mathcal{P}^\ell \subseteq \mP$ all lists have length $\ell$, and we sort them lexicographically, making equal lists adjacent. A single
    scan then assigns fingerprints by incrementing a global counter whenever the
    current list differs from its predecessor, and we store $h_i(v)$ as the
    fingerprint of $P_i(v)$ in a table indexed by $\fragment{1}{k} \times V$. 
    It is not difficult to see that this construction satisfies the claimed property.

    \subparagraph*{Runtime.}
    Sorting sets initially, bucketing by lengths, and the final scan cost at most $\Oh(nk \log n)$. 

    For the lexicographic sort on $\mP^\ell$ we use $ \Oh( |\mP^\ell| \log |\mP^\ell|)$ comparisons, where each comparison costs $\Oh(\ell)$ time. Overall this costs $\Oh(\sum_{\ell=1}^{n} \ell\ \cdot |\mathcal{P}^\ell| \log  |\mathcal{P}^\ell|) = \Oh(nk \log(nk))$, because the sizes of the lists sum up to $nk$ and $|\mP^{\ell}| = \Oh(nk)$; empty buckets contribute zero.
\end{proof}

\section{Proofs}

\subsection{Induced Subgraph Connected Components in Matrix Multiplication Time}

In this subsection, we present the proof for \cref{lem:ndcc_to_bmm} and \cref{lem:ndcc_lb}, which proves the upper and lower bounds for the induced subgraph connected component problem.

\ndcctobmm

\begin{proof}
    Let $\bG = (V,E)$ be an undirected $n$-node graph, and let $\mathcal{S} \subseteq 2^V$ be non-empty.
    To compute the connected components of $\bG[S]$ for each $S \in \mathcal{S}$, we rely on the following subroutine.

    \begin{claim}\label{clm:ndcc_to_bmm}
        For each $S \in \mathcal{S}$, let $f_S : V \rightarrow \fragment{1}{M} \cup \{\bot\}$ be a function, where $M = \poly(n)$. 
        Then, in time $\Ohtilde(\mathsf{MM}(|\mathcal{S}|, n, n))$, for each $S \in \mathcal{S}$ and $w$ such that $f_{S}(w) \neq \bot$, we can find a node $u \in V$ that satisfies $\{u,w\} \in E$, $f_S(u) \neq f_{S}(w)$ and $f_{S}(u) \neq \bot$, or verify that no such node exists.
    \end{claim}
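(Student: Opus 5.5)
We need to derandomize the random-bit trick from the overview. The plan is to replace the random bit assignment by the binary digits of the labels: two distinct labels $a \neq b$ in $\fragment{1}{M}$ differ in at least one of the $L = \lceil \log_2(M+1) \rceil = \Oh(\log n)$ bit positions. Let $\beta_\ell(a)$ denote the $\ell$-th bit of $a$. For each $\ell \in \fragment{1}{L}$ and each $b \in \{0,1\}$, we do one Boolean product and extract witnesses. This gives $2L = \Oh(\log n)$ products in total, each of cost $\Ohtilde(\mathsf{MM}(|\mathcal{S}|, n, n))$ by \cref{thm:det_witnesses}.

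Concretely, fix $\ell$ and $b$. Define the $|\mathcal{S}| \times n$ Boolean matrix $M^{(\ell,b)}$ by $M^{(\ell,b)}_{S,u} = 1$ iff $f_S(u) \neq \bot$ and $\beta_\ell(f_S(u)) = b$. Let $A$ be the adjacency matrix of $\bG$, and compute $P = M^{(\ell,b)} A$ with witnesses via \cref{thm:det_witnesses}. Then $P_{S,w} = 1$ iff there is a neighbor $u$ of $w$ with $f_S(u) \neq \bot$ and $\beta_\ell(f_S(u)) = b$. For every $S$ and $w$ with $f_S(w) \neq \bot$ and $\beta_\ell(f_S(w)) = 1-b$, any reported witness $u$ satisfies all three required conditions. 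Indeed, $\{u,w\}\in E$ and $f_S(u) \neq \bot$ hold by construction, and $f_S(u) \neq f_S(w)$ holds because the two labels differ in bit $\ell$. We record such a $u$ for $(S,w)$ whenever one has not yet been found. Building each matrix takes $\Oh(|\mathcal{S}| n)$ time, which is dominated by the product.

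For completeness, suppose some valid $u$ exists for $(S,w)$. Then $f_S(u) \neq f_S(w)$ are both in $\fragment{1}{M}$, so they differ in some bit $\ell$. Taking $b = \beta_\ell(f_S(u))$, we get $\beta_\ell(f_S(w)) = 1-b$, so the entry $P_{S,w}$ of the product for $(\ell,b)$ is $1$ and a witness is reported. Conversely, if no product reports a witness for $(S,w)$ across all $2L$ rounds, no valid $u$ exists, and we declare this. Pairs $(S,w)$ with $f_S(w) = \bot$ are simply ignored.

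The only step needing care is exactly this correctness-and-completeness check. The key point is that a witness found in round $(\ell,b)$ is valid precisely because we restrict to $w$ whose bit is the opposite of $b$, and completeness follows from the bit-difference argument. The running time is $\Oh(\log M) \cdot \Ohtilde(\mathsf{MM}(|\mathcal{S}|, n, n)) = \Ohtilde(\mathsf{MM}(|\mathcal{S}|, n, n))$ since $M = \poly(n)$.
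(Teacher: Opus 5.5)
Your proposal is correct and follows essentially the same approach as the paper. For each bit position $i$ and bit value $b$, the paper builds the matrix $M_i^b$ and multiplies it by the adjacency matrix, obtaining deterministic witnesses via \cref{thm:det_witnesses}. It then reports the witness of $(M_i^{1-b_i}A)[S,w]$, where $b_i$ is the $i$-th bit of $f_S(w)$. Completeness follows, as in your argument, because distinct labels differ in at least one of the $\Oh(\log n)$ bits.
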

    \begin{claimproof}
        Let $B = \Oh(\log n)$ be the number of bits required to represent any value in $\fragment{1}{M}$.
        The algorithm proceeds as follows:
        \begin{enumerate}[(i)]
            \item For each $i \in \fragment{1}{B}$, we construct two Boolean $|\mathcal{S}| \times n$ matrices $M^{b}_i$ for $b \in \{0, 1\}$, indexed by sets in $\mathcal{S}$ (rows) and by nodes in $V$ (columns), defined as follows:
            \[
                M_i^b[S,u] \coloneqq
                \begin{cases}
                    1 & \text{if $f_S(u) \neq \bot$ and the $i$-th bit of $f_S(u)$ is $b$,} \\
                    0 & \text{otherwise.}
                \end{cases}
            \]
            \item For each $i \in \fragment{1}{B}$, we use \cref{thm:det_witnesses}
            to compute the two BMM products $M^0_i A$ and $M^1_i A$ in time $\Oh(\mathsf{MM}(|\mathcal{S}|, n, n))$, where $A$ denotes the adjacency matrix of $\bG$. We also compute witnesses for all non-zero entries in these products.
            \item To find a suitable $u$ for each $S$ and $w$ such that $f_S(w) \neq \bot$, we iterate as follows: for each $i \in \fragment{1}{B}$, let $b_i$ be the $i$-th bit in the binary representation of $f_S(w)$. If for any $i \in \fragment{1}{B}$ we have $(M^{1 - b_i}_i A)[S,w] = 1$, we report the corresponding witness node of the product $(M^{1 - b_i}_i A)[S,w]$ as the node $u$ for $S,w$.
        \end{enumerate}

        Regarding correctness, fix $S$ and $w$ such that $f_S(w) \neq \bot$, and let $b_i$ be defined as above. Note that for each $i \in \fragment{1}{B}$, we have $(M^{1 - b_i}_i A)[S,w] = 1$ if and only if there exists a node $u$ such that $\{u,w\} \in E$, $f_S(u) \neq \bot$, and the $i$-th bit of $f_S(u)$ differs from the $i$-th bit of $f_S(w)$.  From this, the correctness is immediate: we only report a node $u$ if $f_S(u) \neq f_S(w)$. Conversely, if such a $u$ exists, then $f_S(u)$ and $f_S(w)$ must differ in at least one bit, ensuring that we find such $u$.

        Regarding the runtime, we use $\Oh(\log n)$ many times \cref{thm:det_witnesses}; all other operations can be implemented altogether in time $\Oh(|\mathcal{S}|n \log n)$.
    \end{claimproof}

    \subparagraph{Our algorithm.}
    With the subroutine setup in place, we now give the full algorithm.
    \begin{enumerate}[(i)]
        \item For each $S \in \mathcal{S}$, initialize a union-find (disjoint-set) data structure\footnote{A union-find (disjoint-set) data structure $D$ maintains a partition of a universe into disjoint sets, supporting \textsc{Find}$_D(x)$ (returns the representative of $x$’s set) and \textsc{Union}$_D(x,y)$ (merges the two sets), with near-constant amortized time $\Oh(\alpha(n))$ using union-by-rank and path compression \cite{T75}. Here $\alpha(n)$ denotes the inverse-Ackermann function.} $D_S$ over $V$ (equivalently, over the universe $\fragment{1}{n}$).
        The invariant we maintain is that every set in $D_S$ is contained in a connected component of $\bG[S]$ (or is a singleton corresponding to a node not in $\bG[S]$).
    
        \item We run for $\Theta(\log n)$ rounds. In the $i$-th round, for each $S \in \mathcal{S}$ and $w \in V$, we define
        \[
          f_S(w) \;=\;
          \begin{cases}
            \textsc{Find}_{D_S}(w), & w \in \bG[S],\\
            \bot, & w \notin \bG[S].
          \end{cases}
        \]
        We then invoke the subroutine from \cref{clm:ndcc_to_bmm} to find, for each $S$ and $w$, a witness $u$ as specified there (or determine that no such $u$ exists).
        Whenever we obtain such a witness, we perform $\textsc{Union}_{D_S}(u,w)$. 
        The invariant is preserved because \cref{clm:ndcc_to_bmm} guarantees that $(u,w)\in E$ and that both $u,w \in \bG[S]$. After processing all such unions, we proceed to the next round.
    
        \item For each $S \in \mathcal{S}$, output all sets in $D_S$ except singleton sets corresponding to nodes $w \notin \bG[S]$; these are the connected components of $\bG[S]$.
    \end{enumerate}
    
    \subparagraph*{Correctness.}
    It remains to show that after $\Theta(\log n)$ rounds we have that the sets we maintain are equal to the connected components.
    For each $S \in \mathcal{S}$, let $m_S$ denote the number of sets in $D_S$ that are strict subsets of connected components of $\bG[S]$.
    In each round, $m_S$ decreases by at least a factor of $2$: every set counted by $m_S$ has at least one outgoing edge to another set in the same connected component, and the corresponding merge ensures that in the next round it is combined with another set.
    Hence after $\Theta(\log n)$ rounds, $m_S=0$ for every $S$, so each set in $D_S$ exactly matches a connected component of $\bG[S]$.

    \subparagraph*{Runtime.} In each of the $\Theta(\log n)$ rounds we use \cref{clm:ndcc_to_bmm} and perform at most $\Oh(|\mathcal{S}| n)$ union operations. This gives an overall runtime of $\Ohtilde(\mathsf{MM}(|\mathcal{S}|, n, n))$.
\end{proof}

\ndcclb

\begin{figure}[t]
        \centering
        \scalebox{0.65}{\begin{tikzpicture}[
    node distance = 1.5cm,
    xnode/.style={circle, draw=black, thick, minimum size=8.5mm, inner sep=0pt, font=\large},
    ynode/.style={circle, draw=black, thick, minimum size=8.5mm, inner sep=0pt, font=\large},
    clique_edge/.style={gray, thick},
    b_edge/.style={black, thick}
]

% Declare background layer for the row highlighting
\pgfdeclarelayer{background}
\pgfsetlayers{background,main}

% ====================
% Wrapper scope for left shift
% ====================
\begin{scope}[xshift=-1.5cm]

\matrix (matAB) [
    matrix of math nodes,
    left delimiter={[}, right delimiter={]},
    nodes={minimum width=1.8em, minimum height=1.8em, anchor=center, font=\large}
] at (8, 0) {
    1 & 1 & 1 & 0 \\
    1 & 1 & 0 & 1 \\
    1 & 0 & 1 & 1 \\
};
\node[below=0.3cm of matAB] {$AB$};

% ====================
% MATRIX A (3x4)
% ====================
\matrix (matA) [
    matrix of math nodes,
    left delimiter={[}, right delimiter={]},
    nodes={minimum width=1.8em, minimum height=1.8em, anchor=center, font=\large}
] at (0, 0) {
    1 & 0 & 1 & 0 \\
    0 & 1 & 1 & 0 \\
    1 & 0 & 0 & 1 \\
};
\node[below=0.3cm of matA] {$A$};

% Highlight Row 1 of A to represent S_1
\begin{pgfonlayer}{background}
    \node[fit=(matA-1-1) (matA-1-4), fill=blue!20, rounded corners, inner sep=2pt] (row1) {};
\end{pgfonlayer}
\node[left=0.2cm of row1, text=blue!80!black, font=\bfseries] {$S_1$};

\matrix (matB) [
    matrix of math nodes,
    left delimiter={[}, right delimiter={]},
    nodes={minimum width=1.8em, minimum height=1.8em, anchor=center, font=\large}
] at (4, 0) {
    1 & 0 & 1 & 0 \\
    0 & 1 & 0 & 1 \\
    1 & 1 & 0 & 0 \\
    0 & 0 & 1 & 1 \\
};
\node[below=0.3cm of matB] {$B$};

% ====================
% Multiplication Sign
% ====================
\node[at=($(matA.east)!0.5!(matB.west)$)] (times) {\Large$\times$};

% ====================
% MATRIX B (4x4)
% ====================

% ====================
% Equals Sign
% ====================
\node[at=($(matB.east)!0.5!(matAB.west)$)] (equals) {\Large$=$};

% ====================
% MATRIX AB (3x4) - The Boolean Product
% ====================

% ====================
% GRAPH G = (V,E)
% ====================
\begin{scope}[shift={(12.5, 0)}, scale=0.7]
    % X nodes (Since A[1,1]=1 and A[1,3]=1, we highlight x1 and x3)
    \node[xnode, fill=blue!20] (x1) at (0,  2.25) {$x_1$};
    \node[xnode]               (x2) at (0,  0.75) {$x_2$};
    \node[xnode, fill=blue!20] (x3) at (0, -0.75) {$x_3$};
    \node[xnode]               (x4) at (0, -2.25) {$x_4$};

    % Y nodes
    \node[ynode, fill=blue!20] (y1) at (3.5,  2.25) {$y_1$};
    \node[ynode, fill=blue!20] (y2) at (3.5,  0.75) {$y_2$};
    \node[ynode, fill=blue!20] (y3) at (3.5, -0.75) {$y_3$};
    \node[ynode, fill=blue!20] (y4) at (3.5, -2.25) {$y_4$};

    % Clique edges for {x1, x2, x3, x4}
    % Drawn to the left (bend right relative to the downward drawing direction) to avoid crossing B edges
    \draw[clique_edge] (x1) -- (x2);
    \draw[clique_edge] (x2) -- (x3);
    \draw[clique_edge] (x3) -- (x4);
    \draw[clique_edge] (x1) to[bend right=45] (x3);
    \draw[clique_edge] (x2) to[bend right=45] (x4);
    \draw[clique_edge] (x1) to[bend right=65] (x4);

    % Bipartite edges between X and Y based strictly on matrix B[k, j] = 1
    % Row 1 of B: indices 1, 3
    \draw[b_edge] (x1) -- (y1);
    \draw[b_edge] (x1) -- (y3);
    
    % Row 2 of B: indices 2, 4
    \draw[b_edge] (x2) -- (y2);
    \draw[b_edge] (x2) -- (y4);
    
    % Row 3 of B: indices 1, 2
    \draw[b_edge] (x3) -- (y1);
    \draw[b_edge] (x3) -- (y2);
    
    % Row 4 of B: indices 3, 4
    \draw[b_edge] (x4) -- (y3);
    \draw[b_edge] (x4) -- (y4);
\end{scope}

\node[below=0.5cm of x4, xshift=1.2cm] {\textbf{$\mathbf{G} = (V, E)$}};

\end{scope} % End of shifting scope
\end{tikzpicture}}
        \caption{An example for the construction in \cref{lem:ndcc_lb}}
        \label{fig:ndcc_lb}
\end{figure}

\begin{proof}
    Based on $B$, we construct the graph $\bG = (V, E)$ as follows (for an example, refer to \cref{fig:ndcc_lb}):
    \begin{itemize}
        \item $V$ contains $2n$ vertices, denoted by $x_1, \ldots, x_n, y_1, \ldots, y_n$; and
        \item $E$ contains edges such that the nodes $\{x_1, \ldots, x_n\}$ are fully connected (i.e., they form a clique). Moreover, an edge $\{x_k, y_j\}$ exists if and only if $B[k, j] = 1$.
    \end{itemize}
    Based on $A$, we construct the family of sets $\mathcal{S}$ as follows: for each row $i \in \fragment{1}{s}$, we add a set $S_i$ to $\mathcal{S}$ containing all nodes $x_k$ such that $A[i,k] = 1$. We also add to $S_i$ all nodes in $\{y_1, \ldots, y_n\}$. That is, $S_i = \{x_k \mid k \in \fragment{1}{n}, A[i,k] = 1\} \cup \{y_1, \ldots, y_n\}$.

    Note that by the construction of the clique, for each $S_i \in \mathcal{S}$, there is a unique connected component in $\bG[S_i]$ that contains all nodes in $\{x_1, \ldots, x_n\} \cap S_i$. We call this the \emph{giant component} of $\bG[S_i]$. (Without loss of generality, we assume all $\{x_1, \ldots, x_n\} \cap S_i$ are non-empty, otherwise the corresponding row is the zero vector and computing the product for that row is trivial.) 
    
    To infer the product $AB$, we make the following claim: $AB[i,j] = 1$ if and only if the giant component of $\bG[S_i]$ is non-empty and contains $y_j$. The only if direction is straightforward: if there is a $k \in \fragment{1}{n}$ such that $A[i,k] = 1$ and $B[k, j] = 1$, then $x_k \in S_i$ and the edge $\{x_k, y_j\} \in E$. Therefore, $y_j$ is in the giant component. For the if direction, assume that $y_j$ is in the giant component. This is only possible if there is a $k \in \fragment{1}{n}$ such that $\{x_k, y_j\} \in E$, which is equivalent to $B[k, j] = 1$. Since $x_k \in S_i$, we must also have $A[i,k] = 1$. Thus, we have $AB[i,j] = 1$.
\end{proof}

\subsection{Recognizing \AT-free Graphs in Matrix Multiplication Time}

In this section, we present an $\Ohtilde(n^{\omega})$-time algorithm for recognizing \AT-free graphs. To achieve this, we first introduce a helpful relationship between neighborhood-deleted components that share the same vertices across different $\bG_x$.

\begin{proposition}\label{prp:block}
    Let $G = (V, E)$ be an undirected graph. Consider a vertex $x \in V$ and two vertices $a, b \in \bG_x$ such that $a$ and $b$ belong to distinct neighborhood-deleted components of $x$, i.e., $C_{x}(a) \neq C_{x}(b)$. The following conditions are equivalent:
    \begin{enumerate}[(1)]
        \item The neighborhood-deleted component containing $a$ is identical in $\bG_x$ and $\bG_b$, i.e., $C_{x}(a) = C_{b}(a)$.
        \label{it:block:1}
        \item The neighborhood $N(b)$ contains the neighborhood of $C_{x}(a)$. 
        \label{it:block:2}
    \end{enumerate}
\end{proposition}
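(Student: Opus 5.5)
Write $C \coloneqq C_x(a)$ throughout. We prove the two implications separately, both directly from the definitions of $\bG_x$, $\bG_b$ and $N(\cdot)$ for sets. The key preliminary observation is that $N(C) \subseteq N(x)$. Indeed, $C$ is a connected component of $\bG_x = \bG[V\setminus N[x]]$, so every neighbor of $C$ outside $C$ lies in $N[x]$. Moreover $x$ itself has no neighbor in $C$, since $C \cap N[x] = \emptyset$. Hence $N(C) \subseteq N(x)$. We also note that $b \notin C$ (because $C_x(b) \neq C$) and $b \notin N(C)$, since $b \in \bG_x$ and $N(C) \subseteq N(x)$. Consequently $b$ is not adjacent to any vertex of $C$, so $C \cap N[b] = \emptyset$.

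\emph{(2) $\Rightarrow$ (1).} Assume $N(C) \subseteq N(b)$. By the observation above, $C \subseteq V \setminus N[b]$, so $C$ lies in $\bG_b$. Since $C$ is connected in $\bG$, it is connected in $\bG_b$ as well. It remains to show that $C$ is maximal in $\bG_b$. Any vertex $w \notin C$ adjacent to $C$ lies in $N(C) \subseteq N(b)$, so $w \notin \bG_b$. Therefore $C$ is a full component of $\bG_b$, and $C_b(a) = C$.

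\emph{(1) $\Rightarrow$ (2).} Assume $C_b(a) = C$ and take any $w \in N(C)$. Then $w \notin C$ and $w$ is adjacent to some vertex of $C = C_b(a)$. If $w$ were in $\bG_b$, it would belong to the component $C_b(a) = C$, which is a contradiction. Hence $w \in N[b]$. Since $w \in N(C) \subseteq N(x)$ and $b \notin N(x)$, we have $w \neq b$, and therefore $w \in N(b)$.

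The argument is essentially routine. The only point requiring care is the role of the hypothesis $C_x(a)\neq C_x(b)$ together with $b\in\bG_x$: these are exactly what guarantee that $b$ is neither in $C$ nor adjacent to $C$, and that $w\neq b$ in the second implication. That is the step I would double-check.
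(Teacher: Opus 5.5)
Your proof is correct and takes essentially the same route as the paper. Both arguments first show $C_x(a)\cap N[b]=\emptyset$ and use $N(C_x(a))\subseteq N(x)$, then handle the two directions via the boundary vertices of $C_x(a)$; the paper phrases $(1)\Rightarrow(2)$ as a contrapositive, and your explicit check that $w\neq b$ supplies a detail the paper leaves implicit.
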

\begin{proof}
    We first observe that $C_{x}(a) \subseteq C_{b}(a)$ always holds. Since $C_x(a) \neq C_x(b)$, no vertex in $C_x(a)$ is adjacent to $b$; thus, $N[b] \cap C_{x}(a) = \emptyset$ and all paths within $C_x(a)$ remain intact in $\bG_b$.

    Suppose Condition~\ref{it:block:2} fails, i.e., there exists $v \in N(x) \cap N(C_{x}(a))$ such that $v \notin N(b)$. Since $v$ is adjacent to $C_x(a)$ and is not removed in $\bG_b$, we have $v \in C_b(a)$. However, $v \notin C_x(a)$ (as $v \in N(x)$), which implies $C_x(a) \neq C_b(a)$, so Condition~\ref{it:block:1} fails.
    
    Conversely, assume Condition~\ref{it:block:2} holds. Every vertex $v \in N(x) \cap N(C_{x}(a))$ is removed in $\bG_b$ because $v \in N(b)$. Since any path from $C_x(a)$ to the rest of the graph must pass through these boundary vertices in $N(x)$, their removal isolates $C_x(a)$ in $\bG_b$. Therefore, no additional vertices can be reached from $a$ in $\bG_b$, yielding $C_x(a) = C_b(a)$, so Condition~\ref{it:block:1} holds.
\end{proof}

We now prove a quantitative lemma essential for characterizing \AT-free graphs. Let $\mC_x = \{C_{x}(v) \mid v \in \bG_x\}$ denote the set of neighborhood-deleted components of $\bG_x$. For any subset of components $\mH \subseteq \bigcup_{x \in V} \mC_{x}$, this lemma establishes, for every $x \in V$, a lower bound on the number of pairs $(y, C)$ such that $C$ is a member of both $\mC_x \cap \mH$ and $\mC_y \cap \mH$.

\begin{lemma}\label{lem:AT_deg_bound}
    Let $\bG = (V, E)$ be an \AT-free graph. Let $\mC_x = \{C_{x}(v) \mid v \in \bG_x\}$ be the set of connected components in $\bG_x$, and define $\mC \coloneqq \bigcup_{x \in V} \mC_{x}$. 

    For any subset $\mH \subseteq \mC$, let $\mH_x \coloneqq \mC_x \cap \mH$, let $s_x \coloneqq \max_{C \in \mH_x} |C|$ denote the size of the largest component in $\mH_x$, and let $V_x \coloneqq \bigcup_{C \in \mH_x} C$ be the set of all vertices in $\bG_x$ belonging to components in $\mH_x$.
    Then, for every $x \in V$ such that $\mH_x$ is non-empty, we have:
    \[
        |\{(y,C) \in V_x \times \mH_x \mid C \in \mH_y\}| \geq |V_x|^2 / (4s_x) - |V_x|.
    \]
\end{lemma}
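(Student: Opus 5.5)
The plan is to turn the \AT-free condition into a triangle-freeness statement about an auxiliary graph on $V_x$, apply Mantel's theorem, and then convert the resulting count of vertex pairs into a count of pairs $(y,C)$.

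\textbf{Reformulating membership.} Fix $x$ with $\mH_x\neq\emptyset$, write $n' \coloneqq |V_x|$ and $s \coloneqq s_x$, and for $y\in V_x$ let $D(y)\in\mH_x$ be the component of $\bG_x$ containing $y$. Take $y\in V_x$ and $C\in\mH_x$ with $C\neq D(y)$. By \cref{prp:block}, applied with $b=y$ and any $a\in C$, we have $C\in\mC_y$ if and only if $N(C)\subseteq N(y)$. Since $C\in\mH$ already, this is equivalent to $C\in\mH_y$. Call such a pair $(y,C)$ \emph{good}. If instead $C=D(y)$, then $y\in C$ and so $C\notin\mC_y$. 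Hence the quantity in the lemma is exactly the number of good pairs.

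\textbf{Structural claim.} Let $y_1,y_2,y_3\in V_x$ lie in three distinct components $C_1,C_2,C_3$ of $\mH_x$. I claim that at least one of the six pairs $(y_k,C_i)$ with $i\neq k$ is good. Suppose none is good. Then $\{y_1,y_2,y_3\}$ is independent, since the vertices lie in different components of $\bG_x$.

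For each ordered choice of distinct $i,j,k$, pick $u\in N(C_i)\setminus N(y_k)$ and $w\in N(C_j)\setminus N(y_k)$. Both sets are nonempty because the pairs $(y_k,C_i)$ and $(y_k,C_j)$ are not good, and both $u,w$ lie in $N(x)$. Now form the path that goes from $y_i$ inside $C_i$ to a neighbour of $u$, then to $u$, $x$, $w$, and finally inside $C_j$ to $y_j$. This path avoids $N[y_k]$ for three reasons:
\begin{itemize}
\item $C_i$ and $C_j$ are components of $\bG_x$ different from $C_k$, so they contain no neighbour of $y_k$;
\item $x\notin N[y_k]$, because $y_k\in\bG_x$;
\item $u,w\notin N(y_k)$ by choice, and $u,w\neq y_k$ since $u,w\in N(x)$.
\end{itemize}
(If $u=w$, we simply skip $x$.) So $\{y_1,y_2,y_3\}$ would be an \AT, which is a contradiction.

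\textbf{Counting.} Define a graph $\bH$ on $V_x$ in which $y\sim y'$ if and only if $D(y)\neq D(y')$ and both $(y,D(y'))$ and $(y',D(y))$ are not good. A triangle in $\bH$ has its vertices in three distinct components, and all six of its pairs are not good, which contradicts the claim. So $\bH$ is triangle-free, and by Mantel's theorem it has at most $n'^2/4$ edges.

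Let $c_1,\dots,c_t\le s$ be the component sizes, so $\sum_i c_i=n'$. The number of unordered pairs of vertices of $V_x$ lying in distinct components is
\[
  \tfrac12\Bigl(n'^2-\textstyle\sum_i c_i^2\Bigr)\;\ge\;\tfrac12\bigl(n'^2-sn'\bigr).
\]
Hence at least $n'^2/4-sn'/2$ such pairs $\{y,y'\}$ are non-edges of $\bH$. Each of them yields a good pair, namely $(y,D(y'))$ or $(y',D(y))$.

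Conversely, a fixed good pair $(y,C)$ arises from at most $|C|\le s$ vertex pairs $\{y,y'\}$, one for each $y'\in C$. Therefore the number of good pairs is at least
\[
  \frac{n'^2/4-sn'/2}{s}\;=\;\frac{n'^2}{4s}-\frac{n'}{2}\;\ge\;\frac{n'^2}{4s}-n',
\]
which is the claimed bound.

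The main obstacle I expect is the structural claim, specifically checking carefully that the constructed path avoids $N[y_k]$, including the degenerate case $u=w$. The counting step is routine.
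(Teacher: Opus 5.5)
Your proof is correct and follows essentially the same route as the paper: the paper's auxiliary graph $\bH_x$ on $V_x$ is your graph $\bH$, since by \cref{prp:block} your non-good pairs are exactly the conditions $C_x(a)\neq C_b(a)$. You also use the same path through $x$ (via a vertex of $N(C_i)\setminus N(y_k)$) to show triangle-freeness, the same appeal to Mantel's theorem, and the same division by $s_x$. The differences are cosmetic: you translate membership into the neighborhood condition up front and handle the case $u=w$ explicitly. Your count also carries the slightly sharper loss term $-|V_x|/2$, which the paper's derivation also reaches before it weakens the bound to $-|V_x|$.
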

\begin{proof}
    Fix $x \in V$.
    We begin by defining an auxiliary undirected graph $\bH_x$ on the vertex set $V_x$. An edge $\{a,b\}$ exists in $\bH_x$ if and only if $C_x(a) \neq C_x(b)$, $C_x(a) \neq C_{b}(a)$, and $C_x(b) \neq C_{a}(b)$.
    
    \begin{claim}
        The graph $\bH_x$ is triangle-free; otherwise, $\bG$ would contain an \AT.
    \end{claim}
    
    \begin{claimproof}
        Suppose $\bH_x$ contains a triangle on vertices $a,b,c$. We show these vertices form an \AT in $\bG$. Since all edges $\{a,b\}, \{b,c\}, \{c,a\}$ exist in $\bH_x$, we observe:
        \begin{enumerate}
            \item The vertices $a, b,$ and $c$ are pairwise non-adjacent. This is because from the existence of the edges $\{a,b\}, \{b,c\}, \{c,a\}$ in $\bH_x$ follows that $a$, $b$, and $c$ belong to distinct connected components of $\bG_x$.
            \item There exists an $a$-$b$ path in $\bG$ that avoids $N[c]$. Since $x \notin N[c]$ (as $c \in \bG_x$), it suffices to show that there are $a$-$x$ and $b$-$x$ paths avoiding $N(c)$. By symmetry, we focus on the $a$-$x$ path.
            
            Since $(a,c) \in \bH_x$, we have $C_x(a) \neq C_c(a)$. By Proposition~\ref{prp:block}, there exists a vertex $v \in N(x) \cap N(C_x(a))$ such that $v \notin N(c)$. Let $a' \in C_x(a)$ be a neighbor of $v$. An $a$-$x$ path avoiding $N(c)$ is formed by concatenating a path from $a$ to $a'$ within $C_x(a)$, the edge $\{a', v\}$, and the edge $\{v, x\}$.
        \end{enumerate}
        By symmetry, $b$-$c$ and $a$-$c$ paths exist that avoid $N[a]$ and $N[b]$, respectively. Thus, $a,b,c$ is an \AT, contradicting that $\bG$ is \AT-free.
    \end{claimproof}

    By Mantel's Theorem\footnote{Mantel's Theorem states that a triangle-free graph on $n$ vertices has at most $n^2/4$ edges.} \cite{M07}, the graph $\bH_x$ contains at most $|V_x|^2/4$ edges. This bound allows us to establish the following inequality.

    \begin{claim}
        The following inequality holds:
        $|\{(a,b) \in V_x \times V_x \mid C_{x}(a) = C_{b}(a)\}| \geq |V_x|^2 / 4 - |V_x| \cdot s_x$.
    \end{claim}
    
    \begin{claimproof}
        Recall that $s_x = \max_{C \in \mH_x} |C|$. We first note that the number of (unordered) pairs $\{a,b\}$, within the same component of $\bG_x$ is upper bounded by:
        \[
           |\{\{a,b\} \subseteq V_x \mid a \neq b, C_x(a) = C_x(b)\}| = \sum_{C \in \mH_x} \binom{|C|}{2} = \sum_{C \in \mH_x} \left( \frac{|C|^2}{2} - \frac{|C|}{2} \right) \leq  (s_x - 1) \cdot \sum_{C \in \mH_x} \frac{|C|}{2} = s_x \cdot \frac{|V_x|}{2} - \frac{|V_x|}{2}. 
        \]
        Let $m_x$ be the number of edges in $\bH_x$. By the definition of $\bH_x$, an edge $\{a,b\}$ is absent if and only if $C_x(a) = C_x(b)$, $C_x(a) = C_b(a)$, or $C_x(b) = C_a(b)$, i.e., if $\{a,b\}$ is in either of these sets:
        \[
            |\{\{a,b\} \subseteq V_x \mid a \neq b, C_x(a) = C_x(b)\}| \text{ or } |\{\{a,b\} \subseteq V_x \mid C_x(a) = C_b(a) \text{ or } C_x(b) = C_a(b)\}|.
        \]
        Note that these two sets are indeed disjoint: if $C_x(a) = C_x(b)$, then $b \in C_x(a)$. In this case, $C_x(a)$ cannot be a component in $\bG_b$ because the vertex $b$ itself is removed. Similar holds for $C_x(b)$ in $\bG_a$.
        This allows us to lower bound $m_x$ through the number of non-edges as:
        \begin{align*}
            m_x &= \binom{|V_x|}{2} -  |\{\{a,b\} \subseteq V_x \mid a \neq b, C_x(a) = C_x(b)\}|  - |\{\{a,b\} \subseteq V_x \mid C_x(a) = C_b(a) \text{ or } C_x(b) = C_a(b)\}| \\
            &\geq \frac{|V_x|^2}{2} - \frac{|V_x|}{2} - s_x \cdot \frac{|V_x|}{2} + \frac{|V_x|}{2} - |\{\{a,b\} \subseteq V_x \mid C_x(a) = C_b(a) \text{ or } C_x(b) = C_a(b)\}|.
        \end{align*}
        Simplifying the latter inequality, combining it with $m_x \leq |V_x|^2 / 4$, and rearranging, yields that $|\{\{a,b\} \subseteq V_x \mid C_x(a) = C_b(a) \text{ or } C_x(b) = C_a(b)\}| \geq |V_x|^2 / 4 - s_x |V_x|$. The claim follows by noticing that in $|\{(a,b) \in V_x \times V_x \mid C_{x}(a) = C_{b}(a)\}|$ each element $\{a,b\}$ of the latter set is counted at least once as $(a,b)$ or $(b,a)$.
    \end{claimproof}

    Finally, we establish the lower bound for $|\{(y,C) \in V_x \times \mH_x \mid C \in \mH_y\}|$. 
    For any pair $(a,b) \in V_x \times V_x$ such that $C_x(a) = C_b(a)$, the pair $(b, C_x(a))$ is a valid entry in the set $\{(y, C) \in V_x \times \mH_x \mid C \in \mH_y\}$. However, multiple vertices $a$ can belong to the same component $C = C_x(a)$. Specifically, for a fixed $b$ and $C$, there are exactly $|C|$ vertices $a$ such that $C_x(a) = C$. Since $|C| \leq s_x$ for all $C \in \mH_x$, each pair $(b, C)$ is counted at most $s_x$ times in the set of pairs $(a,b)$.
    Therefore, we have:
    \begin{align*}
        |\{(y,C) \in V_x \times \mH_x \mid C \in \mH_y\}| &\geq |\{(a,b) \in V_x \times V_x \mid C_{x}(a) = C_{b}(a)\}| / s_x
        \geq |V_x|^2 / (4s_x) - |V_x|. \qedhere
    \end{align*}
\end{proof}

This brings us to our main structural result for \AT-free graphs.

\atfreesmall

\begin{proof}
    Let $\mC \coloneqq \bigcup_{x \in V} \mC_{x}$.
    We begin by partitioning $\mC$ in $\Oh(\log^2 n)$ sets,
    depending on the size of $C \in \mC$ and in how many $\mC_x$ a component $C$ appears.
    More precisely, for $i,j \in \fragment{0}{\ceil{\log n}}$, we set
    \[
        \mC^{i,j} \coloneqq \{C \in \mC \mid 2^{i} \leq |C| < 2^{i+1}, 2^{j} \leq |\{x \mid C \in \mC_x\}| < 2^{j+1}\}\}
        \quad \text{and} \quad \mC_{x}^{i,j} \coloneqq \{C \in \mC^{i,j} \mid C \in \mC_x\}\text{ for $x \in V$.}
    \]

    Moreover, we introduce the auxiliary bipartite graph $\bF^{i,j}$ with vertices $V \cup \mC^{i,j}$ and an edge $(x,C)$ if $C \in \mC_{x}$. That is, $\mC^{i,j}_x$ is the neighborhood of $x$ in $\bF^{i,j}$. The graph $\bF^{i,j}$ helps us to bound $|\mC^{i,j}|$ as follows.
    
    \begin{claim}
        For every $i,j \in \fragment{0}{\ceil{\log n}}$, sucht that $\mC^{i,j} \neq \emptyset$, we have $|\mC^{i,j}| \leq (2^{-i+6}+2^{-j+5}) \cdot n$.
    \end{claim}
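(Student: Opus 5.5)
The plan is to double count the edges of the bipartite graph $\bF^{i,j}$, applying \cref{lem:AT_deg_bound} with $\mH \coloneqq \mC^{i,j}$. With this choice, $\mH_x = \mC^{i,j}_x$ is exactly the neighborhood of $x$ in $\bF^{i,j}$. Write $d_x \coloneqq |\mC^{i,j}_x|$ for the degree of $x$. Every $C \in \mC^{i,j}$ has degree at least $2^j$ and less than $2^{j+1}$ in $\bF^{i,j}$. Counting edges from the component side therefore gives $2^j |\mC^{i,j}| \le \sum_{x \in V} d_x$. So it suffices to show that every $x$ satisfies $d_x \le 2^{4} + 2^{j-i+5}$. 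Dividing the sum by $2^j$ then yields the claimed bound, in fact with some room to spare.

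To bound $d_x$, fix $x$ with $\mH_x \neq \emptyset$ and take $V_x$ and $s_x$ as in \cref{lem:AT_deg_bound}. The components in $\mH_x$ are pairwise disjoint, since they are distinct components of the same graph $\bG_x$. Each has size in $[2^i, 2^{i+1})$, so $d_x 2^i \le |V_x|$ and $s_x < 2^{i+1}$.

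Next I upper bound the quantity $P_x \coloneqq |\{(y,C) \in V_x \times \mH_x \mid C \in \mH_y\}|$. For each fixed $C \in \mH_x$, the number of valid $y$ is at most the number of $y$ with $C \in \mC_y$, which is the degree of $C$ in $\bF^{i,j}$ and hence less than $2^{j+1}$. Therefore
\[
P_x < d_x 2^{j+1} \le |V_x| \, 2^{j+1-i}.
\]
\cref{lem:AT_deg_bound} gives the matching lower bound $P_x \ge |V_x|^2/(4s_x) - |V_x|$. Combining the two and dividing by $|V_x| > 0$, I get
\[
|V_x| \le 4 s_x \bigl(1 + 2^{j+1-i}\bigr) < 2^{i+3} + 2^{j+4}.
\]
Using $d_x \le |V_x|/2^i$ then gives $d_x < 2^3 + 2^{j-i+4}$. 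If $\mH_x$ is empty, then $d_x = 0$ and the bound holds trivially.

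Summing over all $x \in V$ gives $\sum_x d_x \le n\,(2^3 + 2^{j-i+4})$. Dividing by $2^j$ yields $|\mC^{i,j}| \le (2^{3-j} + 2^{4-i})\, n$, which is at most $(2^{-i+6} + 2^{-j+5})\, n$. The main subtlety, and the step I expect to need care, is how the lower bound of \cref{lem:AT_deg_bound} is combined with the upper bound on $P_x$. The function $|V_x|^2/(4s_x) - |V_x|$ is not monotone for small $|V_x|$, so substituting $|V_x| \ge d_x 2^i$ into it would be invalid. The fix is to express the upper bound in terms of $|V_x|$ as well (via $d_x \le |V_x|/2^i$) and only then divide by $|V_x|$, which avoids any monotonicity argument.
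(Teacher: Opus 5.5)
Your proof is correct, and it takes a more local route than the paper. The paper first rewrites the lemma's lower bound as $D_x \ge \deg(x)^2 2^{i-3} - \deg(x)2^{i+1}$. It does this by using $|V_x|\ge \deg(x)2^i$ in the quadratic term and $|V_x| < \deg(x)2^{i+1}$ in the linear term, so the monotonicity issue you flag does not actually arise there. It then sums this over all $x$ and applies Cauchy--Schwarz, $\sum_x \deg(x)^2 \ge (\sum_x \deg(x))^2/n$. The result is compared with the global upper bound $\sum_x D_x \le \sum_{C} \deg(C)^2 < 2^{2j+2}|\mC^{i,j}|$, and the paper solves the resulting quadratic inequality in $|\mC^{i,j}|$.

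Since $\sum_C \deg(C)^2 = \sum_x \sum_{C \in \mC^{i,j}_x} \deg(C)$, that global upper bound is exactly the sum of your per-vertex bounds $P_x < d_x 2^{j+1}$. You instead match the upper and lower bounds at each $x$ separately. This gives the pointwise bound $d_x < 2^3 + 2^{j-i+4}$ on every $V$-side degree in $\bF^{i,j}$, and a single linear double count then finishes.

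What your route buys: it avoids Cauchy--Schwarz and the quadratic entirely, it yields a stronger intermediate statement (a maximum-degree bound rather than an averaged one), and it improves the constants to $(2^{4-i}+2^{3-j})n$. The paper's averaging argument would only be needed if the upper bound on $D_x$ were available in aggregate but not per vertex, which is not the case here.
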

    \begin{claimproof}
        Fix such $i,j$, and let $\deg(x)$ and $\deg(C)$ denote the degree of $x$ and $C$ in $\bF^{i,j}$, respectively. 

        For every $x \in V$, let $V_x \coloneqq \bigcup_{C \in \mC^{i,j}_x} C$ and $D_x \coloneqq |\{(y,C) \in V_x \times \mC^{i,j}_x \mid C \in \mC^{i,j}_y\}|$. If $\deg(x)>0$, we apply \cref{lem:AT_deg_bound} to $\bG$ with $\mH=\mC^{i,j}$ and obtain
        \begin{center}
            $D_x \geq |V_x|^2/(4s_x) - |V_x|$, where $s_x \coloneqq \max_{C \in \mC^{i,j}_x} |C|$.
        \end{center}
        The number $D_x$ counts a subset of the walks of length two from vertex $x$ to some vertex $y$ through a component $C \in \mC^{i,j}_x \cap \mC^{i,j}_y$ in the graph $\bF^{i,j}$.
        Since every $C \in \mC_{x}^{i,j}$ satisfies $2^{i} \leq |C| < 2^{i+1}$, for $\deg(x)>0$ we have:
        \begin{align}
            D_{x}
            \geq |V_x|^2/ (4s_x) - |V_x|
            \geq (\deg(x) \cdot 2^{i})^2 / (4 \cdot 2^{i+1}) -  \deg(x) \cdot 2^{i+1}
            = \deg(x)^2 \cdot 2^{i-3} - \deg(x) \cdot 2^{i+1}.
            \label{eq:bound_cc_at:0} 
        \end{align}
        If $\deg(x)=0$, then $V_x=\emptyset$ and $D_x=0$, so \cref{eq:bound_cc_at:0} holds in this case as well.
        We can use this last inequality in the following chain of inequalities:
        \begin{align}
            2 \cdot |\mC^{i,j}| \cdot 2^{2j+2} 
            &> 2 \cdot \textstyle\sum_{C \in \mC^{i,j}} \deg(C)^{2}
            \label{eq:bound_cc_at:1} \\
            &\geq \textstyle\sum_{x \in V} D_x 
            \label{eq:bound_cc_at:2} \\
            &\geq \textstyle\sum_{x \in V} \left( \deg(x)^2 \cdot 2^{i-3} - \deg(x) \cdot 2^{i+1}\right) 
            \label{eq:bound_cc_at:3} \\
            &\geq 2^{i-3} /n \cdot \big( \textstyle\sum_{C \in \mC^{i,j}} \deg(C) \big)^2 - 2^{i+1} \cdot \textstyle\sum_{C \in \mC^{i,j}} \deg(C) 
            \label{eq:bound_cc_at:4} \\
            &> 2^{i-3+2j}/n \cdot |\mC^{i,j}|^2 - 2^{i+j+2} \cdot |\mC^{i,j}|, \label{eq:bound_cc_at:5}
        \end{align}
        where we use:
        \begin{description}
            \item[\cref{eq:bound_cc_at:1}:] By the definition of $\mC^{i,j}$, we have $\deg(C) < 2^{j+1}$ for any $C \in \mC^{i,j}$.
            \item[\cref{eq:bound_cc_at:2}:] The expression $\sum_{C \in \mC^{i,j}} \deg(C)^2$ counts all length-two walks in $\bF^{i,j}$ whose endpoints lie in $V$ and whose middle vertex lies in $\mC^{i,j}$, while $\sum_{x \in V} D_x$ counts a subset of these walks.
            \item[\cref{eq:bound_cc_at:3}:] \cref{eq:bound_cc_at:0}.
            \item[\cref{eq:bound_cc_at:4}:] Cauchy–Schwarz inequality on a length-$n$ vector that contains $\deg(x)$ in the entry indexed by $x \in V$ and an all-one vector of the same length. We also use that $\sum_{x \in V} \deg(x) = \sum_{C \in \mC^{i,j}} \deg(C)$.
            \item[\cref{eq:bound_cc_at:5}:] By the definition of $\mC^{i,j}$, we have $2^{j} \leq \deg(C) < 2^{j+1}$ for any $C \in \mC^{i,j}$.
        \end{description}
        
        The previous chain of inequalities yields immediately the claim:
        \[
            |\mC^{i,j}| < (2^{2j+2}+2^{i+1+j}) \cdot n / 2^{i-3+2j} = (2^{-i+6}+2^{-j+5}) \cdot n. \claimqedhere
        \]
    \end{claimproof}

    Since $\{\mC^{i,j}\}_{i,j}$ partitions $\mC$, we conclude that:
    \begin{align*}
        |\mC| &\leq \sum_{i=0}^{\lceil\log n\rceil} \sum_{j=0}^{\lceil\log n\rceil} |\mC^{i,j}| \leq n \cdot \Big( \sum_{i=0}^{\lceil\log n\rceil} \sum_{j=0}^{\lceil\log n\rceil} 2^{-i+6} + \sum_{j=0}^{\lceil\log n\rceil} \sum_{i=0}^{\lceil\log n\rceil} 2^{-j+5} \Big) \leq n \cdot \lceil\log n\rceil \cdot (2^{7}+2^{6}) \leq 400 n\log n, 
    \end{align*} 
    as desired.
\end{proof}

Finally, we can show how to determine in time $\Ohtilde(n^{\omega})$ whether a graph is \AT-free by exploiting the structural result on \AT-free graphs that we just established.

\algatfree

\begin{proof}
    Given an input graph $\bG = (V, E)$, our algorithm proceeds as follows:
    \begin{enumerate}[(i)]
        \item First, we apply \cref{lem:ndcc_to_bmm} to identify the collection $\mC$ of all neighborhood-deleted components. Formally, let $\mC_x = \{C_{x}(v) \mid v \in V \setminus N[x]\}$ denote the set of connected components in $\bG_x$. Then, $\mC = \bigcup_{x \in V} \mC_{x}$. 
        \label{it:alg_atfree:i}
        \item Via \cref{lem:fingerprints}, we assign each component $C \in \mC$ a unique identifier $h(C)$, which allows us to count the number of distinct components in $\Ohtilde(n^2)$ time. If $|\mC| > 400n \log n$, we terminate the process and report that $\bG$ is not \AT-free.
        \label{it:alg_atfree:ii}
        \item Otherwise, for each distinct $C \in \mC$, we fix an arbitrary \emph{representative} vertex $r_C \in V$ such that $C \in \mC_{r_C}$. We then construct a triangle detection instance on a graph $\bT$ with $\Oh(n \log n)$ vertices defined as follows:
        \begin{itemize}
            \item The vertex set of $\bT$ is $\mC$.
            \item For every non-edge $\{a, b\} \notin E$, we add an edge between $C_{a}(b)$ and $C_{b}(a)$ in $\bT$ if and only if $r_{C_{a}(b)} = a$ and $r_{C_{b}(a)} = b$.
        \end{itemize}
        Finally, we report that $\bG$ is \AT-free if and only if $\bT$ is triangle-free.
        \label{it:alg_atfree:iii}
    \end{enumerate}

    \subparagraph*{Correctness.} 
    If the algorithm terminates in Step~\ref{it:alg_atfree:ii}, correctness follows directly from \cref{thm:atfree_small}.
    If this is not the case, correctness follows from this claim.

    \begin{claim}
        $\bG$ is \AT-free if and only if $\bT$ is triangle-free.
    \end{claim}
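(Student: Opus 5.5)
The plan is to first characterize an \AT in terms of neighborhood-deleted components, and then use \cref{prp:block} to move each relevant component to its fixed representative. The characterization is as follows: an independent triple $\{x,y,z\}$ is an \AT if and only if $C_z(x)=C_z(y)$, $C_x(y)=C_x(z)$ and $C_y(x)=C_y(z)$. Indeed, an $x$-$y$ path avoiding $N[z]$ is exactly a path inside $\bG_z$. Independence guarantees $x,y\in \bG_z$, so these components are defined.

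\textbf{(AT $\Rightarrow$ triangle).} Suppose $\{x,y,z\}$ is an \AT. Set $C_1 \coloneqq C_x(y)=C_x(z)$, $C_2 \coloneqq C_y(x)=C_y(z)$ and $C_3 \coloneqq C_z(x)=C_z(y)$. Let $r_1=r_{C_1}$. We want a non-edge $\{r_1,r_2\}$ with $C_{r_1}(r_2)=C_2$ and $C_{r_2}(r_1)=C_1$, and similarly for the other two pairs.

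The key step is that $r_1\in C_2\cap C_3$. To see this, note first that $C_1\in\mC_{r_1}$, so $C_1$ is a component of $\bG_{r_1}$. The vertex $x$ is adjacent to no vertex of $C_1$ and has $N(C_1)\subseteq N(x)$. Now suppose $r_1\notin C_2=C_y(x)$. Then $x$ and $r_1$ would be in different components of $\bG_y$ (or $r_1\in N[y]$, but $r_1\notin N[y]$ since $y\in C_1$ lies in $\bG_{r_1}$). Rather than attempting a direct contradiction from this, I will argue symmetrically via \cref{prp:block}.

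The real identity to establish is that $C_1$ is a component of $\bG_{r_1}$ and $C_2$ is a component of $\bG_{r_2}$, with $r_2\in C_1$ and $r_1\in C_2$. Since $r_1\notin C_1$ and $N[r_1]\cap C_1=\emptyset$, while $N(C_1)\subseteq N(r_1)$, the vertex $r_1$ is nonadjacent to both $y$ and $z$.

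Next I compare $r_1$ with $x$ and use the path condition. Take an $x$-$y$ path avoiding $N[z]$. Its interior vertices near $y$ leave $C_1$ only through $N(C_1)\subseteq N(r_1)$. The cleanest route is to show that $\{r_1,y,z\}$ is again an \AT; replacing vertices one at a time then turns $x,y,z$ into $r_1,r_2,r_3$. This replacement lemma is: if $\{x,y,z\}$ is an \AT and $C_x(y)\in\mC_{r}$, then $\{r,y,z\}$ is an \AT with $C_r(y)=C_x(y)$. Its proof runs as follows.
\begin{itemize}
\item The $y$-$z$ path avoiding $N[r]$ is any path inside $C_1=C_r(y)$.
\item For the $r$-$y$ path avoiding $N[z]$: $N(C_1)\subseteq N(x)\cap N(r)$, and the $x$-$y$ path in $\bG_z$ exits $C_1$ through some $v\in N(C_1)\setminus N[z]$. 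Then $r$-$v$ followed by the subpath from $v$ to $y$ avoids $N[z]$. This step needs $r\notin N[z]$, which holds because $z\in C_1\subseteq \bG_r$.
\item The $r$-$z$ path is symmetric.
\end{itemize}
Applying the lemma three times gives an \AT $\{r_1,r_2,r_3\}$ whose components $C_{r_i}(\cdot)$ equal $C_1,C_2,C_3$. This yields the three non-edges $\{r_i,r_j\}$, each satisfying the representative condition, and hence a triangle in $\bT$.

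\textbf{(Triangle $\Rightarrow$ AT).} Suppose $\bT$ has a triangle $C_1,C_2,C_3$. Every edge is created at the representatives, so the edge $C_1C_2$ comes from the non-edge $\{r_1,r_2\}$ with $C_{r_1}(r_2)=C_1$ and $C_{r_2}(r_1)=C_2$, and likewise for the other two edges. Hence $r_2,r_3\in C_1\subseteq\bG_{r_1}$ lie in one component of $\bG_{r_1}$, and cyclically. The vertices are pairwise non-adjacent, so $\{r_1,r_2,r_3\}$ is an independent triple with the required paths, i.e., an \AT. The same argument also shows the $r_i$ are distinct.

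The main obstacle is the replacement lemma, specifically producing the $r$-$y$ path avoiding $N[z]$. It relies on $N(C_1)\subseteq N(r)$, which holds because $C_1$ is a full component of $\bG_r$. I would check this step carefully: a path from $x$ into $C_1$ must have a last vertex outside $C_1$, and that vertex lies in $N(C_1)$.
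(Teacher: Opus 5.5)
Your proof is correct and follows the paper's argument. The triangle-to-\AT direction reads the \AT directly off the representatives, and the \AT-to-triangle direction replaces $x,y,z$ one at a time by $r_{C_1},r_{C_2},r_{C_3}$, preserving both the \AT property and the three components; the only difference is that you get the $r$-$y$ path from the fact that $N(C_1)\subseteq N(r)$ whenever $C_1$ is a full component of $\bG_r$, whereas the paper splits on $r_A\in N(a)$ versus $r_A\notin N(a)$ and invokes \cref{prp:block} in the second case (your abandoned remarks about $r_1\in C_2\cap C_3$ and about using \cref{prp:block} can simply be deleted).
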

    \begin{claimproof}
        To prove the forward direction, suppose $\bT$ contains a triangle with vertices $\{A, B, C\}$. We claim that their representatives $r_A, r_B, r_C$ form an \AT in $\bG$.
        Since $\{A, B\}$, $\{B, C\}$, and $\{C, A\}$ are edges in $\bT$, it follows by construction that $\{r_A, r_B\}$, $\{r_B, r_C\}$, and $\{r_C, r_A\}$ are non-edges in $\bG$. Furthermore, since $\{A, B\}$ and $\{C, A\}$ are edges in $\bT$, we must have $C_{r_A}(r_B) = C_{r_A}(r_C) = A$. This implies the existence of an $r_B$-$r_C$ path in $\bG$ that avoids the neighborhood $N[r_A]$. By symmetry, there exist $r_A$-$r_C$ and $r_A$-$r_B$ paths that avoid $N[r_B]$ and $N[r_C]$, respectively.

        Conversely, assume that $\{a, b, c\}$ is an \AT in $\bG$. 
        Let $A \coloneqq C_a(b) = C_a(c)$, $B \coloneqq C_b(a) = C_b(c)$, and $C \coloneqq C_c(a) = C_c(b)$. Note that if $\{r_A, r_B, r_C\} = \{a, b, c\}$, then $r_A, r_B, r_C$ is an \AT: the previously established argument for the forward direction applies conversely to show that $A, B, C$ forms a triangle in $\bT$.

        Thus, it suffices to show that $r_A, b, c$ remains an \AT in $\bG$. Note that $r_A$ must be distinct from $b,c$, since $b,c$ are vertices in $\bG_{r_A}$. If we can demonstrate that the property of being an \AT is preserved when replacing $a$ with the corresponding representative $r_A$, then by symmetry, replacing $b$ with $r_B$ ensures that $r_A, r_B, c$ is an \AT. Applying this logic once more implies that $r_A, r_B, r_C$ is an \AT, and thus we get the existence of the triangle $A,B,C$ in $\bT$.

        Assume $r_A \neq a$, as the case $r_A = a$ is trivial. 
        Since $A \in \mC_a$ and $A \in \mC_{r_A}$, we observe that $N(r_A) \cap A = \emptyset$; otherwise, $A$ could not be a connected component in $\bG_{r_A}$. Because $b, c \in A$, this immediately implies that $r_A$ is non-adjacent to both $b$ and $c$.
        
        Furthermore, since $A$ is a component in $\bG \setminus N[r_A]$, there exists a $b$-$c$ path that avoids $N[r_A]$. To prove that $\{r_A, b, c\}$ is an \AT, it remains to show that there exists a $b$-$r_A$ path avoiding $N[c]$ and, by symmetry, a $c$-$r_A$ path avoiding $N[b]$.
        As $a, b, c$ is an \AT, there already exists a $b$-$a$ path $\pi$ that avoids $N[c]$. Let $a'$ be the neighbor of $a$ on $\pi$. Without loss of generality, we can assume $a' \in N(a) \cap N(A)$. We now consider two cases for the construction of the $b$-$r_A$ path:
        \begin{enumerate}
            \item If $r_A \in N(a)$, we concatenate $\pi$ with the edge $\{a, r_A\}$. Since $r_A \notin N(c)$, this path avoids $N[c]$.
            
            \item If $r_A \notin N(a)$, we note that $r_A \notin A$ (otherwise $A$ would not be a component in $\bG \setminus N[r_A]$). By \cref{prp:block} applied to $a$, $r_A$, and any $v \in A$, it follows that every vertex $v \in N(a) \cap N(A)$ must also be a neighbor of $r_A$. In particular, $a' \in N(r_A)$. We can therefore construct a valid path by concatenating the $b$-$a'$ subpath of $\pi$ (which avoids $N[c]$) with the edge $\{a', r_A\}$. \claimqedhere
        \end{enumerate}
    \end{claimproof}

    \subparagraph*{Runtime.}
    Step~\ref{it:alg_atfree:i} and Step~\ref{it:alg_atfree:iii} take time $\Ohtilde(n^{\omega})$. On the other hand, Step~\ref{it:alg_atfree:ii} only takes time $\Ohtilde(n^{2})$.
\end{proof}

\subsection{Listing Dominating Pairs in Matrix Multiplication Time}

\dompairs

\begin{proof}
   Let $\bG = (V,E)$ be an undirected graph.

    We begin by observing that we may assume each $\bG_v$ contains at least one node. Indeed, if this were not the case, then $N[v] = V$. In such an instance, $v$ can be removed from $\bG$ because it is already known that $v$ is adjacent to every vertex (and path) in $\bG$. Moreover, in this case we know that $u, v$ constitutes a dominating pair for all $u \in V \setminus \{v\}$.

    Now, for each pair of distinct nodes $u,v \in V$, we must determine whether $u,v$ constitutes a dominating pair. To this end, we observe that $u,v$ is a dominating pair if and only if for each node $w \notin N[u] \cup N[v]$, we have $C_w(u) \neq C_w(v)$. This condition follows because $C_w(u) \neq C_w(v)$ indicates that there is no $u$-$v$ path that avoids the neighborhood of $w$. 
    Conversely, for any node $w \in N[u] \cup N[v]$, every $u$-$v$ path trivially intersects the closed neighborhood of $w$, as either $u$ or $v$ lies in $N[w]$; and for any node $w \notin N[u] \cup N[v]$, if we have $C_w(u) \neq C_w(v)$, then any $u$-$v$ path must intersect $N(w)$.

    \subparagraph*{Our algorithm.}
    Our algorithm, rather than listing all dominating pairs, will list all non-dominating pairs, by trying to find for each $u,v$ a \emph{witness} node $w \notin N[u] \cup N[v]$ such that $C_w(u) = C_w(v)$. Specifically, the algorithm proceeds as follows:
    \begin{enumerate}[(i)]
        \item First, we use \cref{lem:ndcc_to_bmm} to find all neighborhood-deleted components. For each $v \in V$, let $\mC_v$ be the connected components in $\bG_v$ and set $\mC \coloneqq \bigcup_{v \in V} \mC_v$. Similarly to \cref{them:alg_atfree}, we use \cref{lem:fingerprints}, to compute the count $|\{v \mid C \in \mC_v\}|$ of the number of times each connected component $C$ appears across the graphs $\bG_v$.
        \label{alg:dompairs:i}
        \item For each $v \in V$, let $M_v$ denote the largest connected component in $\mC_v$, i.e., $M_v \coloneqq \argmax_{C \in \mC_v} |C|$. If multiple components share the same maximum size, we fix one arbitrarily. We then construct a Boolean $n \times n$ matrix $A$, indexed by the vertices in $V$, defined as follows:
        \[
            A[u,w] = 
            \begin{cases} 
                1 & \text{if } u \in M_w \\
                0 & \text{otherwise} 
            \end{cases}
        \]
        By computing the BMM product $A A^{\top}$, we identify and mark as non-dominating all pairs $u,v$ for which $(A A^{\top})[u,v] = 1$. That is we mark all pairs that are witnessed by a component in $\{M_v\}_{v \in V}$.
        \label{alg:dompairs:ii}
        \item Next, for each $C \in \mC$, we do the following. We let $x_C \in V$ be the node with $C \in \mC_{x_C}$ that minimizes the number of nodes in $\bG_{x_C}$, i.e., 
        \[
            x_C \coloneqq \argmin_{w \in V \mid C \in \mC_w} |V(\bG_w)| = \argmax_{w \in V \mid C \in \mC_w} |N[w]|.
        \]
        If $C$ appears in less than $|V(\bG_{x_C})|/2 = (n - |N[x_C]|)/2$ graphs $\bG_w$, i.e., if $|\{w \in V \mid C \in \mC_w\}| < (n - |N[x_C]|)/2$, we do nothing. Otherwise, if $|\{w \in V \mid C \in \mC_w\}| \geq (n - |N[x_C]|)/2$ we iterate over all pairs of nodes $u,v \in C$ and mark them as non-dominating pairs.
        \label{alg:dompairs:iii}
        \item Return as dominating pairs all pairs $u,v$ that have not been marked as non-dominating in either Step~\ref{alg:dompairs:ii} or Step~\ref{alg:dompairs:iii}. 
        \label{alg:dompairs:iv}
    \end{enumerate}

    \subparagraph*{Correctness.}
    We show that for each non-dominating pair $u,v$, we identify at least one witness $w$ and mark $u,v$ as non-dominating, either in Step~\ref{alg:dompairs:ii} or in Step~\ref{alg:dompairs:iii}.
    
    To this end fix a non-dominating pair $u, v$. 
    For each witness $w$ of $u,v$, let $C_w(u)=C_w(v)$ be the component containing both vertices. Choose $w^*$ among these witnesses to maximize $|C_w(u)|$, and set $C \coloneqq C_{w^*}(u)=C_{w^*}(v)$.
    We begin by observing that $(A A^{\top})[u,v] = 1$ if $C = M_{w^*}$. Consequently, if $C = M_{w^*}$, the pair $u,v$ is marked as non-dominating in Step~\ref{alg:dompairs:ii} and we may henceforth assume that $C \neq M_{w^*}$.

    The following claim establishes that $u,v$ is marked in Step~\ref{alg:dompairs:iii}.

    \begin{claim}\label{clm:dompairs}
        The number of $w \in V$ such that $C \in \mC_w$ is at least $|V(\bG_{w^*})|/2 = (n - |N[w^*]|)/2$, i.e., $|\{w \in V \mid C \in \mC_w\}| \geq (n - |N[w^*]|)/2$.
        In particular, because $n - |N[w^*]| \geq n - |N[x_C]|$ we have that $u,v$ is marked as a non-dominating pair in Step~\ref{alg:dompairs:iii}.
    \end{claim}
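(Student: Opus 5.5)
The plan is to show that every vertex of $\bG_{w^*}$ lying outside $C$ also has $C$ as one of its neighborhood-deleted components. The bound then follows because $C$ is not the largest component of $\bG_{w^*}$, so it occupies at most half of $V(\bG_{w^*})$.

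\emph{Step 1: $C_y(u)$ contains $C$.} Fix any $y \in V(\bG_{w^*}) \setminus C$. Since $y$ lies in $\bG_{w^*}$ but in a component different from $C = C_{w^*}(u)$, no vertex of $C$ is adjacent to $y$. Hence $C \cap N[y] = \emptyset$. In particular $y \notin N[u] \cup N[v]$, and every path inside $C$ survives in $\bG_y$. So $u$ and $v$ lie in the same component of $\bG_y$, which means $y$ is itself a witness of the non-dominating pair $u,v$, and $C \subseteq C_y(u)$. (This is the observation $C_x(a) \subseteq C_b(a)$ from the proof of \cref{prp:block}.)

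\emph{Step 2: maximality forces equality.} By the choice of $w^*$, every witness $y$ satisfies $|C_y(u)| \le |C|$. Combined with $C \subseteq C_y(u)$ from Step~1, this gives $C_y(u) = C$, so $C \in \mC_y$. Alternatively, one can argue via \cref{prp:block}: if $N(C) \not\subseteq N(y)$, then $C_y(u)$ would contain a vertex of $N(w^*) \cap N(C)$ and be strictly larger than $C$, a contradiction. The same holds trivially for $y = w^*$. Hence
\[
|\{w \in V \mid C \in \mC_w\}| \;\ge\; |V(\bG_{w^*})| - |C| + 1 .
\]

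\emph{Step 3: counting.} We have $C \neq M_{w^*}$, and $M_{w^*}$ is a largest component of $\bG_{w^*}$. So $C$ and $M_{w^*}$ are disjoint subsets of $V(\bG_{w^*})$ with $|C| \le |M_{w^*}|$, which gives $|C| \le |V(\bG_{w^*})|/2$. Substituting into the bound of Step~2 yields
\[
|\{w \in V \mid C \in \mC_w\}| \;\ge\; |V(\bG_{w^*})|/2 = (n - |N[w^*]|)/2 .
\]
For the ``in particular'' part, $x_C$ maximizes $|N[w]|$ over all $w$ with $C \in \mC_w$, and $w^*$ is one such $w$. Hence $n - |N[w^*]| \ge n - |N[x_C]|$, the threshold test in Step~\ref{alg:dompairs:iii} passes, and the pair $u,v \in C$ is marked.

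The only delicate point is Step~2, namely checking that each $y$ is a genuine witness so that the maximal choice of $w^*$ applies. This reduces to verifying $y \notin N[u] \cup N[v]$, which is exactly what Step~1 establishes.
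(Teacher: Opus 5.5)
Your proof is correct and follows the paper's argument essentially step for step. Every vertex of $\bG_{w^*}$ outside $C$ avoids $N[u]\cup N[v]$ and so is a witness whose component contains $C$; the maximal choice of $w^*$ forces that component to equal $C$; and $C \neq M_{w^*}$ gives $|C| \le |V(\bG_{w^*})|/2$. Your derivation of this last bound from the disjointness of $C$ and $M_{w^*}$ is slightly more explicit than the paper's, and counting $w^*$ itself is a harmless extra.
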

    \begin{claimproof}
        Given our assumption that $C \neq M_{w^*}$, we know that $|C| \leq (n - |N[w^*]|) / 2$ (otherwise, $C$ would be $M_{w^*}$). Consequently, the number of nodes in $\bG_{w^*}$ that are not contained in $C$ is at least 
        \[
            |\{w \in \bG_{w^*} \mid w \notin C\}| = (n - |N[w^*]|) - |C| \geq (n - |N[w^*]|) / 2.
        \]
        To conclude, it suffices to show that for each $w \in V \setminus (N[w^*] \cup C)$ we have $C \in \mC_w$.

        To this end, let $w \in \bG_{w^*}$ be an arbitrary node with $w \notin C$. Since $w \notin C$, we have that $N[w] \cap C = \emptyset$. Because of this, each pair of nodes in $C$ remains connected in $\bG_w$. This implies that $w$ witnesses $u,v$ and $C' \coloneqq C_w(u) = C_w(v)$ satisfies $C \subseteq C'$.
        However, by our choice of $w^*$, we must have that $|C'| \leq |C|$, which means that $C' = C$. We conclude that $C \in \mC_w$, as desired.
    \end{claimproof}

    \subparagraph*{Runtime.}
    Clearly, Step~\ref{alg:dompairs:i} and Step~\ref{alg:dompairs:ii} run in time $\Ohtilde(n^{\omega})$. 
    
    On the other hand, in Step~\ref{alg:dompairs:iii}, we first find $x_C$ for each $C \in \mC$ and verify whether to list all witnessed pairs in $C$. It is not difficult to see that this can be implemented in time $\Oh(\sum_{v \in V} |\mC_v|) \leq \Oh(\sum_{v \in V} |V \setminus N[v]|) \leq \Oh(n^2)$.
    Let $S \subseteq \mC$ be the set of all $C \in \mC$ for which we list all pairs $u,v \in C$. For each $C \in S$ we spend time $\Oh(|C|^2)$. Over all elements $C \in S$ this becomes:
    \begin{align*}
         \sum_{C \in S} |C|^2 &\leq \sum_{C \in S} |C| \cdot |V \setminus N[x_C]| && (C \in \mC_{x_C})\\
         &\leq 2 \sum_{C \in S} |C| \cdot |\{w \in V \mid C \in \mC_w\}| && (|\{w \in V \mid C \in \mC_w\}| \geq |V \setminus N[x_C]| / 2 \text{ holds for all } C \in S) \\
         &\leq 2 \sum_{C \in \mC} |C| \cdot |\{w \in V \mid C \in \mC_w\}| && (S \subseteq \mC) \\
         &= 2 \sum_{v \in V} |V \setminus N[v]| && (\text{double counting argument}) \\
         &\leq 2n^2,
    \end{align*}
    which concludes the proof.
\end{proof}

\subsection{Listing Two-Pairs in Matrix Multiplication Time}

We begin with the observation by Arikati and Pandu Rangan \cite{AP1991}, which characterizes two-pairs based on neighborhood-deleted components. For the sake of completeness, we also include a proof of it.

\begin{figure}[t]
    \centering
    \scalebox{0.65}{\usetikzlibrary{decorations.pathmorphing,calc}

\begin{tikzpicture}[
    scale=1,
    every node/.style={font=\small},
    vertex/.style={circle, draw, inner sep=1.2pt, minimum size=16pt},
    pathstyle/.style={decorate, decoration={snake, amplitude=0.8mm, segment length=3mm}}
]

% Outer neighbourhood N[v]
\draw[thick] (0,0) circle (3);
\node at (2.25,2.5) {$N[v]$};

% Vertex v
\node[vertex] (v) at (-1.7,0) {$v$};

% Region N(C)
\def\NCx{0.8}
\def\NCy{0.5}
\def\NCr{1.45}
\fill[blue!15] (\NCx,\NCy) circle (\NCr);
\draw[blue!60!black, thick] (\NCx,\NCy) circle (\NCr);
\node[blue!60!black] at (-0.25,2) {$N(C)$};

% Region N(u)
\def\NUx{0.65}
\def\NUy{-0.1}
\def\NUr{0.75}
\fill[green!20] (\NUx,\NUy) circle (\NUr);
\draw[green!50!black, thick] (\NUx,\NUy) circle (\NUr);
\node[green!50!black] at (0.55,-1.25) {$N(u)$};

% Vertex w in N(C)\N(u)
\node[vertex] (w) at (0.45,1.15) {$w$};

% Edge v-w
\draw[thick] (v) -- (w);

% Component C
\draw[thick] (6,0) ellipse (2.8 and 1.8);
\node at (8.45,1.35) {$C$};

% Vertex u inside C
\node[vertex] (u) at (5.5,0.1) {$u$};

% Boundary points of C
\coordinate (Ctop) at (3.78,1.10);
\coordinate (Cbot) at (3.58,-0.90);

% Attachment points on N(C)
% Top point chosen on the border of the N(C) circle, a bit lower/right
\coordinate (NCtop) at ($(\NCx,\NCy)+(55:\NCr)$);
\coordinate (NCbot) at (\NCx,\NCy-\NCr);

% Top and bottom of N(u)
\coordinate (NUtop) at (\NUx,\NUy+\NUr);
\coordinate (NUbot) at (\NUx,\NUy-\NUr);

% Blue connector lines: C to N(C)
\draw[blue!60!black, thick] (Ctop) -- (NCtop);
\draw[blue!60!black, thick] (Cbot) -- (NCbot);

% Green connector lines: u to N(u)
\draw[green!50!black, thick] (u) -- (NUtop);
\draw[green!50!black, thick] (u) -- (NUbot);

% Wiggly path inside C, then one final straight edge to w
\coordinate (p3) at (3.8,0.6); % point on the boundary of C

\draw[pathstyle, thick] (u) -- (p3);
\draw[thick] (p3) -- (w);

\end{tikzpicture}}
    \caption{Criterion for two vertices to form a two-pair}
    \label{fig:criterion}
\end{figure}

\begin{proposition}[Lemma 2 of \cite{AP1991}]\label{prp:char_twopair}
    Let $\bG = (V,E)$ be a graph. Then, for any two distinct non-neighboring vertices $u,v$ the following holds: $u,v$ is a two-pair if and only if $N(u) \cap N(v) = N(C_v(u))$.
\end{proposition}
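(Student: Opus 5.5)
We prove the two directions separately. Throughout, fix distinct non-adjacent $u,v$ and write $C \coloneqq C_v(u)$. We first record the easy inclusion $N(u)\cap N(v) \subseteq N(C)$: every common neighbor $w$ lies in $N(v)$, hence outside $\bG_v$ and so outside $C$, and it is adjacent to $u \in C$. Also $N(C) \subseteq N(v)$, since any neighbor of $C$ outside $C$ must have been deleted when forming $\bG_v$, i.e., lies in $N[v]$. It cannot be $v$ itself, because $v$ has no neighbors in $\bG_v$. The statement therefore reduces to showing that $u,v$ is a two-pair if and only if $N(C) \subseteq N(u)$.

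\textbf{($\Leftarrow$).} Assume $N(C)\subseteq N(u)$, and let $\pi$ be an induced $u$-$v$ path. We traverse $\pi$ from $u$ and let $w$ be the first vertex not in $C$. Such a vertex exists because $v\notin C$. Its predecessor lies in $C$, so $w\in N(C)\subseteq N(u)$. Since $\pi$ is induced and $u$ is adjacent to $w$, the vertex $w$ must be the second vertex of $\pi$. Moreover $w\in N(C)\subseteq N(v)$. Since $\pi$ is induced and $w$ is adjacent to $v$, the next vertex after $w$ is $v$. Hence $\pi$ has length two. Finally, at least one induced path exists: $u,v$ are connected through $C$ and $N(C)\neq\emptyset$ (this is the connectivity caveat discussed below). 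Therefore $u,v$ is a two-pair.

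\textbf{($\Rightarrow$).} We prove the contrapositive. Suppose some $w\in N(C)\setminus N(u)$ exists. Choose $c\in C$ adjacent to $w$, and take a shortest $u$-$c$ path $P$ inside $\bG[C]$. Among the vertices of $P$ adjacent to $w$, take the first one, $c'$, and let $P'$ be the subpath of $P$ from $u$ to $c'$. Consider the walk $P'$, then the edge to $w$, then the edge to $v$. We check that it is an induced path of length at least three:
\begin{itemize}
\item $P'$ is a shortest path in $\bG[C]$, so it is induced.
\item No vertex of $P'$ is adjacent to $v$, because $C\subseteq V\setminus N[v]$.
\item The only vertex of $P'$ adjacent to $w$ is $c'$, by the choice of $c'$ as the first such vertex.
\item $c'\neq u$, because $w\notin N(u)$; hence $P'$ has at least one edge, and the whole path has length at least three.
\end{itemize}
This gives an induced $u$-$v$ path that does not have length two, so $u,v$ is not a two-pair.

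\textbf{Main obstacle.} The real content lies in two checks. The first is the chordlessness argument in ($\Rightarrow$), which the choice of $c'$ handles. The second is the degenerate situation where $C$ has no outside neighbors, i.e., $u$ and $v$ lie in different connected components of $\bG$. In that case $N(C)=\emptyset=N(u)\cap N(v)$, but there are no $u$-$v$ paths at all, so "two-pair" holds only vacuously. We would adopt the convention (implicit in \cite{AP1991}) that $\bG$ is connected, or else that a two-pair vacuously has all induced paths of length two. With this convention the equivalence holds in all cases.
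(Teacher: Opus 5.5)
Your proof is correct and follows the paper's argument. Both reduce the statement to whether $N(C)\setminus N(u)\neq\emptyset$. For one direction, both build a chordless path through $C$ to some $w\in N(C)\setminus N(u)$ and then to $v$; the paper takes a shortest $u$-$w$ path in $\bG[C\cup\{w\}]$, which does the same job as your truncation at the first neighbor of $w$. For the other direction, both look at the vertex where an induced $u$-$v$ path leaves $C$, which is exactly the neighbor of $v$ the paper uses.

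Your connectivity caveat is unnecessary. Under the paper's definition (all induced $u$-$v$ paths have length two), the disconnected case is vacuously a two-pair and the equation reads $\emptyset=\emptyset$, so you can drop the sentence about an induced path existing in the $(\Leftarrow)$ direction.
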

\begin{proof}
    Let us abbreviate $C \coloneqq C_v(u)$. By definition of $C$, we always have that $N(u) \cap N(v) \subseteq N(C)$ and $N(C) \subseteq N(v)$. 
    
    Suppose first that $N(u) \cap N(v) \neq N(C)$, or equivalently that $N(C)  \setminus N(u) \neq \emptyset$.
    Let $w$ be any vertex in $N(C) \setminus N(u)$.
    Consider a shortest path $\pi$ from $u$ to $w$ in $\bG[C \cup \{w\}]$. Then, extending the path $\pi$ with the edge $\{w,v\}$ gives a chordless path from $u$ to $v$, which has length at least three, as illustrated in Figure \ref{fig:criterion}. Thus, $u,v$ are not a two-pair.

    Conversely, suppose that $u,v$ are not a two-pair, which means that there is a chordless path of length at least three between $u$ and $v$. Then, the neighbor $w$ of $v$ on this path is in $N(C)$ but not in $N(u)$. Thus, we have that $N(u) \cap N(v) \neq N(C)$.
\end{proof}

In other words, a two-pair is characterized by the property that the entire out-neighborhood of $C_v(u)$ (which must be a subset of $N(v)$) coincides exactly with the neighbors of $u$ outside of $C_v(u)$ (which must also be a subset of $N(v)$).
This allows us to give the following algorithm to list all two-pairs in matrix multiplication time.

\twopairs

\begin{proof}
    Let $\bG = (V,E)$ be an undirected graph.

    \subparagraph{Our algorithm.} We proceed as follows:
    \begin{enumerate}[(i)]
        \item We first precompute several quantities. Using \cref{lem:ndcc_to_bmm}, we find all neighborhood-deleted components. Furthermore, via matrix multiplication, we compute for each pair of vertices $u, v$ the number of common neighbors $|N(u) \cap N(v)|$ and the size of their set difference $|N(u) \setminus N(v)|$.
        We also set up the fingerprints of \cref{lem:fingerprints}.
        \label{it:twopairs:i}
        
        \item Next, for each node $v \in V$, we consider $\bG_v$ and the set of connected components $\mC_v$ within it. For each $C \in \mC_v$, we iterate over all nodes $w \in C$ to compute the set $M_v(C) \coloneqq \{w \mid |N(v) \cap N(w)| = \max_{x \in C} |N(v) \cap N(x)|\}$, which consists of all nodes in $C$ that maximize the number of common neighbors with $v$ in $\bG$. From each set $M_v(C)$, we also select an arbitrary \emph{representative} node $r_v(C)$.
        
        Among all nodes $w \in C$, we restrict our testing to whether $v$ and the representative $r_v(C)$ form a two-pair. By \cref{prp:char_twopair}, any node in $C$ that does not maximize the number of common neighbors with $v$ in $\bG$ cannot form a two-pair with $v$. Furthermore, it is easy to observe that either all nodes in $M_v(C)$ form a two-pair with $v$, or none of them do.
        \label{it:twopairs:ii}
        
        \item For each node $v \in V$ and each $C \in \mC_v$, we iterate over all nodes $w \in C \setminus N[r_{v}(C)]$ in $\Oh(|C|)$ time. If there exists a node $w$ satisfying $C_{r_{v}(C)}(w) = C_{r_{v}(C)}(v)$, we mark $v$ and $r_v(C)$ as not a two-pair.
        \label{it:twopairs:iii}
    
        \item For each node $v \in V$ and each $C \in \mC_v$, we again iterate over all nodes $w \in C$ to compute the following quantity:
        \begin{align}
        \Delta_v(C) \coloneqq \quad \smashoperator{\sum_{w \in C \setminus N[r_v(C)]}} \quad |N(w) \cap N(v)| 
        - \quad \smashoperator{\sum_{w \in C \setminus N[r_v(C)]}} \quad |N(w) \cap N(r_v(C))| 
        + \quad \smashoperator{\sum_{w \in C \cap N(r_v(C))}} \quad |N(w) \setminus N[r_v(C)]|.
        \label{eq:twopairs:0}
    \end{align}
        If $\Delta_v(C) > 0$, we mark $v$ and $r_v(C)$ as not a two-pair.
        \label{it:twopairs:iv}
        
        \item Finally, for each node $v \in V$, we list all nodes $u$ that form a two-pair with $v$. Specifically, for each $C \in \mC_v$, if $v$ and $r_v(C)$ have not been marked as not a two-pair, we report all nodes $u \in M_v(C)$ as forming a two-pair with $v$.

        \label{it:twopairs:v}
    \end{enumerate}

    \subparagraph*{Correctness.} 
    Let $v \in V$ and $C \in \mC_v$ be arbitrary.
    Moreover, abbreviate $u \coloneqq r_v(C)$.
    As elaborated earlier, it suffices to show that $v$ and $u$ are not marked if and only if $v$ and $u$ are a two-pair.

    We first observe that $u$ and $v$ are not a two-pair if and only if there is $w \in C \setminus \{u\}$ that satisfies $(N(w) \cap N(v)) \setminus N(u) \neq \emptyset$.
    We first prove that $u$ and $v$ are marked in Step~\ref{it:twopairs:iii} if and only if there is $w \in C \setminus N[u]$ with $(N(w) \cap N(v)) \setminus N(u) \neq \emptyset$.

    \begin{claim}
        There is a node $w' \in C \setminus N[u]$ with $C_{u}(w') = C_{u}(v)$
        if and only if there is a node $w \in C \setminus N[u]$ with $(N(w) \cap N(v)) \setminus N(u) \neq \emptyset$.
    \end{claim}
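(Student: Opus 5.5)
The plan is to prove both directions directly from the definitions. Throughout I use two facts. First, since $u = r_v(C) \in C \subseteq V \setminus N[v]$, the vertices $u$ and $v$ are non-adjacent, so $v \in \bG_u$ and $C_u(v)$ is well defined. Second, because $C$ is a connected component of $\bG_v$, every vertex outside $C$ that is adjacent to $C$ lies in $N[v]$. Moreover, no vertex of $C$ is adjacent to $v$, so in fact $N(C) \subseteq N(v)$.

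For the direction ($\Leftarrow$), I would take $w \in C \setminus N[u]$ and a vertex $z \in (N(w)\cap N(v)) \setminus N(u)$, and set $w' \coloneqq w$. The goal is to show that the path $w, z, v$ lies in $\bG_u$.
\begin{itemize}
\item $w \notin N[u]$ by assumption.
\item $v \notin N[u]$ by the first fact above.
\item $z \notin N(u)$ by assumption, and $z \neq u$ because $z$ is adjacent to $v$ while $u$ is not.
\end{itemize}
Hence $w$ and $v$ lie in the same component of $\bG_u$, that is, $C_u(w') = C_u(v)$.

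For the direction ($\Rightarrow$), I would fix $w' \in C \setminus N[u]$ with $C_u(w') = C_u(v)$, and a $w'$-$v$ path $P$ inside $\bG_u$, so that $P$ avoids $N[u]$. The path $P$ starts in $C$ and ends at $v \notin C$, so it must leave $C$. Let $w$ be the last vertex of the initial segment of $P$ contained in $C$, and let $z$ be its successor on $P$, so $z \notin C$.
\begin{itemize}
\item Since $z$ is adjacent to $w \in C$ and $z \notin C$, the second fact gives $z \in N(C) \subseteq N(v)$.
\item Since $P$ avoids $N[u]$, we get $w \in C \setminus N[u]$ and $z \notin N(u)$.
\end{itemize}
Therefore $z \in (N(w) \cap N(v)) \setminus N(u)$, as required.

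The only step needing care is the boundary argument in ($\Rightarrow$): identifying the exit vertex $z$ and justifying $z \in N(v)$ (and $z \neq v$) through the component structure of $\bG_v$. The remaining steps are immediate from the definitions.
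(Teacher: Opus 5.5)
Your proof is correct and follows the same route as the paper. For ($\Rightarrow$) you take a $w'$-$v$ path in $\bG_u$, look at its first exit from $C$, and observe that the exit vertex lies in $N(C)\subseteq N(v)$; for ($\Leftarrow$) you use the length-two path $w,z,v$ inside $\bG_u$. Your explicit checks ($v\notin N[u]$, $z\neq u$) fill in small details that the paper leaves implicit.
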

    \begin{claimproof}
        First, suppose there exists a node $w' \in C \setminus N[u]$ such that $C_{u}(w') = C_{u}(v)$. Consider an arbitrary path $\pi$ from $w'$ to $v$ in $\bG_u$. Let $x$ be the first node on $\pi$ that is not in $C$, and let $w \in C$ be its immediate predecessor on the path. Note that any neighbor of $w$ that is not in $C$ must belong to $N(v)$; therefore, $x \in N(v)$. Therefore, $w \in C \setminus N[u]$ and $(N(w) \cap N(v)) \setminus N(u) \neq \emptyset$. 

        Conversely, observe that $v \notin N(u)$, $w \notin N(u)$, and $(N(w) \cap N(v)) \setminus N(u) \neq \emptyset$ immediately imply that $v$ and $w$ are connected by a path of length two in $\bG_u$. Thus, $C_{u}(w) = C_{u}(v)$.
    \end{claimproof}

    It remains to prove that if there is no $w \in C \setminus N[u]$ with $(N(w) \cap N(v)) \setminus N(u) \neq \emptyset$, then $\Delta_v(C) > 0$ if and only if $u$ and $v$ are not a two-pair.

    \begin{claim}
        If there is no $w \in C \setminus N[u]$ with $(N(w) \cap N(v)) \setminus N(u) \neq \emptyset$, then $\Delta_v(C) = \sum_{w \in C \cap N(u)} |(N(w) \cap N(v)) \setminus N(u)|$. In particular, $\Delta_v(C) > 0$ if and only if there is $w \in C \setminus \{u\}$ with $(N(w) \cap N(v)) \setminus N(u) \neq \emptyset$.
    \end{claim}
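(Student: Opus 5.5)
The plan is to expand the three sums in \cref{eq:twopairs:0} and sort every neighbor they count into one of a few classes. Two structural facts drive this. First, every $w \in C$ satisfies $N(w) \subseteq C \cup N(v)$, because $C$ is a connected component of $\bG_v$ and so any neighbor of $w$ outside $C$ must lie in $N[v]$. Second, $w$ itself is not adjacent to $v$. Throughout we write $u \coloneqq r_v(C)$, which lies in $C$, so in particular $u \notin N[v]$.

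Next we handle the first two sums, which both range over $w \in C \setminus N[u]$.
\begin{itemize}
\item By the hypothesis of the claim, $N(w) \cap N(v) \subseteq N(u)$ for every such $w$. Hence $|N(w) \cap N(v)| = |N(w) \cap N(v) \cap N(u)|$.
\item Subtracting $|N(w) \cap N(u)|$ therefore leaves $-|(N(w) \cap N(u)) \setminus N(v)|$.
\item By the first structural fact, $(N(w) \cap N(u)) \setminus N(v) = N(w) \cap N(u) \cap C$.
\end{itemize}
So the first two sums together equal $-e(C \setminus N[u],\, C \cap N(u))$, the negative of the number of edges between $C \setminus N[u]$ and $C \cap N(u)$.

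For the third sum, fix $w \in C \cap N(u)$. Every vertex of $N(w) \setminus N[u]$ lies in $C \cup N(v)$, so it is either in $C \setminus N[u]$ or in $N(v) \setminus N(u)$. Here we use $u \notin N(v)$ so that excluding $u$ costs nothing on the $N(v)$ side, and $C \cap N(v) = \emptyset$ so the two parts are disjoint. Hence
\[
|N(w) \setminus N[u]| = |N(w) \cap (C \setminus N[u])| + |(N(w) \cap N(v)) \setminus N(u)|.
\]
Summing over $w \in C \cap N(u)$, the first parts add up to exactly $e(C \setminus N[u],\, C \cap N(u))$. This cancels the contribution of the first two sums and leaves
\[
\Delta_v(C) = \sum_{w \in C \cap N(u)} |(N(w) \cap N(v)) \setminus N(u)|.
\]

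For the ``in particular'' part, note that $C \setminus \{u\}$ is the disjoint union of $C \cap N(u)$ and $C \setminus N[u]$. By hypothesis no vertex of $C \setminus N[u]$ has a nonempty set $(N(w) \cap N(v)) \setminus N(u)$. Since $\Delta_v(C)$ is a sum of nonnegative terms over $C \cap N(u)$, it is positive if and only if some $w \in C \setminus \{u\}$ has $(N(w) \cap N(v)) \setminus N(u) \neq \emptyset$.

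The main obstacle is the bookkeeping in the neighbor classification: we must make sure that $u$, $v$ and $N(v)$ are placed correctly in each class, which is exactly where the facts $u \notin N(v)$ and $C \cap N(v) = \emptyset$ are needed. Once that is right, the cancellation of the edge counts is immediate.
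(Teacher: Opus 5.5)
Your proof is correct and takes essentially the same approach as the paper. The paper also splits the three sums, cancels the common parts, and shows that the two leftover terms $(\dagger)$ and $(\ddagger)$ agree by double counting edges between $C \cap N(u)$ and $C \setminus N[u]$, using $N(w) \subseteq C \cup N(v)$ for $w \in C$ and $C \cap N(v) = \emptyset$; you identify both terms directly with $e(C\setminus N[u],\, C\cap N(u))$, which is the same argument.
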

    \begin{claimproof}
        We proceed by rewriting each of the three terms in \cref{eq:twopairs:0}. 
        For the first term, we use the fact that if $(N(w) \cap N(v)) \setminus N(u) = \emptyset$ for every $w \in C \setminus N[u]$, then:
        \begin{align*}
            \textstyle\sum_{w \in C \setminus N[u]} |N(w) \cap N(v)| = 
            \textstyle\sum_{w \in C \setminus N[u]} |N(w) \cap N(v) \cap N(u)|.
        \end{align*}
        For the second and third terms, basic set rewriting yields:
        \begin{align*}
            \textstyle\sum_{w \in C \setminus N[u]} |N(w) \cap N(u)| = 
            \textstyle\sum_{w \in C \setminus N[u]} |N(w) \cap N(u) \cap N(v)| + 
            \underbrace{\textstyle\sum_{w \in C \setminus N[u]} |(N(w) \cap N(u)) \setminus N(v)|}_{\symlabel{(\dagger)}{term:dagger}}, \\
            \textstyle\sum_{w \in C \cap N(u)} |N(w) \setminus N[u]|
            = \textstyle\sum_{w \in C \cap N(u)} |(N(w) \cap N(v)) \setminus N[u]|
            + \underbrace{\textstyle\sum_{w \in C \cap N(u)} |N(w) \setminus (N(v) \cup N[u])|}_{\symlabel{(\ddagger)}{term:ddagger}}.
        \end{align*}
        Summing and subtracting the terms as they appear in the definition of $\Delta_v(C)$, we get that the right-hand side simplifies to our target expression, $\sum_{w \in C \cap N(u)} |(N(w) \cap N(v)) \setminus N(u)|$, since $(N(w) \cap N(v)) \setminus N[u] = (N(w) \cap N(v)) \setminus N(u)$, and provided that Term~\ref{term:dagger} and Term~\ref{term:ddagger} are equal. To verify this equality, we note that:
        \begin{itemize}
            \item Term~\ref{term:ddagger} counts all pairs $(x,y)$ such that $x \in C \cap N(u)$, $\{x,y\} \in E$, and $y \notin N(v) \cup N[u]$.
            \item Term~\ref{term:dagger} counts all pairs $(x,y)$ such that $y \in C \setminus N[u]$, $\{x,y\} \in E$, and $x \in N(u) \setminus N(v)$.
        \end{itemize}
        Because every neighbor of $u$ is either in $C$ or in $N(v)$, the condition $x \in C \cap N(u)$ is equivalent to $x \in N(u) \setminus N(v)$. Moreover, for any node $y \in N(x)$ where $x \in C$, we have $y \notin N(v)$ if and only if $y \in C$. Thus, we have that Term~\ref{term:dagger} and Term~\ref{term:ddagger} are indeed equal.
    \end{claimproof}

    \subparagraph*{Runtime.} Step~\ref{it:twopairs:i} takes time $\Ohtilde(n^{\omega})$. All other steps take time $\Oh(\sum_{v} \sum_{C \in \mC_v} |C|) = \Oh(\sum_{v} n) = \Oh(n^2)$.
\end{proof}

\subsection{Recognizing Quasi-Triangulated Graphs Faster}

As explained in \cref{sec:to}, it suffices to give an algorithm for the decremental simplicial vertex problem, where each update corresponds to a vertex deletion and we need to maintain whether there is a simplicial vertex in the graph. In the following theorem, we present an algorithm that solves this problem in subcubic (amortized) time.

\decrementalsimplicial

\begin{proof}
    Let $\Delta \in \fragment{1}{n}$ be a parameter.
    First, we set up some notation and make some remarks:
    \begin{itemize}
        \item For $i \in \fragmentco{0}{n}$, let the $i$-th \emph{time step} be the point in time before the $i$-th removal (zero-indexed) in the overall process.
        Further partition the $n$ time steps into $\lceil n/\Delta \rceil$ \emph{phases} of size at most $\Delta$. Let $\fragmentco{t_j}{t_{j+1}}$ denote the time steps in the $j$-th phase, where $t_0 = 0$ and $t_{\lceil n/\Delta \rceil} = n$. 
        
        \item Let $\bG_i = (V_i, E_i)$ be the graph at the $i$-th time step (in particular $\bG_0$ is the input graph), write $N_i(v)$ for $N_{\bG_i}(v)$ and $N_i[v]$ for $N_i(v)\cup\{v\}$, and let $A_i$ and $\bar{A}_i$ denote the adjacency matrix of $\bG_i$ and the complement of $\bG_i$, respectively. Note that during the entire process it is easy to maintain $A_i$ and the neighborhood of the vertices in total time $\Oh(n^2)$.

        \item The algorithm maintains a count $\delta_i(v)$ for each vertex $v \in V_i$, defined as the number of induced $P_3$'s where $v$ is the middle vertex. Formally, $\delta_i(v) \coloneqq |\{\{x,y\} \subseteq N_{i}(v) \mid \{x,y\} \notin E_i\}|$ is the number of missing edges in the neighborhood of $v$. Clearly, $\delta_i(v) = 0$ if and only if $v$ is a simplicial vertex in $\bG_i$. This means that for each query, it suffices to return any node for which $\delta_i(v) = 0$, or report that no such node exists.
    \end{itemize}

    \subparagraph{Our Algorithm.}
    To maintain the values $\delta_i(v)$ up to date, suppose we just removed a vertex, and we need to compute $\delta_i$ for time step $i$. Then: 

    \begin{enumerate}[(i)]
        \item If $i = t_j$, i.e., we are at the start of phase $j$, then we compute
        \[
            \delta_i(v) = \frac{1}{2} \sum_{x \in N_i(v)} |N_i(v) \setminus N_i[x]|,
        \]
        where the values $|N_i(v) \setminus N_i[x]|$ for all pairs of vertices $v, x \in V_i$ are computed as follows:
        \begin{itemize}
            \item If $j = 0$ (and so $i = 0$), we compute these values via the product $A_0 \bar{A}_0$.
            \item Otherwise, if $j > 0$, let $D_j = V_{t_{j-1}} \setminus V_{t_j}$ be the set of vertices removed during the $(j-1)$-th phase. We obtain $|N_{t_j}(v) \setminus N_{t_j}[x]|$ from $|N_{t_{j-1}}(v) \setminus N_{t_{j-1}}[x]|$ by computing
            \[
                |N_{t_j}(v) \setminus N_{t_j}[x]| = |N_{t_{j-1}}(v) \setminus N_{t_{j-1}}[x]| - |(N_{t_{j-1}}(v) \setminus N_{t_{j-1}}[x]) \cap D_j|.
            \]
            We observe that we can compute the second term simultaneously for all $v, x \in V_{t_j}$ by multiplying the $n \times |D_j|$ adjacency matrix between $V_{t_{j-1}}$ and $D_j$ with the $|D_j| \times n$ non-adjacency matrix between $D_j$ and $V_{t_{j-1}}$, whose entry for $(y,x)$ is $1$ exactly when $y \notin N_{t_{j-1}}[x]$.
        \end{itemize}
    
        \item Otherwise, if $i \neq t_j$ (we are not at the beginning of the phase), let $x_{i-1}$ be the vertex removed from $\bG_{i-1}$ to form $\bG_i$. For each $v \in N_{i-1}(x_{i-1})$, we set 
        \[
            \delta_{i}(v) = \delta_{i-1}(v) - |N_{i-1}(v) \setminus N_{i-1}[x_{i-1}]|,
        \]
        where we compute 
        \[
            |N_{i-1}(v) \setminus N_{i-1}[x_{i-1}]| = |N_{t_j}(v) \setminus N_{t_j}[x_{i-1}]| - |(N_{t_j}(v) \setminus N_{t_j}[x_{i-1}]) \cap D|.
        \]
        Here, $D = V_{t_j} \setminus V_{i-1}$ is the set of vertices removed so far in the current phase. We compute the intersection by iterating over all nodes in $D$ in time $\Oh(|D|) = \Oh(\Delta)$.
    \end{enumerate}
    
    \subparagraph{Correctness.}
    The correctness relies on maintaining the exact number of missing edges in each vertex's neighborhood, and should be obvious to see.

    \subparagraph{Runtime.}
    Over the entire algorithm, we perform $\lceil n/\Delta \rceil$ phase initializations. The first initialization takes $\Oh(n^{\omega})$ time to compute $A_0 \bar{A}_0$. Each subsequent phase initialization computes the rectangular matrix product of an $n \times |D_j|$ matrix and a $|D_j| \times n$ matrix. Since $|D_j| \le \Delta$, this takes $\Oh(\mathsf{MM}(n, \Delta, n))$ time per phase, requiring a total time of $\Oh(n/\Delta \cdot \mathsf{MM}(n, \Delta, n))$. (Also note that $\Oh(n^{\omega}) = \Oh(n/\Delta \cdot \mathsf{MM}(n, \Delta, n))$ for every $\Delta$, so the latter term dominates the former.)

    The updates within the phases take $\Oh(\Delta)$ time for each neighbor of the removed vertex. Over all $n$ removals, the sum of the degrees of the removed vertices is bounded by $2m$. This yields a total intra-phase update time of $\Oh(m \cdot \Delta)$. 
\end{proof}

\subsection{Lower Bound for Finding $uv$-trails}

We show that verifying whether a graph $\bG$ contains a $uv$-trail with $d_{\bG}(u,v) \geq 3$, even for a single fixed pair of vertices $u, v$, cannot be performed faster than the time required to detect whether a graph contains a triangle. This is in stark contrast with the case $d_{\bG}(u,v)=2$, where we can solve the same problem in time $\Oh(n^2)$.
We start by proving the following for the case $d_{\bG}(u,v)=3$.

\begin{figure*}[t]
   \centering
   \scalebox{0.9}{\begin{tikzpicture}[scale=0.8]
    \begin{scope}[scale=0.9]
        \begin{scope}[shift={(-2.5,4)},rotate=45]
            \foreach \i in {1,...,3} {
              \node (L\i) at (1*\i-2.0,0) {$a_{\i}$};
            }
            \draw[thick] (0,0) ellipse (2.0 and 0.5);
        \end{scope}
        
        \begin{scope}[shift={(2.5,4)},rotate=-45]
            \foreach \i in {1,...,3} {
              \node (R\i) at (1*\i-2.0,0) {$b_{\i}$};
            }
            \draw[thick] (0,0) ellipse (2.0 and 0.5);
        \end{scope}
        
        \foreach \i in {1,...,3} {
          \node (B\i) at (1*\i-2.0,0) {$c_{\i}$};
        }
        \draw[thick] (0,0) ellipse (2.0 and 0.5);

        \draw[red, thick] (R2) -- (L2);
        \draw[opacity=0.25] (L3) -- (R1);
        \draw[opacity=0.25] (R3) -- (L3);

        \draw[opacity=0.25] (L1) -- (B3);
        \draw[opacity=0.25] (L2) -- (B2);
        \draw[red, thick] (L2) -- (B3);
        
        \draw[opacity=0.25] (R1) -- (B1);
        \draw[opacity=0.25] (R2) -- (B1);
        \draw[red, thick] (R2) -- (B3);
    \end{scope}
    
    \begin{scope}[shift={(10,0)},scale=0.9]
        \node (u) at (-5, 5) {$u$};
        \node (v) at (5, 5) {$v$};

        \node (P1) at (-2, 6) {$p$};
        \node (P2) at (2, 6) {$p'$};
        \draw[opacity=0.25] (u) -- (P1) -- (P2) -- (v);
    
        \begin{scope}[shift={(-2.5,4)},rotate=45]
            \foreach \i in {1,...,3} {
              \node (L\i) at (1*\i-2.0,0) {$a_{\i}$};
              \draw[opacity=0.25] (u) -- (L\i);
            }
            \draw[thick] (0,0) ellipse (2.0 and 0.5);
        \end{scope}
        
        \begin{scope}[shift={(2.5,4)},rotate=-45]
            \foreach \i in {1,...,3} {
              \node (R\i) at (1*\i-2.0,0) {$b_{\i}$};
              \draw[opacity=0.25] (R\i) -- (v);
            }
            \draw[thick] (0,0) ellipse (2.0 and 0.5);
        \end{scope}
        
        \foreach \i in {1,...,3} {
          \node (B\i) at (1*\i-2.0,0) {$c_{\i}$};
        }
        \draw[thick] (0,0) ellipse (2.0 and 0.5);

        % \draw[red, thick] (R2) -- (L2);
        % \draw[opacity=0.25] (R1) -- (L3);
        % \draw[opacity=0.25] (R3) -- (L3);

        \draw[opacity=0.25] (R1) -- (L1);
        \draw[opacity=0.25] (R1) -- (L2);
        \draw[opacity=0.25] (R2) -- (L1);
        \draw[opacity=0.25] (R2) -- (L3);
        \draw[opacity=0.25] (R3) -- (L1);
        \draw[opacity=0.25] (R3) -- (L2);

        \draw[opacity=0.25] (L1) -- (B3);
        \draw[opacity=0.25] (L2) -- (B2);
        \draw[red, thick] (L2) -- (B3);
        
        \draw[opacity=0.25] (R1) -- (B1);
        \draw[opacity=0.25] (R2) -- (B1);
        \draw[red, thick] (R2) -- (B3);

        \draw[red] (u) -- (L2);
        \draw[red] (R2) -- (v);
    \end{scope}
\end{tikzpicture}}
   \caption{
       An example of the graph $\bH$ used in \cref{lem:lb_uvtrail_gp}.
    }
   \label{fig:lb}
\end{figure*}

\begin{lemma}\label{lem:lb_uvtrail_gp}
    Let $\bG$ be an $n$-node graph. Then, in time $\Oh(n^2)$, we can build a $(3n+4)$-node graph $\bH$ containing two designated vertices $u$ and $v$ such that both of the following hold:
    \begin{enumerate}[(i)]
        \item $d_{\bH}(u,v) = 3$;
        \label{it:lb_uvtrail_gp:i}
        \item $\bH$ contains a $uv$-trail if and only if $\bG$ contains a triangle.
         \label{it:lb_uvtrail_gp:ii}
    \end{enumerate}
\end{lemma}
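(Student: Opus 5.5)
The plan is to make $\bH$ a tripartite blow-up of $\bG$, as in \cref{fig:lb}, arranged so that every induced $u$-$v$ path is forced into a very rigid shape.

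\textbf{Construction.} Let $V(\bG)=\fragment{1}{n}$. The vertex set of $\bH$ consists of three copies $A=\{a_i\}$, $B=\{b_i\}$, $C=\{c_i\}$ of $V(\bG)$, together with four extra vertices $u,v,p,p'$, for $3n+4$ vertices in total. The edges of $\bH$ are:
\begin{itemize}
  \item $\{u,a_i\}$ for all $i$, and $\{v,b_i\}$ for all $i$;
  \item the path $u$--$p$--$p'$--$v$;
  \item $\{a_i,c_k\}$ and $\{b_j,c_k\}$ exactly when $\{i,k\}\in E(\bG)$, respectively $\{j,k\}\in E(\bG)$;
  \item $\{a_i,b_j\}$ exactly when $\{i,j\}\notin E(\bG)$, including the case $i=j$, so $a_i b_i$ is always an edge.
\end{itemize}
There are no edges inside $A$, inside $B$ or inside $C$. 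Clearly $\bH$ can be built in $\Oh(n^2)$ time.

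\textbf{Distance.} Next I verify $d_{\bH}(u,v)=3$. We have $N(u)=A\cup\{p\}$ and $N(v)=B\cup\{p'\}$. These sets are disjoint, and there is no edge $u$--$v$, so $d\ge 3$. The path $u$--$p$--$p'$--$v$ shows $d\le 3$.

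\textbf{Triangle gives a trail.} Suppose $\{i,j,k\}$ is a triangle in $\bG$; its vertices are distinct since $\bG$ has no loops. Then $u$--$a_i$--$c_k$--$b_j$--$v$ is a path of length $4$. It is induced for the following reasons:
\begin{itemize}
  \item $a_i b_j\notin E(\bH)$ because $\{i,j\}\in E(\bG)$;
  \item $u$ has no neighbor among $c_k,b_j$;
  \item $v$ has no neighbor among $a_i,c_k$.
\end{itemize}
So it is an induced path that is not a shortest path, i.e.\ a $uv$-trail.

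\textbf{Trail gives a triangle.} This direction is the only step needing care. Take an induced $u$-$v$ path $\pi$ of length at least $4$.
\begin{itemize}
  \item If $\pi$ leaves $u$ through $p$, it is forced along $p'$ to $v$ (since $N(p)=\{u,p'\}$), giving length $3$. So the second vertex of $\pi$ is some $a_i$, and symmetrically the second-to-last vertex is some $b_j$.
  \item Inducedness forbids any other vertex of $A$ on $\pi$ (it would form a chord with $u$) and any other vertex of $B$ (a chord with $v$). Also $p,p'$ cannot appear internally, since their only neighbors are among $u,v,p,p'$.
  \item Hence all vertices strictly between $a_i$ and $b_j$ lie in $C$. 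Since $C$ is independent, there is exactly one such vertex $c_k$, and $\pi=u$--$a_i$--$c_k$--$b_j$--$v$.
\end{itemize}
The edges of $\pi$ give $\{i,k\},\{j,k\}\in E(\bG)$. The absence of the chord $a_ib_j$ gives $\{i,j\}\in E(\bG)$, and in particular $i\ne j$ because $a_ib_i$ is always an edge. Together with $i\neq k$ and $j\neq k$ (no loops in $\bG$), $\{i,j,k\}$ is a triangle in $\bG$.
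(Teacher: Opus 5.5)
Your proposal is correct and essentially identical to the paper's proof. The graph $\bH$ is the same: the paper first passes to the tripartite blow-up of $\bG$ and then complements the $A$--$B$ edges, which is exactly your construction with $a_ib_i$ always present. The distance argument and the forcing argument showing that any trail must be $u$--$a_i$--$c_k$--$b_j$--$v$ also match. Arguing directly about triangles in $\bG$ rather than in the intermediate tripartite graph is only a cosmetic difference.
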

\begin{proof}
    Let $\bG = (V, E)$. First, we transform $\bG$ into a tripartite graph on $3n$ nodes by creating three disjoint copies of the vertex set, denoted $A, B$, and $C$, while retaining only those edges that connect vertices in different partitions. It is not difficult to see that the resulting tripartite graph contains a triangle if and only if the original graph $\bG$ contained a triangle.

    Therefore, assume that $\bG$ is a tripartite graph with vertex partitions $A = \{a_1, \ldots, a_n\}$, $B = \{b_1, \ldots, b_n\}$, and $C = \{c_1, \ldots, c_n\}$. We construct $\bH = (V', E')$ as follows (see Figure~\ref{fig:lb}):
    \begin{itemize}
        \item The vertex set is $V' = A \cup B \cup C \cup \{u, v, p, p'\}$, where $u, v, p$, and $p'$ are auxiliary vertices.
        \item The edge set $E'$ includes all original edges between $A$ and $C$ and between $C$ and $B$. However, it contains the complement of the edges between $A$ and $B$; that is, $E' \cap (A \times B) = \{\{a, b\} \in A \times B \mid \{a, b\} \notin E\}$.
        \item Additionally, $E'$ contains a path $u, p, p', v$ of length three connecting $u$ and $v$.
        \item Finally, $E'$ contains edges $\{u, a_i\}$ and $\{v, b_i\}$ for all $i \in \fragment{1}{n}$.
    \end{itemize}

    We observe that~\ref{it:lb_uvtrail_gp:i} holds. This is because the path $u, p, p', v$ has length three, and $d_{\bH}(u,v) \geq 3$ because $N[u] = A \cup \{u, p\}$ and $N[v] = B \cup \{v, p'\}$ are disjoint, so there is no path of length one or two.

    It remains to show that $\bH$ contains a $uv$-trail if and only if $\bG$ contains a triangle. First, assume $\bG$ contains a triangle on vertices $\{a_i, b_j, c_k\}$. We claim that $u, a_i, c_k, b_j, v$ is a $uv$-trail. To see this, notice that this path has length four and is chordless because $u$ and $v$ have no neighbors in $\{c_k, b_j\}$ and $\{a_i, c_k\}$, respectively, and $\{a_i, b_j\}$ is not an edge in $\bH$ because it was an edge in $\bG$.
    
    On the other hand, assume there exists a $uv$-trail $\pi$ in $\bH$. We observe that $\pi$ cannot utilize any nodes in $\{p, p'\}$. Furthermore, $\pi$ can contain at most one vertex from each set $A$ and $B$. If $\pi$ were to contain two distinct vertices $a_i, a_{i'}$ in that order, the edge $\{u, a_{i'}\}$ would constitute a chord; a similar argument holds for nodes in $B$. Thus, $\pi$ must follow the sequence: $u$, followed by some $a_i$, then vertices in $C$, then some $b_j$, and finally $v$. Since there are no edges between vertices within $C$, we have $\pi = u, a_i, c_k, b_j, v$ for some $c_k \in C$. Because $\pi$ is chordless and hence $\{a_i, b_j\} \notin E'$, it follows that $\bG$ contains the triangle $a_i, c_k, b_j$.
\end{proof}

This yields the following result.

\lbuvtrail

\begin{proof}
    For $k = 3$, we can use the construction from \cref{lem:lb_uvtrail_gp}: given an input graph for triangle detection, the graph $\bH$ is constructed and then the $uv$-trail detection algorithm from the problem statement is used.

    For $k > 3$, we note that the graph $\bH$ can still be constructed following \cref{lem:lb_uvtrail_gp}. In this case, however, we augment $\bH$ by adding a path of length $k-3$ connecting a new auxiliary node $u'$ to $u$. We then verify the existence of a $u'v$-trail. This construction remains valid because any trail from $u'$ to $v$ must necessarily pass through $u$, and the appended path segment is chordless.
\end{proof}

\section*{Acknowledgments}
OpenAI Codex was used to assist with revisions to the technical exposition, the layout of the figures, and the LaTeX formatting of this version. The authors are responsible for the final content.

\bibliographystyle{alphaurl}
\bibliography{main}

\newpage
\appendix

\section{Motivation for the Kratsch-Spinrad problems}\label{app:motivation}
Almost all the structures considered by Spinrad and Kratsch are fundamental to structural graph theory. They are closely intertwined and can be understood through four central graph classes:
\begin{enumerate}
    \item The first class consists of \emph{perfect graphs}, defined as graphs where every induced subgraph has a chromatic number equal to the size of its maximum clique. Perfect graphs are best known for the Strong Perfect Graph Theorem, conjectured by Berge in the 1960s \cite{B60, B61} and proven by Chudnovsky, Robertson, Seymour, and Thomas \cite{CNST06}. 
    In this context, \emph{star-cutsets} were first introduced by Chv\'atal \cite{C85} to characterize minimal imperfect graphs, establishing star-cutsets as a powerful structural obstruction used to decompose graph classes and prove perfection \cite{C85, R87}.

    \item \emph{Triangulated graphs} (also called \emph{chordal}, \emph{rigid-circuit}, \emph{monotone transitive}, or \emph{perfect elimination} graphs) are graphs in which every cycle of length four or more has a chord. Dirac~\cite{D61} gave two equivalent characterizations: a graph is chordal if and only if every induced subgraph is either a clique or contains two non-adjacent \emph{simplicial vertices} or, equivalently, if and only if every minimal cutset is a \emph{clique-cutset}. The latter formulation highlights why many problems can be solved faster on chordal graphs \cite{G72, RTS76}: a clique cutset decomposes a graph into independently solvable pieces, since the separating clique interacts uniformly with every part of the split. This property of clique cutsets can be exploited for other graph classes as well.

    \item A superclass of triangulated graphs is the class of \emph{weakly triangulated graphs}, in which \emph{two-pairs} play a fundamental role. Weakly triangulated graphs were first defined by Hayward \cite{H85} as graphs containing no induced \emph{holes} (that is, induced cycles of length at least five) or \emph{antiholes} (holes in the complement). Note that this is a generalization of triangulated graphs, since forbidding induced cycles of length four and five already forbids all antiholes. Sritharan conjectured that weakly triangulated graphs are exactly the graphs in which every induced subgraph is either a complete graph or contains a two-pair (so two-pairs play a role analogous to that of simplicial vertices). This was later proven by Chv\'atal and Rusu \cite{CR93} (see also \cite{ES95, CRS02}). Two-pairs are also central in the design of efficient algorithms for weakly triangulated graphs \cite{HTM90, HPS07}.

    \item Last but not least, \emph{\AT-free} graphs were first introduced by Lekkerkerker and Boland~\cite{LB62} as a way to relate interval graphs and triangulated graphs: they showed that a graph is an interval graph if and only if it is both chordal 
    and \AT-free. \AT-free graphs were subsequently studied in relation to other graph classes, such as cocomparability graphs and perfect graphs~\cite{G67, LB79, GMT84}, before being investigated in their own right in the seminal work of Corneil, Olariu, and Stewart~\cite{COS97}. One property of interval graphs that \cite{COS97} extended to \AT-free graphs is that every such graph contains a \emph{dominating pair} realizing the diameter. These and other properties of \AT-free graphs allow several $\mathsf{NP}$-hard problems to be solved in polynomial time on \AT-free graphs (e.g., \cite{S12, KM12, BKKM99, K99, KMT08}); moreover, recent algorithmic results on \AT-free graphs have even achieved running times that bypass the fine-grained lower bounds~\cite{D22, DDHV25}.
\end{enumerate}

\section{Dominated vertex is equivalent to OV}\label{sec:dom}
Recall that a vertex $u$ in a graph $\bG$ is dominated by a node $v$ if every neighbor of $u$ is also a neighbor of $v$ or is $v$. We will present two reductions. The first will reduce the problem of finding a dominated vertex to OV in $n$ dimensions. The second will reduce the OV problem to finding a dominated vertex in a given graph. We will then conclude with a short extra proof, that OV in $n$ dimensions is equivalent to Dominated vertex which is also equivalent to solving $\exists x\in V,\exists y\in N(v),\forall z\in N(v), (y,z)\in E$.

\begin{lemma}
Given an $n$-node graph $\bG=(V,E)$ one can create in $\Oh(n^2)$ time an instance of OV on two sets of $n$ vectors on $2n$ dimensions so that it has a pair of orthogonal vectors if and only if $G$ has a dominated vertex.
\end{lemma}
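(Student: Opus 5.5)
The plan is to encode the condition ``$u$ is dominated by $v$'' directly as orthogonality of a vector attached to $u$ with a vector attached to $v$. We then spend $n$ extra coordinates to exclude the trivial pairs $u=v$.

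Recall that $u$ is dominated by $v \neq u$ if and only if $N(u) \subseteq N[v]$. Equivalently, there is no $w \in V$ with $w \in N(u)$ and $w \notin N[v]$. Identify $V$ with $\fragment{1}{n}$ and use coordinates indexed by $\fragment{1}{2n}$. For every $u \in V$ define $a_u \in \{0,1\}^{2n}$ and $b_u \in \{0,1\}^{2n}$ by the following rules, for $w \in \fragment{1}{n}$:
\begin{itemize}
    \item $a_u[w] = 1$ if and only if $w \in N(u)$;
    \item $b_u[w] = 1$ if and only if $w \notin N[u]$;
    \item $a_u[n+w] = b_u[n+w] = 1$ if and only if $w = u$.
\end{itemize}
The OV instance consists of the sets $\{a_u\}_{u \in V}$ and $\{b_v\}_{v \in V}$. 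Writing these down clearly takes $\Oh(n^2)$ time from the adjacency matrix.

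For correctness, split the inner product into two halves:
\[
a_u \cdot b_v = |N(u) \setminus N[v]| + [u = v].
\]
The key steps are as follows.
\begin{itemize}
    \item If $u = v$, the second term equals $1$, so $a_u$ and $b_u$ are never orthogonal. This is the only role of the extra $n$ coordinates. Without them, every pair $(a_u, b_u)$ would be orthogonal, since $N(u) \setminus N[u] = \emptyset$.
    \item If $u \neq v$, the second term vanishes. Hence $a_u \cdot b_v = 0$ if and only if $N(u) \subseteq N[v]$, that is, if and only if $u$ is dominated by $v$.
\end{itemize}
So an orthogonal pair exists if and only if $\bG$ has a dominated vertex, and the orthogonal pair $(a_u, b_v)$ directly names the dominated vertex $u$ and its dominator $v$.

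I expect no real obstacle; the only subtle point is the one that forces $2n$ rather than $n$ dimensions, namely excluding the diagonal $u = v$. I also check the edge cases against the definition. If $v \in N(u)$, this causes no problem, because $b_v[v] = 0$ and so $v$ contributes nothing to the first term. If $u$ is isolated, then $a_u$ is zero on the first half, and $u$ is correctly reported as dominated by any $v \neq u$.
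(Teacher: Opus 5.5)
Your proposal is correct and uses the same construction as the paper: $a_u$ is row $u$ of $A$ concatenated with $I$, and $b_v$ is column $v$ of the complement matrix $C$ stacked on top of $I$, so that $a_u\cdot b_v=|N(u)\setminus N[v]|+[u=v]$. Your explicit choice $b_v[v]=0$ (i.e., $C$ with zero diagonal) is what makes the test $N(u)\subseteq N[v]$ rather than $N(u)\subseteq N(v)$, a detail the paper leaves implicit.
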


\begin{proof}
Let $A$ be the $n\times n$ adjacency matrix of $\bG$ and let $C$ be its complement. Let $I$ be the $n\times n$ identity matrix. Let $X$ be the $n\times 2n$ matrix obtained by concatenating $A$ and $I$ (with $I$ to the right of $A$).  Now let $Y$ be the $2n\times n$ matrix obtained by concatenating $C$ on top of $I$.

Note that for any given $i,j\in [n]$, we have that $(XY)[i,j]=0$ iff 
\begin{itemize}
\item for every $k$ such that $(i,k)\in E$, we have $(k,j)\in E$ (i.e. if $A[i,k]=1,$ then $C[k,j]=0$), and 
\item also $i\neq j$ (i.e. $I[i,i]\cdot I[i,j]=0$). \qedhere
\end{itemize}
\end{proof}

Note that the standard reduction from OV on two sets to OV on a single set still works here: you can create a single $2n\times (2n+2)$ matrix, $Z$, by concatenating $X$ on top of the complement of $Y$ and add two new columns, so that every row in $Z$ corresponding to a row of $X$ has $01$ in those two columns, and every row in $Z$ corresponding to a column of $Y$ has $10$ in those two columns.

Now we give the second direction of the reduction.

\begin{lemma}\label{lem:OVdomvertex}
Given two sets $U$ and $V$ of $n$ Boolean vectors on $n$ coordinates, in $\Oh(n^2)$ time one can create an $\Oh(n)$-node graph $G$ so that $G$ contains a dominated vertex if and only if there are $u\in U,v\in V$ s.t. $u\cdot v=0$.
\end{lemma}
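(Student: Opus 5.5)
The plan is to mirror the forward reduction in reverse. Vectors of $U$ become candidate \emph{dominated} vertices, vectors of $V$ become candidate \emph{dominating} vertices, and coordinates become vertices that encode the orthogonality test inside the inclusion $N(x)\subseteq N[y]$. Concretely, I would take vertices $a_u$ for $u\in U$, $b_v$ for $v\in V$, and coordinate vertices $c_1,\dots,c_n$, with the following edges:
\begin{itemize}
\item $a_u c_k$ whenever $u_k=1$;
\item $b_v c_k$ whenever $v_k=0$;
\item all edges between $\{a_u\}$ and $\{b_v\}$;
\item all edges inside $\{b_v\}$, so that the $b$'s form a clique;
\item no edges inside $\{a_u\}$.
\end{itemize}
Then $N(a_u)=\{b_{v'}\}_{v'}\cup\{c_k : u_k=1\}$, while $N[b_v]\supseteq \{a_{u'}\}_{u'}\cup\{b_{v'}\}_{v'}\cup\{c_k: v_k=0\}$. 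Hence $N(a_u)\subseteq N[b_v]$ holds iff every $k$ with $u_k=1$ has $v_k=0$, i.e.\ iff $u\cdot v=0$. This is the intended direction, and the construction is clearly $\Oh(n)$ nodes and $\Oh(n^2)$ time.

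The remaining work, and the main obstacle, is to rule out \emph{spurious} dominations. In the bare construction these include:
\begin{itemize}
\item $a_u$ dominated by $a_{u'}$ when $\operatorname{supp}(u)\subseteq\operatorname{supp}(u')$;
\item a $b$-vertex dominated by another $b$-vertex;
\item coordinate vertices dominated by some $b_v$.
\end{itemize}
Note also that any degree-one vertex is automatically dominated, so every auxiliary vertex must have degree at least two and no ``simple'' neighborhood.

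I would handle this with private gadgets, in the following order.
\begin{enumerate}
\item Add to each $a_u$ a private marker coordinate $d_u$, adjacent to $a_u$ and to all $b$'s but to no other $a$. This destroys $a$-by-$a$ domination and keeps $b_v$ a valid dominator.
\item Pad each $b_v$ with a private marker adjacent to $b_v$ only among the $A\cup B$ part. This prevents $b$-by-$b$ domination and $a$-by-$b$-type cross cases.
\item To stop the markers and the $c_k$'s themselves from being dominated, attach each such helper vertex to its own small induced gadget with no dominated vertices, for instance an induced cycle of length at least $5$ sharing the helper. On such a cycle no vertex's neighborhood lies in another's closed neighborhood, and the cycle vertices see nothing else.
\end{enumerate}

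Finally I would verify, type by type ($a$, $b$, $c$, $d$, padding, cycle vertices), that each vertex $x$ has two neighbors that no single other vertex $y$ covers in $N[y]$, except for the intended pairs $(a_u,b_v)$ with $u\cdot v=0$. I expect this exhaustive case check to be the tedious step, and I would adjust the gadgets if some case fails. In particular, I would check that adding the cycle gadgets does not break the intended inclusion $N(a_u)\subseteq N[b_v]$, which requires that the gadgets never touch $a$-vertices. The total size stays $\Oh(n)$ vertices with $\Oh(n^2)$ edges.
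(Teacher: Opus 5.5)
Your construction is correct. It shares the paper's core but uses a different gadget to protect the helper vertices.

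\textbf{Correspondence with the paper.} Your $a_u$, $b_v$, $c_k$ with their edges are the paper's $X$, $Y$, $C'$. Your marker $d_u$ (adjacent to $a_u$ and all of $B$) plays the role of $X'$, and your $p_v$ plays the role of $Y'$.

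\textbf{What differs.} The paper makes $X'$, $Y'$, $C'$ cliques and matches $X'$ and $C'$ into further cliques $X''$ and $C''$. Non-domination of a helper there depends on how the gadgets interact, which is why it needs a long case analysis and excludes all-zero vectors. Your private induced cycles of length at least $5$ make non-domination local. The two cycle-neighbours of any cycle vertex $w$ (the helper included) have $w$ as their only common closed neighbour, so no $y\neq w$ has $N(w)\subseteq N[y]$. This settles every $c_k$, $d_u$, $p_v$ and cycle vertex at once.

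\textbf{The check you defer.} Fix that $d_u$ and $p_v$ have no neighbours besides those you list and their cycles. The check then reduces to two cases.
\begin{itemize}
\item For $b_v$: we have $N[p_v]\cap N[a_u]=\{b_v\}$, so $b_v$ is never dominated.
\item For $a_u$: a dominator must lie in $N[d_u]\setminus\{a_u\}$, i.e.\ be $d_u$, some $b_v$, or a cycle neighbour of $d_u$. A cycle neighbour misses $B$. A vertex $b_v$ works iff $u\cdot v=0$. The vertex $d_u$ works iff $u=0$.
\end{itemize}
The last case means your closing claim (``no dominations except the intended pairs'') is false as stated when $U$ contains the zero vector. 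This is harmless: the OV instance is then a yes-instance anyway, or you can discard zero vectors first, as the paper does.

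\textbf{Trade-off.} Your route costs a constant factor in size (about $17n$ versus $7n$ vertices with $5$-cycles). In exchange, the verification is a few lines instead of a page. As in the paper, the only dominated vertices are $a_u$'s, and only by their neighbours, so the subsequent reduction to the $\exists\exists\forall$ problem still goes through.
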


\begin{proof}
Let $C$ be the $n$ coordinates of the two $n$-sets of $n$-length vectors $U$ and $V$.

Create a graph as follows (refer to \cref{fig:lOVdomvertex} for a visualization): Create vertex sets $X,Y,X',Y',X'',C',C''$ of size $n$ each. $C'$ and $C''$ are nodes corresponding to the coordinates $C$. They are connected by a matching: for every $c\in C$ there is an edge between its copies $c'\in C'$ and $c''\in C''$. Both $C'$ and $C''$ are cliques. 

$X$ is an independent set, $Y$ is a clique, and there is an edge between every node in $X$ and every node in $Y$. Every node $x\in X$ has a copy $x'\in X'$ and an edge to that copy; $X'$ is a clique. Similarly, every node $y\in Y$ has a copy $y'\in Y'$ and an edge to that copy; $Y'$ is a clique.
Every node in $X'$ has a copy in $X''$ and an edge to that copy and $X''$ is a clique.

Every node in $X'$ has an edge to every node in $Y$. Each $x_i\in X$ corresponds to a vector $u_i\in U$ and every $y_j\in Y$ corresponds to vector $v_j\in V$.
Each $x_i\in X$ has an edge to $c\in C'$ iff $u_i$ has a $1$ in position $c$. Each $y_j\in Y$ has an edge to $c\in C'$ iff $v_j$ has a $0$ in position $c$.

\begin{figure*}[t]
   \centering
   \scalebox{0.9}{\begin{tikzpicture}[
    font=\small,
    >=stealth,
    % --- Styles ---
    cliquebox/.style={
        draw, rounded corners=3pt,
        fill=blue!15, 
        minimum width=3.2cm,
        minimum height=1.4cm,
        align=center,
        thick
    },
    independentbox/.style={
        draw, rounded corners=3pt,
        fill=white,
        minimum width=3.2cm,
        minimum height=1.4cm,
        align=center,
        thick
    },
    matching/.style={black, line width=0.8pt},
    biclique/.style={black!70!white, line width=10pt, opacity=0.5, line cap=round, shorten <=2pt, shorten >=2pt},
    bicliquelabel/.style={font=\footnotesize, fill=white, fill opacity=0.9, text opacity=1, inner sep=2pt, rounded corners=2pt},
    incX/.style={blue!80!black, thick},
    incY/.style={red!80!black, thick},
    port/.style={circle, fill=black, inner sep=1.2pt},
    legend/.style={draw, rounded corners=3pt, align=left, font=\footnotesize, inner sep=6pt, fill=white, thick}
]

% --- 1. Nodes ---
% X Column (centered at x=0)
\node[independentbox] (X)   at (0,0)   {$X$\\[2pt]{\footnotesize independent}\\[1pt]{\scriptsize vectors $u_i\in U$}};
\node[cliquebox]      (Xp)  at (0,4.5) {$X'$\\[2pt]{\footnotesize clique}};
\node[cliquebox]      (Xpp) at (0,9.0) {$X''$\\[2pt]{\footnotesize clique}};

% Y Column (moved closer, centered at x=5.0)
\node[cliquebox]      (Y)   at (5.0, 1.5) {$Y$\\[2pt]{\footnotesize clique}\\[1pt]{\scriptsize vectors $v_j\in V$}};
\node[cliquebox]      (Yp)  at (5.0, 6.0) {$Y'$\\[2pt]{\footnotesize clique}};

% C Column (moved closer, centered at x=10.0)
\node[cliquebox]      (Cpp) at (10.0, 1.5) {$C''$\\[2pt]{\footnotesize coordinate clique}};
\node[cliquebox]      (Cp)  at (10.0, 6.0) {$C'$\\[2pt]{\footnotesize coordinate clique}};

% --- 2. Complete Bipartite Edges (Gray Bands) ---
\draw[biclique] (X.east) -- (Y.west) node[midway, bicliquelabel, yshift=10pt] {$X \times Y$ complete};
\draw[biclique] (Xp.east) -- (Y.west) node[midway, bicliquelabel, yshift=15pt] {$X' \times Y$ complete};

% --- 3. Matchings (Vertical Lines) ---
\foreach \offset in {-1.2, -0.8, -0.4, 0, 0.4, 0.8, 1.2}{
    % X to X'
    \draw[matching] ($(X.north)+(\offset,0)$) -- ($(Xp.south)+(\offset,0)$);
    % X' to X''
    \draw[matching] ($(Xp.north)+(\offset,0)$) -- ($(Xpp.south)+(\offset,0)$);
    % Y to Y'
    \draw[matching] ($(Y.north)+(\offset,0)$) -- ($(Yp.south)+(\offset,0)$);
    % C'' to C'
    \draw[matching] ($(Cpp.north)+(\offset,0)$) -- ($(Cp.south)+(\offset,0)$);
}

% Matching labels (offsets tightened for the compact layout)
\node[font=\footnotesize, fill=white, inner sep=2pt] at ($(X)!0.5!(Xp)   - (1.8,0)$) {$x_i x'_i$};
\node[font=\footnotesize, fill=white, inner sep=2pt] at ($(Xp)!0.5!(Xpp) - (1.8,0)$) {$x'_i x''_i$};
\node[font=\footnotesize, fill=white, inner sep=2pt] at ($(Y)!0.5!(Yp)   + (1.8,0)$) {$y_j y'_j$};
\node[font=\footnotesize, fill=white, inner sep=2pt] at ($(Cpp)!0.5!(Cp) + (1.8,0)$) {$c'_k c''_k$};

% --- 4. Incidence Edges (Blue and Red Curves) ---
\coordinate (X_out1) at ($(X.north east) + (-0.9, 0)$);
\coordinate (X_out2) at ($(X.north east) + (-0.3, 0)$);
\coordinate (X_out3) at ($(X.east)       + (0, 0.3)$);

\coordinate (Y_out1) at ($(Y.north east) + (-0.8, 0)$);
\coordinate (Y_out2) at ($(Y.north east) + (-0.2, 0)$);

\coordinate (Cp_in1) at ($(Cp.west)      + (0, -0.3)$);
\coordinate (Cp_in2) at ($(Cp.west)      + (0, 0.3)$);
\coordinate (Cp_in3) at ($(Cp.south west)+ (0.4, 0)$);
\coordinate (Cp_in4) at ($(Cp.south)     + (-0.8, 0)$);
\coordinate (Cp_in5) at ($(Cp.south)     + (0.3, 0)$);

% Draw blue edges (X to C')
\draw[incX] (X_out1) to[out=30, in=210] (Cp_in1);
\draw[incX] (X_out2) to[out=25, in=200] (Cp_in2);
\draw[incX] (X_out3) to[out=15, in=190] (Cp_in3);

% Draw red edges (Y to C')
\draw[incY] (Y_out1) to[out=50, in=230] (Cp_in4);
\draw[incY] (Y_out2) to[out=35, in=215] (Cp_in5);

% Draw ports (dots) at connection points
\foreach \p in {X_out1, X_out2, X_out3, Y_out1, Y_out2, Cp_in1, Cp_in2, Cp_in3, Cp_in4, Cp_in5} {
    \node[port] at (\p) {};
}

% --- 5. Legend ---
% Centered below the middle column (Y)
% \node[legend, anchor=north] at (5.0, -1.2) {
%     Filled boxes are cliques: $X'', X', Y, Y', C', C''$.\\
%     White boxes are independent sets: $X$.\\
%     Thick gray bands represent complete bipartite edge sets.\\
%     Parallel black edges represent the four matchings/copy maps.\\
%     Blue/red curves are sample coordinate-incidence edges into $C'$.\\[4pt]
%     \textcolor{blue!80!black}{\textbf{Blue:}} $x_i c'_k \in E \iff u_i[k]=1$.\\
%     \textcolor{red!80!black}{\textbf{Red:}} $y_j c'_k \in E \iff v_j[k]=0$.
% };

\end{tikzpicture}}
   \caption{
       A visualization of the graph constructed in \cref{lem:OVdomvertex}. Here: \\[1pt]
       \begin{itemize}
           \item Filled boxes are cliques: $X'', X', Y, Y', C', C''$.
           \item  White boxes are independent sets: $X$.
           \item Thick gray bands represent complete bipartite edge sets.
           \item Parallel black edges represent the four matchings/copy maps.
           \item Blue/red curves are sample coordinate-incidence edges into $C'$. More specifically:
           \begin{itemize}
               \item \textcolor{blue!80!black}{\textbf{Blue:}} $x_i c'_k \in E \iff u_i[k]=1$.
               \item \textcolor{red!80!black}{\textbf{Red:}} $y_j c'_k \in E \iff v_j[k]=0$.
           \end{itemize}
       \end{itemize}
    }
   \label{fig:lOVdomvertex}
\end{figure*}

Now, if $u_i$ and $v_j$ are orthogonal, then $x_i\in X$ is a dominated vertex, dominated by $y_j\in Y$: $(x_i,y_j)\in E$, and the remaining neighbors of $x_i$ are in $X'\cup C'$: $y_j$ is adjacent to all of $X'$ and since $u_i,v_j$ are orthogonal, $v_j$ has a 0 for every coordinate $c$ for which $u_i$ has a 1, so the neighbors of $x_i$ are also neighbors of $y_j$. In fact this is an iff.

By construction, if we have any other pair of nodes $(a,b)$ not in $X\times Y$, 
neither can dominate the due to the copies and cliques.

We will be explicit here. If $U$ or $V$ contains an all-$0$ vector, then the OV instance is a yes-instance and we can output a graph of two adjacent vertices, which has a dominated vertex. Hence we may assume that neither $U$ nor $V$ contains an all-$0$ vector.
\begin{enumerate}
\item A node $x\in X$ cannot dominate another $x_1\in X$ because they each have an edge to a different node in $X'$.
\item A node $x\in X$ cannot dominate a node in $X'\cup X''\cup Y'\cup C'\cup C''$ since these sets are cliques and $x$ is missing an edge to some node in each of these cliques.
\item A node $a\in X\cup X'\cup X''\cup C'\cup C''$ cannot dominate any $y\in Y$ because $y$ has a neighbor in $Y'$ and $a$ does not. (This also means $x\in X$ does not dominate any node.)
\item A node $y'\in Y'$ cannot dominate any node in $Y$ because $Y$ is a clique and $y'$ only has one edge to that clique.
\item A node $y_1\in Y$ cannot dominate another node $y_2\in Y$ since they have edges to different nodes in $Y'$. (This also means that the nodes in $Y$ are not dominated by any other nodes in the graph.)
\item A node $x'\in X'$ cannot dominate any node in $X''\cup Y'\cup C'\cup C''$ since these sets are cliques and $x'$ is missing an edge to some node in each of these cliques.

\item A node $x'\in X'$ cannot dominate any node in $X$ because these nodes have neighbors in $C'$ and $x'$ does not.
\item A node $x'\in X'$ cannot dominate another node in $X'$ because the two nodes have edges to different nodes in $X''$. (This means that the nodes in $X'$ cannot dominate any other node.)

\item Consider a node $x'\in X'$. Its neighbors are in $X'\cup X''\cup X\cup Y$ and it has neighbors in all these sets. The only other nodes that have neighbors in all these sets are other nodes $x'_1\in X'$. But from the above point, we get that no $x'\in X'$ is dominated by any other node.

\item Consider a node $x''\in X''$. It can only potentially be dominated by nodes in $X''\cup X'$, but we already know that the nodes in $X'$ do not dominate any other nodes. Another node $x''_1$ in $X''$ also cannot dominate $x''$ since $x''$ and $x''_1$ have different neighbors in $X'$ due to the matching. Hence the nodes of $X''$ are not dominated by any other node.

\item Similarly, a node $y'\in Y'$ can only be dominated by nodes in $Y'\cup Y$, but since $Y'$ is a clique and each $y\in Y$ has only one neighbor in $Y'$, $y$ cannot dominate $y'$. Also, no other node of $Y'$ can dominate $y'$ since they have distinct unique neighbors in $Y$. Thus the nodes of $Y'$ are not dominated.

\item Analogous to the above point, the nodes of $C''$ are not dominated by any other node.

\item The nodes of $C'$ are not dominated by nodes outside of $C'$ since $C'$ is a clique and only nodes of $C'$ have neighbors to all nodes of $C'$ since $V$ does not contain the all $0$s vector. However, two nodes in $C'$ also cannot dominate each other because of the matching to $C''$ and hence no node of $C'$ can be dominated.

\item Consider a node $x\in X$.  As $x$ has neighbors in all the sets $X',Y,C'$, any node that dominates it must have neighbors in each of these sets. Only the nodes of $X\cup Y$ have this property, and we have already established that $x$ cannot be dominated by another node of $X$. Thus the only possible nodes to be dominated are the nodes of $X$ and they can only be dominated by nodes of $Y$.
\end{enumerate}

Hence, $G$ has a dominated vertex iff there is some $x\in X$ dominated by some $y\in Y$ which holds iff for every $c\in C'$ for which $x$ has an edge to, $y$ also has an edge to. This is if and only if the vector of $U$ corresponding to $x$ is orthogonal to the vector of $V$ corresponding to $y$.
\end{proof}

In the above construction, the dominated vertex can only be dominated by one of its neighbors. This also means that dominated vertex can be reduced to the problem $\exists x\in V, \exists y\in N(x) \forall z\in N(x), (y,z)\in E$.

This latter problem can be reduced back to OV on $2n$ dimensions pretty easily: We are given $\bG=(V,E)$ in which we want to solve $\exists x\in V, \exists y\in N(v) \forall z\in N(v), (y,z)\in E$.  Let $A$ be the adjacency matrix and let $C$ be its complement. Let $I$ be the $n\times n$ identity matrix. Let $X$ be the $n\times 2n$ matrix obtained by concatenating $A$ and $C+I$ (with $C+I$ to the right of $A$).  Now let $Y$ be the $2n\times n$ matrix obtained by concatenating $C$ on top of $I$. Note that $(XY)[i,j]=0$ if and only if
\begin{enumerate}[(1)]
    \item for every $k$ such that $(i,k)\in E$, either $k=j$ or $(k,j)\in E$ (i.e. if $A[i,k]=1$, then $C[k,j]=0$), and
    \item $(C+I)[i,j]=0$, so $i\ne j$ and $(i,j)\in E$.
\end{enumerate}

\begin{theorem}
The following three problems are equivalent (using $\Oh(n^2)$ time reductions):
\begin{itemize}
\item Dominated Vertex in a $\Theta(n)$-node graph,
\item OV on $\Theta(n)$ vectors in $\Theta(n)$ dimensions,
\item The problem of deciding $\exists x\in V, \exists y\in N(v) \forall z\in N(v), (y,z)\in E$ in a $\Theta(n)$-node graph. \lipicsEnd
\end{itemize}
\end{theorem}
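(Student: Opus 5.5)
The statement is a chain of equivalences, so I would prove it by composing the three $\Oh(n^2)$-time reductions already given in this appendix. Each reduction must preserve the number of nodes, vectors, and dimensions up to constant factors.

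\emph{Dominated Vertex reduces to OV.} The first lemma of this section takes an $n$-node graph and produces two sets of $n$ vectors in $2n$ dimensions, namely the rows of $X=[A\mid I]$ and the columns of $Y$ obtained by stacking $C$ on top of $I$. There is an orthogonal pair exactly when $G$ has a dominated vertex. This is OV on $\Theta(n)$ vectors in $\Theta(n)$ dimensions, and the construction clearly runs in $\Oh(n^2)$ time. If the single-set formulation of OV is wanted, the remark after that lemma converts it at the cost of two extra coordinates.

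\emph{OV reduces to Dominated Vertex and to the quantified problem.} \cref{lem:OVdomvertex} turns two sets of $n$ vectors in $n$ dimensions into a $7n$-node graph. This graph has a dominated vertex iff the instance has an orthogonal pair, and it is built in $\Oh(n^2)$ time; the all-zero vectors were already handled there as a degenerate yes-case. For the third problem I would reuse the same graph. The case analysis in that proof shows that the only possible dominations are some $x\in X$ dominated by some $y\in Y$, and every such $y$ is adjacent to $x$. Hence ``$x$ is dominated'' is equivalent to the existence of $y\in N(x)$ with $N(x)\subseteq N[y]$, which is precisely $\exists x\,\exists y\in N(x)\,\forall z\in N(x)$ with $z=y$ or $(y,z)\in E$. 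One subtlety needs care here: the case $z=y$ has to be allowed, i.e.\ $N(x)$ is compared with the closed neighborhood $N[y]$. I would read the quantified problem with that convention, matching condition (1) in the final matrix construction. I would also state explicitly that the statement's $N(v)$ is a typo for $N(x)$.

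\emph{The quantified problem reduces to OV.} This is the final matrix construction in the appendix: take $X=[A\mid C+I]$ and $Y$ as $C$ stacked on top of $I$. Then $(XY)[i,j]=0$ iff $j\in N(i)$ and $N(i)\subseteq N[j]$. So an orthogonal pair of a row of $X$ and a column of $Y$ exists iff the quantified statement holds. The instance has $n$ vectors in $2n$ dimensions and is built in $\Oh(n^2)$ time.

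Composing these gives reductions in both directions between every pair of the three problems, each of size $\Theta(n)$ and time $\Oh(n^2)$. The main obstacle is not computational. It is making sure the case analysis of \cref{lem:OVdomvertex} really rules out every domination other than $X$ by $Y$, so that the ``$y\in N(x)$'' restriction comes for free. I would double-check items 3 and 14 of that list, where vertices of $C'$ could a priori dominate something, and confirm that the all-zero-vector preprocessing is consistent for the quantified variant: the two-vertex output graph also satisfies the quantified statement.
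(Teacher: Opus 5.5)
Your proposal is correct and is essentially the paper's own argument. It composes the first appendix lemma (Dominated Vertex to OV) with \cref{lem:OVdomvertex}, which you reuse for the quantified problem because $X\times Y$ is complete and so the dominator $y_j$ lies in $N(x_i)$. It then adds the final construction with $X=[A\;|\;C+I]$ and $Y$ equal to $C$ stacked on $I$, reading the quantifier as $N(x)\subseteq N[y]$ exactly as in condition (1).

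One simplification: the case analysis you flag as the main obstacle is not needed for OV $\to$ quantified problem. The quantified condition always yields a dominated vertex, so \cref{lem:OVdomvertex} gives an orthogonal pair. Conversely, an orthogonal pair $u_i,v_j$ directly makes $x_i$ dominated by its neighbor $y_j$.
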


\end{document}